\newcommand{\papertitle}[1]{\mkbibquote{#1}}
\definecolor{webgreen}{rgb}{0,.5,0}
\definecolor{webblue}{rgb}{0,0,.5}
\newtheorem{construction}{Construction}
\DeclarePairedDelimiter{\set}{\lbrace}{\rbrace}
\DeclarePairedDelimiter{\abs}{\lvert}{\rvert}
\DeclarePairedDelimiter{\norm}{\lVert}{\rVert}
\DeclarePairedDelimiter{\ip}{\langle}{\rangle}
\DeclarePairedDelimiter{\floor}{\lfloor}{\rfloor}
\DeclarePairedDelimiter{\ceil}{\lceil}{\rceil}
\DeclarePairedDelimiter{\nint}{\lfloor}{\rceil}
\newcommand{\Z}{\mathbb{Z}}
\renewcommand{\vec}[1]{\boldsymbol{#1}}  
\newcommand{\ket}[1]{| #1 \rangle}
\newcommand{\braket}[2]{\langle #1 | #2 \rangle}
\newcommand{\tr}{\mathrm{Tr}}
\newcommand{\tp}{^{\mathsf{T}}}
\renewcommand{\rho}{\varrho}
\newcommand{\QFT}{\ensuremath{\mathsf{QFT}}\xspace}
\newcommand{\PKE}{\ensuremath{\mathsf{PKE}}\xspace}
\newcommand{\PRFscheme}{\ensuremath{\mathsf{PRFscheme}}\xspace}
\newcommand{\PRPscheme}{\ensuremath{\mathsf{PRPscheme}}\xspace}
\newcommand{\LWESKES}{\ensuremath{\mathsf{LWE\mbox{-}SKE}}\xspace}
\newcommand{\LWEPKES}{\ensuremath{\mathsf{LWE\mbox{-}PKE}}\xspace}
\newcommand{\FrodoKEM}{\ensuremath{\mathsf{FrodoKEM}}\xspace}
\newcommand{\FrodoPKE}{\ensuremath{\mathsf{FrodoPKE}}\xspace}
\newcommand{\expref}[2]{\texorpdfstring{\hyperref[#2]{#1~\ref{#2}}}{#1~\ref{#2}}}
\newcommand{\A}{\ensuremath{\mathcal{A}}\xspace}
\newcommand{\negl}{\ensuremath{\operatorname{negl}}\xspace}
\newcommand{\from}{\ensuremath{\leftarrow}}
\newcommand{\bit}{\{0,1\}}
\newcommand{\pk}{\ensuremath{pk}\xspace}
\newcommand{\sk}{\ensuremath{sk}\xspace}
\newcommand{\rand}{\raisebox{-1pt}{\ensuremath{\,\xleftarrow{\raisebox{-1pt}{$\scriptscriptstyle\$$}}\,}}}
\newcommand{\randchi}{\raisebox{-1pt}{\ensuremath{\,\xleftarrow{\raisebox{-1pt}{$\scriptscriptstyle\chi$}}\,}}}
\DeclareMathOperator{\E}{\mathbb{E}}
\newcommand{\KeyGen}{\ensuremath{\mathsf{KeyGen}}\xspace}
\newcommand{\Enc}{\ensuremath{\mathsf{Enc}}\xspace}
\newcommand{\Dec}{\ensuremath{\mathsf{Dec}}\xspace}
\newcommand{\poly}{\operatorname{poly}}
\newcommand{\algo}{\mathcal}
\newcommand{\inrand}{\raisebox{-1pt}{\ensuremath{\,\xleftarrow{\raisebox{-1pt}{$\scriptscriptstyle\$$}}\,}}}
\newcommand{\PRF}{\ensuremath{\mathsf{PRF}}\xspace}
\newcommand{\PRP}{\ensuremath{\mathsf{PRP}}\xspace}
\newcommand{\QPRF}{\ensuremath{\mathsf{QPRF}}\xspace}
\newcommand{\QPRP}{\ensuremath{\mathsf{QPRP}}\xspace}
\newcommand{\QPT}{\ensuremath{\mathsf{QPT}}\xspace}
\newcommand{\Round}{\ensuremath{\mathsf{LRF}}\xspace}
\newcommand{\RA}{\ensuremath{\mathsf{RA}}\xspace} 
\newcommand{\LWE}{\ensuremath{\mathsf{LWE}}\xspace}
\newcommand{\RingLWE}{\ensuremath{\mathsf{Ring\mbox{-}LWE}}\xspace}
\newcommand{\QRAC}{\ensuremath{\mathsf{QRAC}}\xspace}
\newcommand{\PPT}{\ensuremath{\mathsf{PPT}}\xspace}
\newcommand{\CPA}{\ensuremath{\mathsf{CPA}}\xspace}
\newcommand{\CCa}{\ensuremath{\mathsf{CCA}}\xspace}
\newcommand{\CCA}{\ensuremath{\mathsf{CCA1}}\xspace}
\newcommand{\CCAA}{\ensuremath{\mathsf{CCA2}}\xspace}
\newcommand{\INDCPA}{\ensuremath{\mathsf{IND\mbox{-}CPA}}\xspace}
\newcommand{\INDCCAA}{\ensuremath{\mathsf{IND\mbox{-}CCA2}}\xspace}
\newcommand{\INDQCPA}{\ensuremath{\mathsf{IND\mbox{-}QCPA}}\xspace}
\newcommand{\INDQCCA}{\ensuremath{\mathsf{IND\mbox{-}QCCA1}}\xspace}
\newcommand{\INDQCCAA}{\ensuremath{\mathsf{IND\mbox{-}QCCA2}}\xspace}
\newcommand{\QCPA}{\ensuremath{\mathsf{QCPA}}\xspace}
\newcommand{\QCCA}{\ensuremath{\mathsf{QCCA1}}\xspace}
\newcommand{\QCCAA}{\ensuremath{\mathsf{QCCA2}}\xspace}
\newcommand{\Samp}{\ensuremath{\mathsf{Samp}}\xspace}
\newcommand{\SEMQCPA}{\ensuremath{\mathsf{SEM\mbox{-}QCPA}}\xspace}
\newcommand{\SEMQCCA}{\ensuremath{\mathsf{SEM\mbox{-}QCCA1}}\xspace}
\newcommand{\SEMQCCAA}{\ensuremath{\mathsf{SEM\mbox{-}QCCA2}}\xspace}
\newcommand{\IndGame}{\ensuremath{\mathsf{IndGame}}\xspace}
\newcommand{\IndGameR}{\ensuremath{\mathsf{IndGame}'}\xspace}
\newcommand{\SemGame}{\ensuremath{\mathsf{SemGame}}\xspace}
\newcommand{\real}{\ensuremath{\mathsf{real}}\xspace}
\newcommand{\simul}{\ensuremath{\mathsf{sim}}\xspace}
\newcommand{\phase}[1]{\emph{#1:}}
\begin{document}

\vspace{-12pt}
\title{On Quantum Chosen-Ciphertext Attacks\\ and Learning with Errors}
\pagestyle{plain}
\author{Gorjan Alagic\inst{1} \and Stacey Jeffery\inst{2} \and Maris Ozols\inst{3} \and Alexander Poremba\inst{4}}
\institute{QuICS, University of Maryland, and NIST, Gaithersburg, MD, USA\and QuSoft and CWI, Amsterdam, Netherlands\and QuSoft and University of Amsterdam, Amsterdam, Netherlands\and Computing and Mathematical Sciences, Caltech, Pasadena, CA, USA}

\maketitle

\begin{abstract}
Large-scale quantum computing is a significant threat to classical public-key cryptography. In strong ``quantum access'' security models, numerous symmetric-key cryptosystems are also vulnerable. We consider classical encryption in a model which grants the adversary quantum oracle access to encryption and decryption, but where the latter is restricted to non-adaptive (i.e., pre-challenge) queries only. We define this model formally using appropriate notions of ciphertext indistinguishability and semantic security (which are equivalent by standard arguments) and call it \QCCA  in analogy to the classical \CCA security model. Using a bound on quantum random-access codes, we show that the standard \PRF- and \PRP-based encryption schemes are \QCCA-secure when instantiated with quantum-secure primitives. 

We then revisit standard \INDCPA-secure Learning with Errors (\LWE) encryption and show that leaking just one quantum decryption query (and no other queries or leakage of any kind) allows the adversary to recover the full secret key with constant success probability. In the classical setting, by contrast, recovering the key uses a linear number of decryption queries, and this is optimal. 
The algorithm at the core of our attack is a (large-modulus version of) the well-known Bernstein-Vazirani algorithm.
We emphasize that our results should \textbf{not} be interpreted as a weakness of these cryptosystems in their stated security setting (i.e., post-quantum chosen-plaintext secrecy). Rather, our results mean that, if these cryptosystems are exposed to chosen-ciphertext attacks (e.g., as a result of deployment in an inappropriate real-world setting) then quantum attacks are even more devastating than classical ones.
\end{abstract}

\section{Introduction}

\subsection{Background}

Large-scale quantum computers pose a dramatic threat to classical cryptography. The ability of such devices to run Shor's efficient quantum factoring algorithm (and its variants) would lead to devastation of the currently deployed public-key cryptography infrastructure~\cite{NIST16,Shor94}. This threat has led to significant work on so-called ``post-quantum'' alternatives, where a prominent category is occupied by cryptosystems based on the \emph{Learning with Errors} (\LWE) problem of solving noisy linear equations over $\Z_q$~\cite{Regev05} and its variants~\cite{NIST16,NISTPQC}. 

In addition to motivating significant work on post-quantum cryptosystems, the threat of quantum computers has also spurred general research on secure classical cryptography in the presence of quantum adversaries. One area in particular explores strong security models where a quantum adversary gains precise quantum control over portions of a classical cryptosystem. In such models, a number of basic symmetric-key primitives can be broken by simple quantum attacks based on Simon's algorithm~\cite{KM10,KM12,KLLN16,SS16,Simon97}. It is unclear if the assumption behind these models is plausible for typical physical implementations of symmetric-key cryptography. However, attacks which involve quantumly querying a classical function are always available in scenarios where the adversary has access to a circuit for the relevant function. This is the case for hashing, public-key encryption, and circuit obfuscation. Moreover, understanding this model is crucial for gauging the degree to which any physical device involved in cryptography must be resistant to reverse engineering or forced quantum behavior (consider, e.g., the so-called ``frozen smart card'' example~\cite{GHS16}). For instance, one may reasonably ask: \emph{what happens to the security of a classical cryptosystem when the device leaks only a single quantum query to the adversary?}

When deciding which functions the adversary might have (quantum) access to, it is worth recalling the classical setting. For classical symmetric-key encryption, a standard approach considers the security of cryptosystems when exposed to so-called chosen-plaintext attacks (\CPA). This notion encompasses all attacks in which an adversary attempts to defeat security (by, e.g., distinguishing ciphertexts or extracting key information) using oracle access to the function which encrypts plaintexts with the secret key. This approach has been highly successful in developing cryptosystems secure against a wide range of realistic real-world attacks. An analogous class, the so-called chosen-ciphertext attacks (\CCa), are attacks in which the adversary can make use of oracle access to decryption.
For example, a well-known attack due to Bleichenbacher \cite{Bleichenbacher98chosenciphertext}
only requires access to an oracle that decides if the input ciphertext is encrypted according to a particular RSA standard. We will consider analogues of both \CPA and \CCa attacks, in which the relevant functions are quantumly accessible to the adversary.

Prior works have formalized the quantum-accessible model for classical cryptography in several settings, including unforgeable message authentication codes and digital signatures~\cite{BZ13,BZ13a}, encryption secure against quantum chosen-plaintext attacks (\QCPA)~\cite{BJ15,GHS16}, and encryption secure against \emph{adaptive} quantum chosen-ciphertext attacks (\QCCAA)~\cite{BZ13}. 

\subsection{Our Contributions}

\paragraph{The model.}

In this work, we consider a quantum-secure model of encryption called \QCCA. This model grants \emph{non-adaptive} access to the decryption oracle, and is thus intermediate between \QCPA and \QCCAA. Studying weaker and intermediate models is a standard and useful practice in theoretical cryptography. In fact, \CPA and \textsf{CCA2} are intermediate models themselves, since they are both strictly weaker than authenticated encryption. Our particular intermediate model is naturally motivated: it is sufficent for a new and interesting quantum attack on \LWE encryption.

As is typical, the challenge in \QCCA can be semantic, or take the form of an indistinguishability test. 
This leads to natural security notions for symmetric-key encryption, which we call \INDQCCA and \SEMQCCA, respectively. Following previous works, it is straightforward to define both \INDQCCA and \SEMQCCA formally, and prove that they are equivalent~\cite{BJ15,GHS16,BZ13}.

We then prove \INDQCCA security for two symmetric-key encryption schemes, based on standard assumptions. Specifically, we show that the standard encryption schemes based on quantum-secure pseudorandom functions (\QPRF) and quantum-secure pseudorandom permutations (\QPRP) are both \INDQCCA. We remark that both \QPRF{s} and \QPRP{s} can be constructed from quantum-secure one-way functions~\cite{Zha2012,Zha16}. Our security proofs use a novel technique, in which we control the amount of information that the adversary can extract from the oracles and store in their internal quantum state (prior to the challenge) by means of a certain bound on quantum random-access codes.

\paragraph{A quantum-query attack on \LWE.}

We then revisit the aforementioned question: what happens to a post-quantum cryptosystem if it leaks a single quantum query? Our main result is that standard \INDCPA-secure \LWE-based encryption schemes can be completely broken using only \emph{a single quantum decryption query} and no other queries or leakage of any kind. In our attack, the adversary recovers the complete secret key with constant success probability. In standard bit-by-bit \LWE encryption, a single classical decryption query can yield at most one bit of the secret key; the classical analogue of our attack thus requires $n \log q$ queries. The attack is essentially an application of a modulo-$q$ variant of the Bernstein-Vazirani algorithm~\cite{BV97}. Our new analysis shows that this algorithm correctly recovers the key with constant success probability, despite the decryption function only returning an inner product which is rounded to one of two values. We show that the attack applies to four variants of standard \INDCPA-secure \LWE-based encryption: the symmetric-key and public-key systems originally described by Regev~\cite{Regev05}, the \FrodoPKE scheme\footnote{\FrodoPKE is an \INDCPA-secure building block in the \INDCCAA-secure post-quantum cryptosystem ``FrodoKEM''~\cite{FrodoKEM}. Our results do not affect the post-quantum security of Frodo and do not contradict the \CCAA security of FrodoKEM.}~\cite{LP11,FrodoKEM}, and standard \RingLWE~\cite{LPR-toolkit-2013,LPR13}.

\paragraph{Important caveats.}

Our results challenge the idea that \LWE is unconditionally ``just as secure'' quantumly as it is classically. Nonetheless, the reader is cautioned to interpret our work carefully. Our results do \emph{not} indicate a weakness in \LWE (or any \LWE-based cryptosystem) in the standard post-quantum security model. Since it is widely believed that quantum-algorithmic attacks will need to be launched over purely classical channels, post-quantum security does not allow for quantum queries to encryption or decryption oracles. Moreover, while our attack does offer a dramatic quantum speedup (i.e., one query vs. linear queries), the classical attack is already efficient. The schemes we attack are already insecure in the classical chosen-ciphertext setting, but can be modified to achieve chosen-ciphertext security~\cite{FO99}. 





\paragraph{Related work.}

We remark that Grilo, Kerenidis and Zijlstra recently observed that a version of \LWE with so-called ``quantum samples'' can be solved efficiently (as a learning problem) using Bernstein-Vazirani~\cite{GK17}. Our result, by contrast, demonstrates an actual cryptographic attack on standard cryptosystems based on \LWE, in a plausible security setting. Moreover, in terms of solving the learning problem, our analysis shows that constant success probability is achievable with only a single query, whereas~\cite{GK17} require a number of queries which is at least linear in the modulus $q$. In particular, our cryptographic attack succeeds with a single query even for superpolynomial modulus.

\subsection{Technical summary of results}


\subsubsection{Security model and basic definitions.}

First, we set down the basic \QCCA security model, adapting the ideas of~\cite{BZ13a,GHS16}. Recall that an encryption scheme is a triple $\Pi = (\KeyGen, \Enc, \Dec)$ of algorithms (key generation, encryption, and decryption, respectively) satisfying $\Dec_k(\Enc_k(m)) = m$ for any key $k \from \KeyGen$ and message $m$.
In what follows, all oracles are quantum, meaning that a function $f$ is accessed via the unitary operator $\ket{x}\ket{y} \mapsto \ket{x}\ket{y \oplus f(x)}$. We define ciphertext indistinguishability and semantic security as follows.

\begin{definition}[informal] $\Pi$ is \INDQCCA if no quantum polynomial-time algorithm (\QPT) $\A$ can succeed at the following experiment with probability better than $1/2 + \negl(n)$.
\begin{enumerate}
\item A key $k \from \KeyGen(1^n)$ and a uniformly random bit $b \inrand \bit$ are generated;
$\algo A$ gets access to oracles $\Enc_k$ and $\Dec_k$, and outputs $(m_0, m_1)$;
\item $\algo A$ receives $\Enc_k(m_b)$ and gets access to an oracle for $\Enc_k$ only, and outputs a bit $b'$; $\algo A$ wins if $b = b'$.
\end{enumerate}
\end{definition}

\begin{definition}[informal] Consider the following game with a \QPT \A.
\begin{enumerate}
\item A key $k \from \KeyGen(1^n)$ is generated; $\algo A$ gets access to oracles $\Enc_k$, $\Dec_k$ and outputs circuits  $(\Samp, h, f)$;
\item Sample $m \from \Samp$; $\algo A$ receives $h(m)$, $\Enc_k(m)$, and access to an oracle for $\Enc_k$ only, and outputs a string $s$; $\algo A$ wins if $s = f(m)$.
\end{enumerate}
Then $\Pi$ is \SEMQCCA if for every \QPT \A there exists a \QPT $\algo S$ with the same winning probability but which does not get $\Enc_k(m)$ in step 2.
\end{definition}

\begin{theorem}
A classical symmetric-key encryption scheme is \INDQCCA if and only if it is \SEMQCCA.
\end{theorem}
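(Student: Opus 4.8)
The plan is to follow the standard template for proving equivalence of indistinguishability-based and semantic security, adapted to the quantum-oracle setting as in \cite{BJ15,GHS16,BZ13}. The two directions are proved separately. For the direction $\SEMQCCA \Rightarrow \INDQCCA$, I would argue by contraposition: given a \QPT adversary $\A$ breaking \INDQCCA, I construct a \QPT adversary $\A'$ against \SEMQCCA. The adversary $\A'$ simulates $\A$ up to the point where $\A$ outputs $(m_0,m_1)$, then outputs a sampler $\Samp$ that picks a uniform bit $c$ and returns $m_c$, the function $h$ that is constant (leaks nothing), and the target function $f = \id$ (or, more precisely, $f(m_c) = c$, the bit selecting the message). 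Since $f(m)$ has one bit of entropy that is information-theoretically absent from $h(m)$, any simulator $\Sim$ lacking the challenge ciphertext guesses $c$ with probability exactly $1/2$; whereas $\A'$, using $\Enc_k(m_c)$, runs $\A$'s second phase to obtain $b'$ and outputs $b'$, winning with exactly $\A$'s distinguishing advantage bounded away from $1/2$. This separates the real and simulated winning probabilities, contradicting \SEMQCCA. One must check that all of $\A$'s pre-challenge $\Enc_k$ and $\Dec_k$ queries and post-challenge $\Enc_k$ queries are faithfully relayed, which is immediate since $\A'$ has access to exactly the same oracles in the same phases.

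For the converse $\INDQCCA \Rightarrow \SEMQCCA$, the core idea is the usual hybrid: given a \QPT adversary $\A$ in the \SEMQCCA game, the simulator $\Sim$ runs $\A$ but, instead of feeding it $\Enc_k(m)$ for the true sampled $m$, feeds it $\Enc_k(m')$ for an independently re-sampled $m' \from \Samp$. I would show that $\Sim$'s winning probability differs from $\A$'s by a negligible amount using an \INDQCCA distinguisher $\badver$: $\badver$ obtains $(\Samp, h, f)$ from $\A$ (relaying $\A$'s pre-challenge oracle queries), samples $m_0, m_1 \from \Samp$ independently, submits them as the challenge pair, receives $\Enc_k(m_b)$, and then completes $\A$'s run using $m_0$ for the side information $h(m_0)$ and for checking whether $\A$'s output equals $f(m_0)$. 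If $b = 0$ this is the real \SEMQCCA game; if $b = 1$ it is the $\Sim$ game; so $\badver$'s advantage equals (half) the gap between the two winning probabilities, which is therefore negligible. Since $\Sim$ by construction does not use $\Enc_k(m)$ with the true $m$, it satisfies the \SEMQCCA definition. Again the oracle bookkeeping is routine: $\badver$ forwards encryption and decryption queries in phase 1 and encryption queries in phase 2, exactly matching the oracle access $\A$ expects.

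The main obstacle — and the place where care is genuinely needed — is the oracle and timing bookkeeping in the quantum setting, specifically ensuring that the reduction's simulation of $\A$ is perfect despite $\A$ making \emph{superposition} queries. Because the decryption oracle is available only non-adaptively (pre-challenge), the reductions must submit $(m_0,m_1)$ (resp.\ the challenge pair) at exactly the moment $\A$ does, so that no $\Dec_k$ query is needed after the challenge; this is automatic here but should be stated. A subtler point in the $\INDQCCA \Rightarrow \SEMQCCA$ direction is that $\badver$ must evaluate the classical circuits $h$ and $f$ on the classical string $m_0$ it sampled itself (not on a superposition), so there is no issue of measuring $\A$'s state; the only quantum queries are those $\A$ itself makes to $\Enc_k$ and $\Dec_k$, which $\badver$ forwards verbatim to its own oracles. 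I would also remark that the standard caveat about samplers producing messages of a fixed length (so that $m_0, m_1$ are a valid challenge pair) carries over unchanged from the classical proof. Modulo these standard checks, the argument is a direct transcription of the classical equivalence, and I would present it at that level of detail, citing \cite{BJ15,GHS16,BZ13} for the analogous treatment in the \QCPA and \QCCAA models.
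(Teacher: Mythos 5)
Your proposal is correct and is essentially the paper's own argument: the paper simply asserts that the classical equivalence proof (Goldreich-style, as already adapted to the \QCPA setting in \cite{GHS16}) carries over once both the adversary and the simulator are given pre-challenge $\Dec_k$ access, and your two reductions are exactly that classical template with the quantum-oracle bookkeeping made explicit. The only detail worth a footnote is the degenerate case $m_0=m_1$ in the $\SEMQCCA\Rightarrow\INDQCCA$ direction (handled by taking $f=\id$ or by conditioning, as you implicitly do), which is the same standard technicality as in the classical proof.
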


\subsubsection{Secure constructions.}

Next, we show that standard pseudorandom-function-based encryption is \QCCA-secure, provided that the underlying \PRF is quantum-secure (i.e., is a \QPRF.) A \QPRF can be constructed from any quantum-secure one-way function, or directly from the \LWE assumption~\cite{Zha2012}.
Given a \PRF $f=\{f_k\}_k$, define $\PRFscheme[f]$ to be the scheme which encrypts a plaintext $m$ using randomness $r$ via $\Enc_k(m; r) = (r, f_k(r) \oplus m)$ and decrypts in the obvious way.

\begin{theorem}\label{thm:intro_prf_scheme}
If $f$ is a \QPRF, then $\PRFscheme[f]$ is $\INDQCCA$-secure.
\end{theorem}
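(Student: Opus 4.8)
The plan is a three-step game-hopping argument whose only non-routine ingredient is a bound on quantum random-access codes, used exactly as flagged in the introduction. \emph{Step 1 (de-randomize the \PRF).} I would first replace the \QPRF instance $f_k$ inside $\PRFscheme[f]$ by a truly random function $g\colon\{0,1\}^{\ell}\to\{0,1\}^{n}$ of the same shape. This is justified by a \QPT reduction $\redu$ to \QPRF security: given oracle access to $h$ (which is $f_k$ or a random $g$), $\redu$ runs the entire \INDQCCA experiment against $\A$, implementing each coherent call to $\Enc_k(m;r)=(r,\,h(r)\oplus m)$ and to $\Dec_k(r,c)=c\oplus h(r)$ with $O(1)$ queries to $h$ (query, XOR into the answer register, uncompute), building the challenge ciphertext $(r^{*},h(r^{*})\oplus m_b)$ with one more query, and outputting $[b=b']$. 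Since $\A$ is \QPT, $\redu$ is too, so $\A$'s winning probability moves by at most $\negl(n)$ under this swap.

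\emph{Step 2 (the challenge is independent of $b$).} In the random-function world I would argue that the challenge ciphertext $(r^{*},y^{*})$, $y^{*}=g(r^{*})\oplus m_b$, carries only negligible information about $b$. Since any adversary distinguishing $b$ must in particular distinguish the two candidate values $y^{*}\oplus m_0$ and $y^{*}\oplus m_1$ for $g(r^{*})$, it suffices to show that $g(r^{*})$ is within negligible statistical distance of uniform and independent, given (i) $\A$'s \emph{post}-challenge view and (ii) $\A$'s \emph{pre}-challenge internal state. For (i): the post-challenge oracle is $\Enc_k$ only, which draws fresh internal randomness on each call, so it queries $g$ only at points $\neq r^{*}$ except with collision probability $\poly(n)\cdot 2^{-\ell}=\negl(n)$, and the value $g(r^{*})$ of a uniformly random function is independent of its values elsewhere. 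For (ii): this is where the quantum random-access code bound enters — the truth table of $g$ consists of $n\cdot 2^{\ell}$ independent uniform bits, $\A$'s pre-challenge state occupies only $\poly(n)$ qubits, and $r^{*}$ is drawn uniformly and independently afterwards, so the \QRAC lower bound (equivalently, a Holevo-type bound on how much a $\poly(n)$-qubit state can say about a uniformly random block of an exponentially long random string) forces the state to give, on average over $r^{*}$, only negligible advantage in predicting or distinguishing $g(r^{*})$.

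\emph{Step 3 (conclude).} Combining (i) and (ii), $y^{*}$ is within negligible statistical distance of a string that is uniform and independent of $b$, so every \QPT $\A$ wins with probability $\tfrac12+\negl(n)$, which is the statement.

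The hard part is Step 2, and specifically the interface between the two information sources: one must isolate what $\A$ can know about $g(r^{*})$ into a piece stored in its size-bounded carried-over state (bounded information-theoretically by the \QRAC bound, \emph{independently} of how many or how cleverly $\A$'s pre-challenge queries were made) and a piece obtained from post-challenge encryption access (bounded by a randomness-collision argument), and check that no quantum subtlety — e.g.\ $\A$'s state being arbitrarily entangled with $g$, or post-challenge encryption queries being in superposition — spoils this split. The reductions, the collision bound, and the final symmetry argument are all routine.
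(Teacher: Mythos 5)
Your skeleton matches the paper's: hop to a truly random function via a \QPRF reduction (your Step 1 is essentially the paper's \textsc{Game 0}$\to$\textsc{Game 1} hop, including simulating $\Dec$ with a quantum query to the oracle and $\Enc$ with a classical one), then bound the advantage in the random-function world by the limited capacity of the adversary's carried-over $\poly(n)$-qubit state, using the \QRAC bound. You also correctly flag where the difficulty lives. The divergence is in how Step 2 is executed, and that is where your proposal has a real gap.

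The paper does not prove (and does not need) your intermediate claim that $g(r^*)$ is within negligible \emph{statistical distance} of uniform given the pre-challenge state. Instead it builds an explicit \QRAC directly from the whole adversary: the encoded bits $b_1,\dots,b_{2^n}$ are planted into the random function itself via $\tilde f(\vec r)=\vec y_r\oplus b_r\vec s$, with $\vec s,\vec y_1,\dots,\vec y_{2^n}$ and the post-challenge query randomness $\vec r_1,\dots,\vec r_\ell$ all supplied as shared randomness, and the $\ell$ values $\tilde f(\vec r_i)$ shipped as classical strings alongside $\ket\psi$ and $m^*$ in the encoding. With this choice, decoding index $j$ by running $\algo A_2$ on challenge $(j,m^*\oplus\vec y_j)$ recovers $b_j$ with probability exactly $p$, so the adversary's advantage \emph{is} the average bias of a code mapping $2^n$ bits into dimension $2^{\poly(n)}$, and Lemma~\ref{lemma:qrac} applies verbatim. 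Your version needs two things the cited lemma does not give: (a) the lemma bounds the average bias of recovering \emph{one designated bit} of the encoded string, whereas you invoke it to conclude near-uniformity of the full $n$-bit block $g(r^*)$ in trace distance (this is provable, but by a different argument --- Holevo plus superadditivity of mutual information over the independent values of $g$ plus Pinsker --- not by the lemma as stated); and (b) a separate collision argument for the post-challenge queries, which the paper avoids entirely by folding those answers into the code. So your route can be completed, but the decisive step --- the reduction that converts the $b$-distinguishing advantage into a bit-recovery bias, which is the actual content of the paper's proof --- is precisely the part you have deferred, and the substitute claim you state is not implied by the lemma you cite.
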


We also analyze a standard permutation-based scheme. Quantum-secure \PRP{s} (i.e., \QPRP{s}) can be obtained from quantum-secure one-way functions \cite{Zha16}. Given a \PRP $P=\{P_k\}_k$, define $\PRPscheme[P]$ to be the scheme that encrypts a plaintext $m$ using randomness $r$ via $\Enc_k(m; r) = P_k(m||r)$, where $||$ denotes concatenation; to decrypt, one applies $P_k^{-1}$ and discards the randomness bits. 

\begin{theorem}\label{thm:intro_prp_scheme}
If $P$ is a \QPRP, then $\PRPscheme[P]$ is $\INDQCCA$-secure.
\end{theorem}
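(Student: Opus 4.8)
I would prove Theorem~\ref{thm:intro_prp_scheme} via a sequence of hybrid games, exactly as in the $\PRF$-based case (Theorem~\ref{thm:intro_prf_scheme}), reducing to the $\QPRP$ assumption and then to an information-theoretic core lemma. Concretely: let $\A$ be a $\QPT$ adversary against $\INDQCCA$ security of $\PRPscheme[P]$. In the first hop, replace the keyed permutation $P_k$ by a truly random permutation $\pi$ on $\{0,1\}^{\ell+r}$ (where $\ell=|m|$, $r$ is the randomness length); the oracles $\Enc$ and $\Dec$ become $\ket{x}\ket{y}\mapsto\ket{x}\ket{y\oplus\pi(\cdot)}$ and its inverse, accessed quantumly. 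By the $\QPRP$ assumption (in the sense of Zhandry~\cite{Zha16}, which allows quantum access to both the permutation and its inverse), this hop changes $\A$'s advantage by at most $\negl(n)$. In the second hop, since $\A$'s $\Dec$ queries are all non-adaptive (pre-challenge) and the challenge ciphertext $c^* = \pi(m_b\|r^*)$ uses a fresh uniform $r^*$, I argue that $\A$'s post-challenge view (it only has $\Enc$ access after the challenge) carries essentially no usable information about $b$: the challenge is $\pi$ evaluated at a point that, conditioned on everything $\A$ saw, is uniformly random in a huge set, so $c^*$ is (close to) a uniformly random string independent of $b$.

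**Where the real work is.** The subtle part — and the reason the paper flags a ``novel technique'' — is controlling the pre-challenge phase. Before committing to $(m_0,m_1)$, $\A$ makes quantum queries to $\pi$ and $\pi^{-1}$ and stores a bounded-size quantum state $\rho_\A$. After the challenge, $\A$ must distinguish $\pi(m_0\|r^*)$ from $\pi(m_1\|r^*)$ given only $\rho_\A$ and further $\Enc$ (i.e.\ $\pi$) access. The danger is that $\A$ could have ``pre-loaded'' its state with the values $\pi(m_0\|r)$ and $\pi(m_1\|r)$ for many candidate $r$, so that when it learns $r^*$ it recognizes the challenge. The key step is to bound, via the quantum random-access code (QRAC) argument the paper advertises, how much of the function table of $\pi$ the bounded state $\rho_\A$ can usefully encode: a $q$-query adversary (hence poly-size state) cannot act as a random-access code for more than $\poly(n)$ of the exponentially many relevant entries, so the probability that $r^*$ happens to land on a ``remembered'' slot is negligible. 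I expect this QRAC-style counting — made to interact correctly with the quantum $\pi^{-1}$ oracle and with the fact that $\Enc$ access persists after the challenge — to be the main obstacle; the permutation structure (rather than a function) is a mild extra complication handled by the standard switching lemma bounding the collision probability $\Pr[\pi(m_0\|r^*)=\text{something queried}]$ and by noting $\pi$ and a random function are indistinguishable within $\negl(n)$ for poly-query adversaries.

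**Finishing up.** Once the core lemma establishes that, in the idealized game, $\Pr[b'=b] \le 1/2 + \negl(n)$, I collect the hybrid losses: $\Adv^{\INDQCCA}_{\PRPscheme[P]}(\A) \le \Adv^{\QPRP}_{P}(\B) + \negl(n) \le \negl(n)$, where $\B$ is the straightforward reduction that runs $\A$ and answers oracle queries using its own $\QPRP$ challenge oracle (and its inverse). Decryption queries are answered with $P_k^{-1}$ followed by discarding the randomness register — this is a well-defined quantum map since $P_k$ is a permutation, so no special care is needed beyond uncomputation of intermediate registers. I would remark that the proof structure is essentially identical to that of Theorem~\ref{thm:intro_prf_scheme}, with the only genuinely new ingredient being the $\QPRP/$random-permutation switch and the observation that the QRAC bound is insensitive to whether the idealized primitive is a random function or a random permutation.
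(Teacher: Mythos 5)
Your proposal follows essentially the same route as the paper: a first hybrid replacing $P_k$ by a truly random permutation via the \QPRP assumption (with quantum access to both the permutation and its inverse), followed by an information-theoretic argument that converts any adversary for the idealized game into a quantum random access code encoding $2^n$ bits into a $2^{\poly(n)}$-dimensional state, whose average bias is then bounded by Lemma~\ref{lemma:qrac}. The only remark worth making is that the paper needs neither your ``remembered slot'' counting nor a PRP/PRF switching step: it builds the \QRAC directly over a shared random permutation (encoding bit $b_r$ by XOR-ing a shared shift $\vec s$ into the plaintext half of block $r$, and shipping the pre-sampled answers to $\algo A_2$'s post-challenge encryption queries inside the code state), so that the adversary's winning probability is exactly the code's average decoding probability.
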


We briefly describe our proof techniques for Theorems \ref{thm:intro_prf_scheme} and \ref{thm:intro_prp_scheme}. In the indistinguishability game, the adversary can use the decryption oracle prior to the challenge to (quantumly) encode information about the relevant pseudorandom function instance (i.e., $f_k$ or $P_k$) in their private, poly-sized quantum memory. From this point of view, establishing security means showing that this encoded information cannot help the adversary compute the value of the relevant function at the particular randomness used in the challenge. To prove this, we use a bound on quantum random access codes (\QRAC).
Informally, a \QRAC is a mapping from $N$-bit strings $x$ to $d$-dimensional quantum states $\rho_x$, such that given $\rho_x$, and any index $j\in [N]$, the bit $x_j$ can be recovered with some probability $p_{x,j} = \frac{1}{2} + \epsilon_{x,j}$. The average bias of such a code is the expected value of $\epsilon_{x,j}$, over uniform $x$ and $j$. A \QRAC with shared randomness further allows the encoding and decoding procedures to both depend on some random variable. 

\begin{lemma}\label{lemma:qrac}
The average bias of a quantum random access code with shared randomness that encodes $N$ bits into a $d$-dimensional quantum state is $O(\sqrt{N^{-1} \log d})$. In particular, if $N = 2^n$ and $d = 2^{\poly(n)}$ the bias is $O(2^{-n/2}\poly(n))$.
\end{lemma}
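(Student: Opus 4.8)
The plan is to prove this by an information-theoretic argument that is essentially Nayak's lower bound for quantum random access codes, and then to extract the average-bias estimate by convexity. First I would fix the shared randomness to an arbitrary value $\lambda$: the shared string is by definition independent of the encoded data, and the bound I will prove is uniform in $\lambda$, so averaging over $\lambda$ at the end loses nothing and it suffices to treat an ordinary ($\lambda$-free) $\QRAC$ $x \mapsto \rho_x$ with $\rho_x$ acting on a $d$-dimensional space. Let $X = X_1\cdots X_N$ be uniformly random, write $X_{<j} := X_1\cdots X_{j-1}$, let $Q$ denote the register holding $\rho_X$, and for each index $j$ let $Y_j$ be the classical bit output by the decoder's measurement on $Q$ when given $j$; by hypothesis $\Pr[Y_j = X_j] = \tfrac12 + \mathbb{E}_X[\epsilon_{X,j}]$, and I set $\bar\epsilon_j := \mathbb{E}_X[\epsilon_{X,j}]$.

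The core is the following chain, valid for every $j$:
\[
1 - H\!\left(\tfrac12 - \bar\epsilon_j\right) \;\le\; I(X_j : Y_j) \;\le\; I(X_j : Q) \;\le\; I(X_j : Q\,X_{<j}) \;=\; I(X_j : Q \mid X_{<j}),
\]
where $H$ denotes the binary entropy. The first inequality is Fano's inequality for the binary channel $X_j \to Y_j$, whose average error probability is $\tfrac12 - \bar\epsilon_j$; the second is the quantum data-processing inequality, since $Y_j$ is obtained by applying a measurement channel to $Q$; the third is monotonicity of mutual information under discarding the side information $X_{<j}$; and the equality uses $I(X_j : X_{<j}) = 0$ (independence of the bits) together with the chain rule. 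Summing over $j$ and applying the chain rule once more gives $\sum_{j} I(X_j : Q \mid X_{<j}) = I(X : Q)$, and the Holevo bound yields $I(X : Q) = S(\rho) - \mathbb{E}_x S(\rho_x) \le S(\rho) \le \log d$, where $\rho = \mathbb{E}_x \rho_x$. Hence $\sum_{j=1}^{N}\bigl(1 - H(\tfrac12 - \bar\epsilon_j)\bigr) \le \log d$.

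To finish, I would use the elementary estimate $1 - H(\tfrac12 - \epsilon) \ge 2\epsilon^2$ for $\epsilon \in [-\tfrac12,\tfrac12]$ (immediate from convexity, $g(0)=g'(0)=0$, $g''\ge 0$ for $g(\epsilon) = 1 - H(\tfrac12-\epsilon) - 2\epsilon^2$), which turns the last bound into $\sum_{j} \bar\epsilon_j^2 \le \tfrac12\log d$. The average bias of the fixed-$\lambda$ code is $\mathbb{E}_{X,j}[\epsilon_{X,j}] = \tfrac1N\sum_{j}\bar\epsilon_j$, and Cauchy--Schwarz gives $\tfrac1N\sum_{j}\bar\epsilon_j \le \sqrt{\tfrac1N\sum_{j}\bar\epsilon_j^2} \le \sqrt{\tfrac{\log d}{2N}} = O(\sqrt{N^{-1}\log d})$. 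Averaging this uniform bound over the shared randomness preserves it, which proves the general statement; the ``in particular'' clause then follows by plugging in $N = 2^n$ and $\log d = \poly(n)$.

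Every step above is standard once the pieces are assembled, so the main obstacle is conceptual rather than computational: one must make sure the information-theoretic bookkeeping remains valid even though the decoder uses a \emph{different}, possibly incompatible, measurement for each index $j$. This is exactly what the chain-rule decomposition buys: each term $I(X_j : Q \mid X_{<j})$ is lower-bounded using only the single measurement associated with index $j$, so the incompatible measurements never have to be reconciled and no simultaneous-measurability assumption is needed. A secondary technical point is to apply Fano to the $X$-averaged error probability $\tfrac12 - \bar\epsilon_j$ rather than pointwise in $x$, which is what makes the final averaging over $x$ and over the shared randomness come out cleanly.
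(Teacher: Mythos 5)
Your proof is correct, but it follows a genuinely different route from the paper's. You run the classical Nayak argument: Fano's inequality on the averaged error of the $j$-th decoding measurement, quantum data processing, the chain rule with $I(X_j:X_{<j})=0$, and the Holevo bound $I(X:Q)\le \log d$, arriving at the $\ell_2$-type inequality $\sum_j \bar\epsilon_j^2 \le \tfrac12 \log d$ and then Cauchy--Schwarz to control the average bias; the reduction to a fixed value of the shared randomness is handled correctly since your bound is uniform in $\lambda$. The paper instead writes the average bias directly as $\tfrac12\,\mathbb{E}_\lambda\mathbb{E}_{x,i}(-1)^{x_i}\operatorname{Tr}(\rho_x^\lambda M_i^\lambda)$, bounds $\operatorname{Tr}(\rho A)\le\lVert A\rVert$, and invokes the noncommutative Khintchine inequality $\mathbb{E}_x\lVert\sum_i(-1)^{x_i}M_i^\lambda\rVert \le c\sqrt{N\log d}$, treating the signs $(-1)^{x_i}$ as Rademacher variables. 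The Khintchine route is shorter and needs no entropic bookkeeping, at the cost of importing a random-matrix inequality; your route uses only textbook information inequalities, correctly sidesteps the incompatibility of the per-index measurements via the chain-rule decomposition, and yields the slightly stronger intermediate statement $\sum_j\bar\epsilon_j^2\le\tfrac12\log d$ (with an explicit constant) before averaging. Both give the claimed $O(\sqrt{N^{-1}\log d})$, and the ``in particular'' clause follows identically in either case.
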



\subsubsection{Key recovery against LWE.}

Our attack on \LWE encryption will make use of a new analysis of the performance of a large-modulus variant of the Bernstein-Vazirani algorithm~\cite{BV97}, in the presence of a certain type of ``rounding'' noise.

\paragraph{Quantum algorithm for linear rounding functions.}

In the simplest case we analyze, the oracle outputs $0$ if the inner product is small, and $1$ otherwise. Specifically, given integers $n \geq 1$ and $q \geq 2$, define a keyed family of (binary) linear rounding functions, $\Round_{\vec k,q}: \Z_q^n \longrightarrow \bit$, with key $\vec k \in \Z_q^n$, as follows:
\begin{equation*}
\Round_{\vec{k},q}(\vec x) :=
    \begin{cases}
      0 & \text{if } |\ip{\vec x, \vec k}| \leq \floor{\frac{q}{4}}, \\
      1 & \text{otherwise}.
    \end{cases}
\end{equation*}
Here $\ip{\cdot, \cdot}$ denotes the inner product modulo $q$. Our main technical contribution is the following.

\begin{theorem}[informal]\label{thm:main}
There exists a quantum algorithm which runs in time $O(n)$, makes one quantum query to $\Round_{\vec{k},q}$ (with $q \geq 2$ and unknown $\vec k \in \Z_q^{n}$), and outputs $\vec k$ with probability $4/\pi^2 - O(1/q)$.
\end{theorem}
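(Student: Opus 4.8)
The plan is to mimic the Bernstein–Vazirani algorithm over $\Z_q$ and analyze the effect of the rounding. First I would set up the standard single-query structure: prepare a uniform superposition $\frac{1}{\sqrt{q^n}}\sum_{\vec x \in \Z_q^n} \ket{\vec x}$ on the input register (an $n$-fold tensor of $q$-dimensional systems, preparable by applying the $\Z_q$ Fourier transform to $\ket{\vec 0}$), and a phase-kickback register initialized so that querying $\Round_{\vec k,q}$ acts as $\ket{\vec x}\mapsto (-1)^{\Round_{\vec k,q}(\vec x)}\ket{\vec x}$. (Concretely, put the one-bit output register in the state $\ket{-}=\frac{1}{\sqrt2}(\ket0-\ket1)$ before the query.) After the single query the input register holds $\frac{1}{\sqrt{q^n}}\sum_{\vec x}(-1)^{\Round_{\vec k,q}(\vec x)}\ket{\vec x}$. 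The algorithm then applies the inverse $\Z_q$-Fourier transform $\QFT_q^{\otimes n}$ and measures in the computational basis, outputting the result $\vec y \in \Z_q^n$; the claim is $\Pr[\vec y = \vec k]\ge 4/\pi^2 - O(1/q)$.

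Next I would compute this probability explicitly. The amplitude on $\ket{\vec y}$ after the inverse QFT is
\begin{equation*}
\alpha_{\vec y} \;=\; \frac{1}{q^n}\sum_{\vec x \in \Z_q^n} (-1)^{\Round_{\vec k,q}(\vec x)}\, \omega_q^{-\ip{\vec x,\vec y}},
\end{equation*}
where $\omega_q = e^{2\pi i/q}$. The key algebraic step is to replace $(-1)^{\Round_{\vec k,q}(\vec x)}$ by a function of the scalar $t := \ip{\vec x,\vec k} \in \Z_q$ only: writing $s(t) = +1$ if $|t|\le \floor{q/4}$ and $-1$ otherwise, we have $(-1)^{\Round_{\vec k,q}(\vec x)} = s(\ip{\vec x,\vec k})$. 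Setting $\vec y = \vec k$, the exponent becomes $\ip{\vec x, \vec k}$ as well, so the double sum over $\vec x$ collapses: grouping $\vec x$ by the value $t = \ip{\vec x,\vec k}$, each nonzero $\vec k$ makes $t$ equidistributed over $\Z_q$ (each value attained exactly $q^{n-1}$ times), hence
\begin{equation*}
\alpha_{\vec k} \;=\; \frac{1}{q}\sum_{t \in \Z_q} s(t)\, \omega_q^{-t}.
\end{equation*}
This is a one-dimensional exponential sum — essentially a discrete Fourier coefficient of a square wave — which I would evaluate in closed form (it is a ratio of sines, or for $q$ divisible by $4$ exactly $\frac{1}{q}\cdot\frac{\sin(\pi \cdot \text{something})}{\sin(\pi/q)}$-type expression). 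Taking $q\to\infty$ this converges to $\frac{1}{\pi}\int_{-1/4}^{1/4} e^{-2\pi i u}\,du$ scaled appropriately; carrying out the arithmetic gives $|\alpha_{\vec k}|^2 \to 4/\pi^2$, with the discretization error bounded by $O(1/q)$ via a standard comparison of the sine terms with their linear approximations. The case $\vec k = \vec 0$ (where $\ip{\vec x,\vec k}=0$ always and the function is constant, so $\vec y = \vec 0$ with certainty) should be noted as trivially covered.

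The main obstacle I anticipate is not the $\vec k \ne \vec 0$ success-amplitude computation — that is a clean geometric-series evaluation — but rather handling the boundary and parity conventions carefully: the precise meaning of $|t|$ for $t\in\Z_q$ (i.e., $\min(t, q-t)$ taking the balanced representative), whether $q$ is even or odd, and the resulting off-by-$O(1/q)$ terms in the sine ratios; getting the constant to come out to exactly $4/\pi^2$ requires tracking these conventions consistently. A secondary point worth a sentence is confirming that the whole algorithm runs in time $O(n)$: the $\Z_q$-Fourier transform on each of the $n$ coordinates is a single-register operation of cost depending only on $q$ (a constant, or at worst $\poly\log q$ which we absorb), the superposition preparation is one more such layer, and the single oracle call is free by assumption — so the total gate count is linear in $n$. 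I would also remark that only a single query is used, which is the whole point relative to the classical $\Theta(n\log q)$ lower bound, and that the algorithm extends verbatim (with the same analysis) to the coordinate-wise rounding functions arising in the actual \LWE decryption oracle, which is what the cryptographic application in the next section needs.
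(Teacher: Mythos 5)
Your algorithm is exactly the paper's (phase kickback into $\ket{-}$, one query, $\QFT_{\Z_q}^{\otimes n}$, measure), and your target quantity $\frac{1}{q}\sum_{t\in\Z_q}s(t)\,\omega_q^{-t}$ is the same one-dimensional character sum the paper ends up evaluating, with the same $2/\pi - O(1/q)$ conclusion. The one step where you diverge is how the $n$-dimensional sum collapses to that $1$-D sum, and that step as you state it is wrong: it is \emph{not} true that ``each nonzero $\vec k$ makes $t=\ip{\vec x,\vec k}$ equidistributed over $\Z_q$.'' The map $\vec x\mapsto\ip{\vec x,\vec k}$ is a group homomorphism whose image is the subgroup generated by $g=\gcd(k_1,\dotsc,k_n,q)$; the fibers all have size $q^{n-1}$ only when $g=1$. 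For composite $q$ this genuinely fails and so does the theorem's conclusion: with $q=8$ and $\vec k=(2,0,\dotsc,0)$ one computes that the algorithm outputs $\vec k$ with probability exactly $1/4 < 4/\pi^2$. The fix is to add the hypothesis $\gcd(k_1,\dotsc,k_n,q)=1$ (equivalently, surjectivity of $\vec x\mapsto\ip{\vec x,\vec k}$); the paper's formal statement (\expref{Theorem}{thm:LWE-dec}) makes the slightly stronger assumption that some entry of $\vec k$ is a unit mod $q$, which the informal statement you were given silently omits.

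Under that hypothesis your route is fine and in fact marginally cleaner and more general than the paper's: the paper reduces to the $1$-D sum by singling out a unit coordinate $k_n$ and summing over $x_n$ for each fixed prefix (using that $x_n\mapsto x_nk_n$ is a bijection of $\Z_q$), whereas you partition $\Z_q^n$ into the equal-sized fibers of the inner-product homomorphism; the two computations are the same thing in different bookkeeping, but the fiber argument only needs $\gcd(\vec k,q)=1$ rather than a unit entry. Your remaining items --- the sine-ratio evaluation with $O(1/q)$ boundary error, the $\vec k=\vec 0$ case, and the $O(n)$ gate count --- all match the paper's treatment. Do make the genericity assumption on $\vec k$ explicit before invoking equidistribution; everything downstream is then correct.
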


We also show that the same algorithm succeeds against more generalized function classes, in which the oracle indicates which ``segment'' of $\Z_q$ the exact inner product belongs to. 

\paragraph{One quantum query against $\LWE$.}

Finally, we revisit our central question of interest: what happens to a post-quantum cryptosystem if it leaks a single quantum query? We show that, in standard \LWE-based schemes, the decryption function can (with some simple modifications) be viewed as a special case of a linear rounding function, as above. In standard symmetric-key or public-key \LWE, for instance, we decrypt a ciphertext $(\vec a,c) \in \Z_q^{n+1}$ with key $\vec k$ by outputting $0$ if $|c - \langle \vec a, \vec k \rangle| \leq \left \lfloor{\frac{q}{4}}\right \rfloor$ and $1$ otherwise. In standard \RingLWE, we decrypt a ciphertext $( u, v)$ with key $k$ (here $u, v, k$ are polynomials in $\Z_q[x]/\ip{x^n+1}$) by outputting $0$ if the constant coefficient of $ v - k \cdot u$ is small, and $1$ otherwise.

Each of these schemes is secure against adversaries with classical encryption oracle access, under the \LWE assumption. If adversaries also gain classical decryption access, then it's not hard to see that a linear number of queries is necessary and sufficient to recover the private key. Our main result is that, by contrast, only a \emph{single} quantum decryption query is required to achieve this total break. Indeed, in all three constructions described above, one can use the decryption oracle to build an associated oracle for a linear rounding function which hides the secret key. The following can then be shown using \expref{Theorem}{thm:main}.
\begin{theorem}[informal]
Let $\Pi$ be standard \LWE or standard \RingLWE encryption (either symmetric-key, or public-key.) Let $n$ be the security parameter. Then there is an efficient quantum algorithm that runs in time $O(n)$, uses one quantum query to the decryption function $\Dec_{\vec k}$ of $\Pi$,and outputs the secret key with constant probability.
\end{theorem}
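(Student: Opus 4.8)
The plan is to obtain this statement as a near-corollary of \expref{Theorem}{thm:main}. For each of the listed cryptosystems I will show that, after pinning part of the ciphertext register to a fixed value, the decryption function coincides with a binary linear rounding function $\Round_{\vec k,q}$ whose hidden key is the secret key (up to an efficiently invertible change of coordinates), and then invoke the algorithm of \expref{Theorem}{thm:main}.

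First I would treat standard symmetric-key and public-key \LWE together. In both, a ciphertext is a pair $(\vec a, c) \in \Z_q^{n+1}$, and decryption (a total function on $\Z_q^{n+1}$ here) outputs $0$ exactly when $|c - \ip{\vec a, \vec k}| \le \floor{q/4}$ (centered residues mod $q$) and $1$ otherwise. Restricting the ciphertext to $c = 0$ gives $\Dec_{\vec k}(\vec a, 0) = 0 \iff |\ip{\vec a, \vec k}| \le \floor{q/4}$, i.e.\ $\Dec_{\vec k}(\cdot, 0) = \Round_{\vec k, q}$. Note there is no \LWE error term here: $(\vec a, 0)$ is an adversarially crafted input rather than an honest ciphertext, so the only source of ``noise'' is precisely the rounding that \expref{Theorem}{thm:main} is built to handle. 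The attack is then immediate: prepare the decryption-query registers with the $\vec a$-part free, the $c$-part pinned to $\ket 0$, and a fresh answer register; make the single allowed quantum query to $\Dec_{\vec k}$ --- which, since the $c$-register is left untouched throughout, realizes exactly one query to the oracle for $\Round_{\vec k,q}$ --- and run the algorithm of \expref{Theorem}{thm:main}. It returns $\vec k$ with probability $4/\pi^2 - O(1/q)$, and the surrounding operations (register preparation and the \expref{Theorem}{thm:main} circuit) cost time $O(n)$.

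For standard \RingLWE the same reduction works once one tracks the negacyclic structure of $R_q = \Z_q[x]/\ip{x^n+1}$. A ciphertext is $(u,v) \in R_q^2$, and $\Dec_k(u,v)$ reads whether the constant coefficient of $v - k\cdot u$ is small. Setting $v = 0$ and writing $k = \sum_i k_i x^i$, $u = \sum_j u_j x^j$, the constant coefficient of $-k\cdot u$ equals $-\bigl(k_0 u_0 - \sum_{i=1}^{n-1} k_i u_{n-i}\bigr)$, since $x^i x^j \equiv -1$ whenever $i+j = n$. Hence, viewing $\Dec_k(\cdot, 0)$ as a function of the coefficient vector $\vec u$, we get $\Dec_k(\cdot, 0) = \Round_{\vec k', q} \circ T$ with $\vec k' = (k_0, -k_1, \dots, -k_{n-1})$ and $T$ the fixed coordinate permutation $\vec u \mapsto (u_0, u_{n-1}, u_{n-2}, \dots, u_1)$, which is its own inverse. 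Since $T$ permutes basis states it is a unitary, so we realize one query to $\Round_{\vec k', q}$ by applying $T$ to the query register, calling $\Dec_k$ with the $v$-register pinned to $\ket 0$, and then running the algorithm of \expref{Theorem}{thm:main}; it returns $\vec k'$, from which $k$ is recovered by flipping the signs of coordinates $1, \dots, n-1$. Again the running time is $O(n)$ and the success probability $4/\pi^2 - O(1/q)$, which is a positive constant for the polynomial and superpolynomial moduli used in these schemes. (The paper's full analysis further extends this template to \FrodoPKE.)

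I expect the only points that require care --- none of them a genuine obstacle --- to be: (i) checking, scheme by scheme, that the concrete decryption rule really has the form ``output $0$ iff $|c - \ip{\vec a, \vec k}| \le \floor{q/4}$,'' as opposed to a shifted or rescaled test (different expositions place the plaintext bit in the high-order position slightly differently); one handles this either by pinning the ciphertext component to a suitable constant $c_0$ rather than $0$, possibly composed with a fixed relabeling of the output bit (still a single unitary), so that the match with $\Round_{\vec k, q}$ is exact, or by observing that any residual discrepancy lives on an $o(q^n)$-size set of inputs, which perturbs the \expref{Theorem}{thm:main} output state, and hence the success probability, by only $o(1)$; and (ii) getting the negacyclic sign-and-permutation pattern exactly right in the \RingLWE case, so that the vector $\vec k'$ returned by \expref{Theorem}{thm:main} is correctly inverted to the true key $k$. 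All of the quantum-algorithmic content is already contained in \expref{Theorem}{thm:main}; what remains is bookkeeping.
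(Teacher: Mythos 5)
Your reduction is close in spirit to the paper's, but it diverges at one point that creates a genuine gap. You pin the ciphertext component $c$ (resp.\ $v$) to $0$ and treat $\Dec_{\vec k}(\cdot,0)$ as the rounding function $\Round_{\vec k,q}$ whose hidden key is the secret key itself. You then invoke \expref{Theorem}{thm:main} --- but the informal statement you cite suppresses a hypothesis that the formal version (\expref{Theorem}{thm:LWE-dec}) makes explicit: the hidden key must have \emph{at least one entry that is a unit modulo $q$}. The proof needs this so that $x_n \mapsto x_n k_n$ is a bijection on $\Z_q$ and the block sums become clean geometric series. Your reduction never verifies this for $\vec k$ (a uniformly random vector in $\Z_q^n$ for a general modulus $q\geq 2$) or for the \RingLWE secret (whose coefficients are drawn from the narrow error distribution and could in principle all fail to be units, e.g.\ $s=0$). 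The paper avoids this entirely by \emph{not} pinning the $c$-register: it puts the whole ciphertext $(\vec a,c)\in\Z_q^{n+1}$ in uniform superposition, so that the linear form is $c-\ip{\vec a,\vec k}=\ip{(\vec a,c),(-\vec k,1)}$ and the effective key $(-\vec k,1)$ ends in a $1$, which is always a unit; the algorithm then returns $(-\vec k,1)$ and one reads off $\vec k$. The same device is used for \RingLWE via $\vec s'=(-s_0,s_{n-1},\dots,s_1,1)$. Your version can be repaired either by adopting this trick or by separately arguing that the key has a unit entry except with negligible probability over $\KeyGen$ (true, since $\varphi(q)/q=\Omega(1/\log\log q)$, but it weakens the claim from every key to almost every key).

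Two smaller points. First, in your \RingLWE computation the constant coefficient of $v-k\cdot u$ at $v=0$ is $-(k\cdot u)_0$, whereas your $\Round_{\vec k',q}\circ T$ computes $+(k\cdot u)_0$; since the binary rounding test $|z|\leq\floor{q/4}$ is even this does not change the function, but it means Bernstein--Vazirani may return $-\vec k'$ rather than $\vec k'$, so your final sign-flipping step recovers $k$ only up to a global sign (easily disambiguated against the public key, but worth saying). Second, your fallback in point (i) --- that a mismatch between the actual decryption rule and $\Round_{\vec k,q}$ ``lives on an $o(q^n)$-size set of inputs'' and hence perturbs the output by $o(1)$ --- is not justified as stated: a discrepancy on a $\delta$ fraction of inputs changes the pre-measurement state by up to $2\sqrt{\delta}$ in norm, so you need $\delta=o(1)$, not merely $o(q^n)$ inputs, and you should say which bound you actually have. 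The paper sidesteps this by matching the decryption rule to the general family $\Round_{\vec k',a,b}$ exactly, with explicit $a$ and $b$ depending on $q\bmod 4$.
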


\subsection{Organization}

The remainder of this paper is organized as follows. In \expref{Section}{sec:prelim}, we outline preliminary ideas that we will make use of, including cryptographic concepts, and notions from quantum algorithms. In \expref{Section}{sec:model}, we define the \QCCA model, including the two equivalent versions \INDQCCA and \SEMQCCA. In \expref{Section}{sec:constructions}, we define the \PRF and \PRP scheme, and show that they are \INDQCCA-secure. In \expref{Section}{sec:rounding}, we show how a generalization of the Bernstein-Vazirani algorithm works with probability bounded from below by a constant, even when the oracle outputs rounded values. 
In \expref{Section}{sec:LWE}, we use the results of \expref{Section}{sec:rounding} to prove that a single quantum decryption query is enough to recover the secret key in various versions of \LWE-encryption; we also observe a similar result for a model in which the adversary can make one quantum encryption query with partial access to the randomness register.

\section{Preliminaries}\label{sec:prelim}

\subsection{Basic notation and conventions}

Selecting an element $x$ uniformly at random from a finite set $X$ will be written as $x \inrand X$. If we are generating a vector or matrix with entries in $\Z_q$ by sampling each entry independently according to a distribution $\chi$ on $\Z_q$, we will write, e.g., $\vec{v} \randchi \Z_q^n$. Given a matrix $A$, $A\tp$ will denote the transpose of $A$. We will view elements $\vec v$ of $\Z_q^n$ as column vectors; the notation $\vec v\tp$ then denotes the corresponding row vector.
The notation $\negl(n)$ denotes some function of $n$ which is smaller than every inverse-polynomial. We denote the concatenation of strings $x$ and $y$ by $x||y$. We abbreviate classical probabilistic polynomial-time algorithms as \PPT algorithms.
By \textit{quantum algorithm} (or \QPT) we mean a polynomial-time uniform family of quantum circuits, where each circuit in the family is described by a sequence of unitary gates and measurements. In general, such an algorithm may receive (mixed) quantum states as inputs and produce (mixed) quantum states as outputs. Sometimes we will restrict $\QPT$s implicitly; for example, if we write $\Pr[\algo A(1^n) = 1]$ for a \QPT $\algo A$, it is implicit that we are only considering those $\QPT$s that output a single classical bit.

Every function $f: \bit^m \rightarrow \bit^\ell$ determines a unitary operator
$
U_f : \ket{x} \ket{y} \to \ket{x} \ket{y \oplus f(x)}
$
on $m + \ell$ qubits where $x \in \bit^m$ and $y \in \bit^\ell$. In this work, when we say that a quantum algorithm $\algo A$ gets (adaptive) oracle access to $f$ (written $\algo A^f$), we mean that $\algo A$ can apply the oracle unitary $U_f$. 

Recall that a symmetric-key encryption scheme is a triple of classical probabilistic algorithms $(\KeyGen, \Enc, \Dec)$ whose run-times are polynomial in some security parameter $n$. Such a scheme must satisfy the following property: when a key $k$ is sampled by running $\KeyGen(1^n)$, then it holds that $\Dec_k (\Enc_k (m)) = m$ for all $m$ except with negligible probability in $n$. In this work, all encryption schemes will be fixed-length, i.e., the length of the message $m$ will be a fixed (at most polynomial) function of $n$.

Since the security notions we study are unachievable in the information-theoretic setting, all adversaries will be modeled by $\QPT$s. When security experiments require multiple rounds of interaction with the adversary, it is implicit that $\algo A$ is split into multiple $\QPT$s (one for each round), and that these algorithms forward their internal (quantum) state to the next algorithm in the sequence.

\subsection{Quantum-secure pseudorandomness}

Let $f: \bit^n \times \bit^m \rightarrow \bit^\ell$ be an efficiently computable function, where $n,m,\ell$ are integers and where $f$ defines a family of functions $\{f_k\}_{k\in\bit^n}$ with $f_k(x)=f(k,x)$. We say $f$ is a \textit{quantum-secure pseudorandom function} (or \QPRF) if, for every \QPT~$\algo A$,
\begin{equation}\label{eq:PRF}
\left|\Pr_{k \inrand \bit^n} \left[\algo A^{f_k}(1^n) = 1\right]
- \Pr_{g \inrand \mathcal F_m^\ell} \left[\algo A^{g}(1^n) = 1\right]\right| \leq \negl(n)\,.
\end{equation}
Here $\mathcal F_m^\ell$ denotes the set of all functions from $\bit^m$ to $\bit^\ell$. The standard method for constructing a pseudorandom function from a one-way function produces a $\QPRF$, provided that the one-way function is quantum-secure~\cite{GL89,GGM86,Zha2012}.

A quantum-secure pseudorandom permutation is a a bijective function family of quantum-secure pseudorandom functions. More specifically, consider a function $P: \bit^n \times \bit^m \rightarrow \bit^m$, where $n$ and $m$ are integers, such that each function $P_k(x)=P(k,x)$ in the corresponding family $\{P_k\}_{k\in\bit^n}$ is bijective. We say $P$ is a \textit{quantum-secure pseudorandom permutation} (or \QPRP) if, for every \QPT $\algo A$ with access to both the function and its inverse,
\begin{equation}\label{eq:PRP}
\left|\Pr_{k \inrand \bit^n} \left[\algo A^{P_k,P_k^{-1}}(1^n) = 1\right]
- \Pr_{\pi \inrand \mathcal{P}_m} \left[\algo A^{\pi,\pi^{-1}}(1^n) = 1\right]\right| \leq \negl(n)\,,
\end{equation}
where $\mathcal P_m$ denotes the set of permutations over $m$-bit strings. 
One can construct \QPRP{s} from quantum-secure one-way functions~\cite{Zha16}.

\subsection{Quantum random access codes}

A \emph{quantum random access code} (\QRAC) is a two-party scheme for the following scenario involving two parties Alice and Bob~\cite{Nayak99}:
\begin{enumerate}
  \item Alice gets $x\in\{0,1\}^N$ and encodes it as a $d$-dimensional quantum state~$\rho_x$.
  \item Bob receives $\rho_x$ from Alice, and some index $i\in\{1,\dots,N\}$, and is asked to recover the $i$-th bit of $x$, by performing some measurement on $\rho_x$.
  \item They win if Bob's output agrees with $x_i$ and lose otherwise.
\end{enumerate}
We can view a \QRAC scheme as a pair of (not necessarily efficient) quantum algorithms: one for encoding, and another for decoding. We remark that the definition of a \QRAC does not require a bound on the number of qubits; the interesting question is with what parameters a \QRAC can actually exist.

A variation of the above scenario allows Alice and Bob to use \emph{shared randomness} in their encoding and decoding operations \cite{ALMO08}. 
Hence,  Alice and Bob can pursue probabilistic strategies with access to the same random variable.

Define the average bias of a \QRAC with shared randomness as $\epsilon = p_\text{win} - 1/2$, where $p_\text{win}$ is the winning probability averaged over $x \inrand \bit^N$ and $i \inrand \{1, \dotsc, N\}$.

\subsection{Quantum Fourier transform}

For any positive integer $q$, the quantum Fourier transform over $\Z_q$ is defined by the operation
$$
\QFT_{\Z_q}\ket{x} = \frac{1}{\sqrt{q}} \sum_{y \in \Z_q} \omega_q^{x\cdot y} \ket{y},
$$
where $\omega_q = e^{ \frac{2\pi i}{q}}$.
Due to early work by Kitaev \cite{Kitaev95}, this variant of the Fourier transform can be implemented using quantum phase estimation in complexity polynomial in $\log q$.
An improved approximate implementation of this operation is due to Hales and Hallgren \cite{HH00}.

\section{The QCCA1 security model}\label{sec:model}

\subsection{Quantum oracles}

In our setting, adversaries will (at various times) have quantum oracle access to the classical functions $\Enc_k$ and $\Dec_k$.
The case of the deterministic decryption function $\Dec_k$ is simple: the adversary gets access to the unitary operator $U_{\Dec_k} : \ket{c} \ket{m} \mapsto \ket{c} \ket{m \oplus \Dec_k(c)}.$ For encryption, to satisfy \INDCPA security, $\Enc_k$ must be probabilistic and thus does not correspond to any single unitary operator. Instead, each encryption oracle call of the adversary will be answered by applying a unitary sampled uniformly from the family $\{U_{\Enc_k, r}\}_r$ where
$$
U_{\Enc_k, r} : \ket{m} \ket{c} \mapsto \ket{m} \ket{c \oplus \Enc_k(m ; r)}
$$
and $r$ varies over all the possible values of the randomness register of $\Enc_k$. Note that, since $\Enc_k$ and $\Dec_k$ are required to be probabilistic polynomial-time algorithms provided by the underlying classical symmetric-key encryption scheme, both $U_{\Enc_k, r}$ and $U_{\Dec_k}$ correspond to efficient and reversible quantum operations. For the sake of brevity, we adopt the convenient notation $\Enc_k$ and $\Dec_k$ to refer to the above quantum oracles for encryption and decryption respectively.

\subsection{Ciphertext indistinguishability}

We now define indistinguishability of encryptions (for classical, symmetric-key schemes) against non-adaptive quantum chosen-ciphertext attacks.
\begin{definition}[\INDQCCA] \label{def:ind_qcca}
Let $\Pi = (\KeyGen, \Enc, \Dec)$ be an encryption scheme, $\algo A$ a \QPT, and $n$ the security parameter. Define $\IndGame(\Pi, \algo A, n)$ as follows.
\begin{enumerate}
\item \phase{Setup} A key $k \from \KeyGen(1^n)$ and a bit $b \inrand \bit$ are generated;
\item \phase{Pre-challenge} $\algo A$ gets access to oracles $\Enc_k$ and $\Dec_k$, and outputs $(m_0, m_1)$;
\item \phase{Challenge} $\algo A$ gets $\Enc_k(m_b)$ and access to $\Enc_k$ only, and outputs a bit~$b'$;
\item \phase{Resolution} $\algo A$ wins if $b = b'$.
\end{enumerate}
Then $\Pi$ has indistinguishable encryptions under non-adaptive quantum chosen ciphertext attack (or is \INDQCCA) if,
for every \QPT $\algo A$,
$$
\Pr[\algo A \text{ wins } \IndGame(\Pi, \algo A, n)] \leq 1/2 + \negl(n)\,.
$$
\end{definition}

By inspection, one immediately sees that our definition lies between the established notions of \INDQCPA and \INDQCCAA~\cite{BJ15,GHS16,BZ13}. 
It will later be convenient to work with a variant of the game \IndGame, which we now define.

\begin{definition}[\IndGameR]\label{def:ind_qcca_r}
We define the experiment $\IndGameR(\Pi, \algo A, n)$ just as $\IndGame(\Pi,\algo A,n)$, except that in the pre-challenge phase $\algo A$ only outputs a single message $m$, and in the challenge phase $\algo A$ receives $\Enc_k(m)$ if $b=0$, and $\Enc_k(x)$ for a uniformly random message $x$ if $b=1$.
\end{definition}

Working with \IndGameR rather than \IndGame does not change security. Specifically (as we show in \expref{Appendix}{sec:cca1-equiv}), $\Pi$ is \INDQCCA if and only if, for every \QPT $\algo A$,
$
\Pr[\algo A \text{ wins }\IndGameR(\Pi, \algo A, n)] \leq 1/2 + \negl(n)\,.
$

\subsection{Semantic security}

In semantic security, rather than choosing a pair of challenge plaintexts, the adversary chooses a \emph{challenge template}: a triple of circuits $(\Samp, h, f)$, where $\Samp$ outputs plaintexts from some distribution $\mathcal D_\Samp$, and $h$ and $f$ are functions with domain the support of $\mathcal D_\Samp$. The intuition is that $\Samp$ is a distribution of plaintexts $m$ for which the adversary, if given information $h(m)$ about $m$ together with an encryption of $m$, can produce some new information~$f(m)$.

\begin{definition}[\SEMQCCA]\label{def:sem_qcca} Let $\Pi = (\KeyGen, \Enc, \Dec)$ be an encryption scheme, and consider the experiment $\SemGame(b)$ (with parameter $b \in \{\real, \simul\}$) with a \QPT $\algo A$, defined as follows.
\begin{enumerate}
\item \phase{Setup} A key $k \from \KeyGen(1^n)$ is generated;
\item \phase{Pre-challenge} $\algo A$ gets access to oracles $\Enc_k$ and $\Dec_k$, and outputs a challenge template $(\Samp, h, f)$;
\item \phase{Challenge} A plaintext $m \inrand \Samp$ is generated; $\algo A$ receives $h(m)$ and gets access to an oracle for $\Enc_k$ only; if $b = \real$, $\algo A$ also receives $\Enc_k(m)$; $\algo A$ outputs a string $s$;
\item \phase{Resolution} $\algo A$ wins if $s = f(m)$.
\end{enumerate}
$\Pi$ has semantic security under non-adaptive quantum chosen ciphertext attack (or is \SEMQCCA) if, for every \QPT $\algo A$, there exists a \QPT $\algo S$ such that the challenge templates output by $\algo A$ and $\algo S$ are identically distributed, and
$$
\bigl|\Pr[\algo A \text{ wins } \SemGame(\real)] - \Pr[\algo S \text{ wins } \SemGame(\simul)]\bigr| \leq \negl(n)\,.
$$
\end{definition}

Our definition is a straightforward modification of \SEMQCPA~\cite{GHS16,BZ13}; the modification is to give $\algo A$ and $\algo S$ oracle access to $\Dec_k$ in the pre-challenge phase.

\begin{theorem}
Let $\Pi=(\KeyGen,\Enc, \Dec)$ be a symmetric-key encryption scheme. Then, $\Pi$ is $\INDQCCA$-secure if and only if $\Pi$ is $\SEMQCCA$-secure.
\end{theorem}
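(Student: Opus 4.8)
The plan is to reprove the classical Goldwasser--Micali \textsc{IND}$\Leftrightarrow$\textsc{SEM} equivalence, lifted to the quantum-oracle setting, via two reductions. The only ``quantum'' content is that a reduction runs the adversary's stages as subroutines, forwards their internal quantum state, and relays the (unitary) oracle calls $U_{\Enc_k,r}$ and $U_{\Dec_k}$ to its own oracles; no interference- or rewinding-type argument is needed. The structural fact that makes both reductions go through is that in \emph{both} \IndGame and \SemGame the decryption oracle is available only in the pre-challenge phase, so a reduction can always answer an adversary's decryption queries with its own decryption oracle and never has to decrypt anything connected to the challenge ciphertext. (This is exactly where non-adaptivity is used; the analogue for \QCCAA would fail here and require a different idea.) Since all schemes are fixed-length, every message --- including every output of a \Samp circuit --- has the same length $\ell(n)$, so the dummy plaintext $0^{\ell(n)}$ is always legal. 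I expect the only real ``obstacle'' to be bookkeeping: matching the distribution of challenge templates, getting both directions of the absolute value in the \SEMQCCA definition, and disposing of a degenerate case.

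\textbf{\SEMQCCA $\Rightarrow$ \INDQCCA.} Given a \QPT $\A=(\A_1,\A_2)$ for \IndGame, I would build a \QPT $\A'$ for \SemGame. In the pre-challenge phase $\A'$ runs $\A_1$, relaying its $\Enc_k$ and $\Dec_k$ oracles, to obtain $(m_0,m_1)$; we may assume $m_0\neq m_1$, since otherwise the challenge ciphertext is independent of $b$ and $\A$ wins with probability exactly $1/2$ (and $\A$ can be modified to output a canonical distinct pair and then guess at random in that case, without changing its advantage). Now $\A'$ outputs the template $(\Samp,h,f)$ where $\Samp$ samples $c\inrand\bit$ and outputs $m_c$ (with $m_0,m_1$ hard-coded), $h$ is empty, and $f(m_c)=c$ (well-defined since $m_0\neq m_1$). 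In the real challenge phase $\A'$ receives $\Enc_k(m_c)$, runs $\A_2$ on it while relaying $\Enc_k$, and outputs $\A_2$'s bit $b'$ as its guess $s$. Then $\A'$ wins $\SemGame(\real)$ iff $b'=c$, so $\Pr[\A' \text{ wins } \SemGame(\real)] = \Pr[\A\text{ wins }\IndGame]$. By \SEMQCCA there is a simulator $\Sim$ for $\A'$, but in $\SemGame(\simul)$ the simulator gets only $h(m)$ (empty) and an $\Enc_k$ oracle, so its view is independent of $c$ and $\Pr[\Sim\text{ wins }\SemGame(\simul)]=1/2$; hence $\Pr[\A\text{ wins }\IndGame]\leq 1/2+\negl(n)$.

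\textbf{\INDQCCA $\Rightarrow$ \SEMQCCA.} Given a \QPT $\A$ for \SemGame, define the simulator $\Sim$ to run $\A$ verbatim --- in particular outputting the identical challenge template, which settles the distribution requirement --- except that in the challenge phase, where real-world $\A$ would receive $\Enc_k(m)$, the simulator instead feeds $\A$ the ciphertext $\Enc_k(0^{\ell(n)})$ computed with its own $\Enc_k$ oracle, and $\Sim$ outputs whatever $\A$ outputs. To bound $\bigl|\Pr[\A\text{ wins }\SemGame(\real)]-\Pr[\Sim\text{ wins }\SemGame(\simul)]\bigr|$, I would build an \IndGame adversary $\B$: it runs $\A$'s pre-challenge stage (relaying $\Enc_k,\Dec_k$) to obtain $(\Samp,h,f)$, samples $m\inrand\Samp$, and outputs challenge messages $(m_0,m_1)=(m,0^{\ell(n)})$; on receiving $\Enc_k(m_b)$ it runs $\A$'s challenge stage on input $(h(m),\Enc_k(m_b))$ --- it can compute $h(m)$ since it knows $m$ --- relaying $\Enc_k$, gets $\A$'s output $s$, and returns $b'=0$ if $s=f(m)$ and $b'=1$ otherwise. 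When $b=0$ the view handed to $\A$ is exactly $\SemGame(\real)$, and when $b=1$ it is exactly the view $\Sim$ produces in $\SemGame(\simul)$, so $\Pr[\B\text{ wins}]=\tfrac12+\tfrac12\bigl(\Pr[\A\text{ wins }\SemGame(\real)]-\Pr[\Sim\text{ wins }\SemGame(\simul)]\bigr)$; \INDQCCA forces the parenthesized difference to be $\negl(n)$, and the symmetric adversary $\B'$ that outputs $b'=1$ exactly when $s=f(m)$ bounds the difference in the other direction.

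As noted, I do not anticipate a genuine technical difficulty: the argument is a routine transcription of the classical proof, and the points that need care are the ones flagged above --- keeping the challenge-template distributions identical (handled by having $\Sim$, resp.\ $\A'$, run the other adversary's code up to and including the template-output step), obtaining both inequalities behind the absolute value (handled by the pair $\B,\B'$), handling $m_0=m_1$, and, above all, using that decryption queries occur only pre-challenge so a reduction can always service them with its own $\Dec_k$ oracle. I would also remark that the same two reductions specialize to recover the known \INDQCPA$\Leftrightarrow$\SEMQCPA equivalence by deleting the $\Dec_k$ oracle throughout.
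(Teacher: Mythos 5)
Your proof is correct and is essentially the argument the paper intends: the paper does not write out a proof at all, but simply asserts that the classical $\IND\Leftrightarrow\SEM$ equivalence (\`a la Goldreich, as adapted to the quantum-oracle setting in prior work for \QCPA) carries over once one notes that both the adversary and the simulator receive $\Dec_k$ access only in the pre-challenge phase --- exactly the observation on which your two reductions hinge. Your write-up is a faithful, correctly detailed transcription of that standard argument, so nothing further is needed.
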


The classical proof of the above (see, e.g.,~\cite{Goldreich2004}) carries over directly to the quantum case. This was already observed for the case of $\QCPA$ by~\cite{GHS16}, and extends straightforwardly to the case where both the adversary and the simulator gain oracle access to $\Dec_k$ in the pre-challenge phase.\footnote{In fact, the proof works even if $\Dec_k$ access is maintained during the challenge, so the result is really that \INDQCCAA is equivalent to \SEMQCCAA.}

\section{Secure Constructions}\label{sec:constructions}

\subsection{\PRF scheme}

Let us first recall the standard symmetric-key encryption based on pseudorandom functions.

\begin{construction}[\PRF scheme]\label{cons:prf}
Let $n$ be the security parameter and let $f:\bit^n \times \bit^n \longrightarrow \bit^n$ be an efficient family of functions $\{f_k\}_k$. Then, the symmetric-key encryption scheme $\PRFscheme[f]=(\KeyGen,\Enc,\Dec)$ is defined as follows:
\begin{enumerate}
\item \phase{\KeyGen} output $k \rand \bit^n$;
\item \phase{\Enc} to encrypt $m \in \{0,1\}^n$, choose $r \rand \bit^n$ and output $(r, f_k(r) \oplus m)$;
\item \phase{\Dec} to decrypt $(r, c) \in \{0,1\}^n \times \{0,1\}^n$, output $c \oplus f_k(r)$;
\end{enumerate}
\end{construction}

For simplicity, we chose a particularly simple set of parameters for the \PRF, so that key length, input size, and output size are all equal to the security parameter. It is straightforward to check that the definition (and our results below) are valid for arbitrary polynomial-size parameter choices.

We show that the above scheme satisfies \QCCA, provided that the underlying \PRF is secure against quantum queries.

\begin{theorem}\label{th:prf_scheme}
If $f$ is a \QPRF, then $\PRFscheme[f]$ is $\INDQCCA$-secure.
\end{theorem}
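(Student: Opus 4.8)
The plan is to use a standard hybrid argument over the two oracles, combined with the \QRAC bound (Lemma~\ref{lemma:qrac}) to control how much the pre-challenge decryption queries can help. First I would move from the real game to a game in which the \PRF instance $f_k$ is replaced by a truly random function $g \inrand \mathcal{F}_n^n$; by the \QPRF assumption, the winning probability changes by at most $\negl(n)$, since the adversary (together with its challenger, who can simulate $\Enc_k$, $\Dec_k$, and the challenge given oracle access to $f_k$ or $g$) is an efficient distinguisher. From now on the oracles are $\Enc_g(m;r) = (r, g(r)\oplus m)$ and $\Dec_g(r,c) = c \oplus g(r)$; note these are essentially the \emph{same} oracle up to a trivial relabelling, so both pre-challenge oracles give the adversary (quantum) access to the single function $g$, and the post-challenge oracle does too.

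Next I would analyse the challenge. In \IndGameR (which by the appendix is equivalent for security purposes) the challenger samples $r^\ast \inrand \bit^n$ and returns either $(r^\ast, g(r^\ast)\oplus m)$ or $(r^\ast, g(r^\ast)\oplus x)$ for random $x$. The adversary's final guess depends on its pre-challenge quantum state $\rho$ — obtained via polynomially many quantum queries to $g$ — together with $r^\ast$ and the challenge string. The key observation is that $(r^\ast, g(r^\ast)\oplus m)$ versus $(r^\ast, g(r^\ast)\oplus x)$ are distinguishable only insofar as the adversary can predict the value $g(r^\ast)$ from $\rho$ and $r^\ast$; here I would make this precise by arguing that distinguishing advantage $\epsilon$ implies the adversary can recover (at least one bit of, hence a nonnegligible bias on) $g(r^\ast)$ with advantage $\Omega(\epsilon)$ given $\rho$ and the random index $r^\ast$. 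But $\rho$ is a poly$(n)$-qubit state, fixed before $r^\ast$ is chosen, that was produced from the truth table of $g$ — a uniformly random string of $N = 2^n \cdot n$ bits. So $(\rho \text{ at position } r^\ast)$ is exactly a \QRAC with shared randomness encoding $N$ bits into $d = 2^{\poly(n)}$ dimensions, and Lemma~\ref{lemma:qrac} bounds its average bias by $O(2^{-n/2}\poly(n)) = \negl(n)$. Hence $\epsilon = \negl(n)$.

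The main obstacle — and the step requiring the most care — is the reduction from "distinguishing the challenge" to "guessing a bit of $g(r^\ast)$ from the \QRAC", because the adversary also retains a \emph{post-challenge} encryption oracle for $g$ and can make further quantum queries during the challenge phase. I would handle this by noting that post-challenge encryption queries $\Enc_g(\cdot;r)$ are answered with fresh randomness $r$, and conditioned on the (measure-zero-ish, i.e. $\negl$-probability) event that no post-challenge query randomness collides with $r^\ast$, these queries are simulatable from $g$ restricted away from $r^\ast$ and thus provide no additional information about $g(r^\ast)$; more carefully, one folds the post-challenge queries into the decoding measurement of the \QRAC, so that the "decoder" is the entire challenge-phase+resolution-phase adversary with its $\Enc_g$ oracle, and the \QRAC argument still applies because the encoder's state $\rho$ was fixed before $r^\ast$. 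A second, more routine subtlety is that the challenge reveals $g(r^\ast)\oplus m_b$ rather than $g(r^\ast)$ itself, so I would argue via the standard trick that for a random target value this leaks no more than one evaluation of $g$ and is absorbed into the same bias bound. Assembling these pieces, the adversary's advantage in \IndGameR is $\negl(n)$, and by the equivalence of \IndGame and \IndGameR this gives \INDQCCA security of $\PRFscheme[f]$.
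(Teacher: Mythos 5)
Your high-level strategy matches the paper's: switch $f_k$ to a truly random function via the \QPRF assumption, pass to \IndGameR, and bound the remaining advantage by viewing the adversary's pre-challenge state as a quantum random access code. The first step is fine. However, the two places you yourself flag as needing the most care are precisely where the argument, as sketched, does not go through, and the missing ingredient is the specific \QRAC construction that carries the paper's proof. First, reducing ``distinguishing $(r^\ast, g(r^\ast)\oplus m^\ast)$ from $(r^\ast, g(r^\ast)\oplus x)$'' to ``predicting a bit of $g(r^\ast)$ given only $\rho$, the index $r^\ast$, and shared randomness'' is not routine: in the model of \expref{Lemma}{lemma:qrac} the decoder gets nothing else, whereas your distinguisher additionally receives the challenge ciphertext, which in the $b=0$ branch equals $g(r^\ast)\oplus m^\ast$ and is therefore perfectly correlated with the very value the decoder is supposed to be unable to predict; a valid \QRAC decoder cannot even simulate that input. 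Second, folding the post-challenge $\Enc_g$ oracle into the decoding measurement breaks the \QRAC model outright: each such query reveals $g(\vec r_i)$ for a fresh $\vec r_i$, i.e., additional bits of the encoded string beyond $\rho$, and the lemma says nothing about decoders with that kind of side access. Your ``no collision with $r^\ast$'' conditioning does not repair this, because the decoder has no access to $g$ from which to simulate anything.

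The paper resolves both issues simultaneously by encoding different bits. It does not encode the truth table of $g$; it encodes $2^n$ indicator bits $b_1,\dotsc,b_{2^n}$, puts $\vec s,\vec y_1,\dotsc,\vec y_{2^n},\vec r_1,\dotsc,\vec r_\ell$ into the shared randomness, and defines the random function as $\tilde f(\vec r)=\vec y_r\oplus b_r\vec s$. The challenge ciphertext $(j,\,m^\ast\oplus\vec y_j)$ is then computable by the decoder from shared randomness alone, yet it is an encryption of $m^\ast$ when $b_j=0$ and of the uniformly random string $m^\ast\oplus\vec s$ when $b_j=1$, so winning \IndGameR is \emph{literally} decoding $b_j$ with the same bias --- no lossy distinguishing-to-prediction reduction is needed. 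The post-challenge queries are handled by fixing their randomness to $\vec r_1,\dotsc,\vec r_\ell$ from the shared randomness and appending the $\ell n=\poly(n)$ classical values $\tilde f(\vec r_1),\dotsc,\tilde f(\vec r_\ell)$ to the transmitted message $\rho$, which keeps its dimension at $2^{\poly(n)}$ so the bound still yields bias $O(2^{-n/2}\poly(n))$. Without this construction (or an equivalent one), your proof has a genuine gap at its central step.
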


\begin{proof}
Fix a \QPT adversary \A against $\Pi := \PRFscheme[f] = (\KeyGen, \Enc, \Dec)$ and let $n$ denote the security parameter. It will be convenient to split \A into the pre-challenge algorithm $\algo A_1$ and the challenge algorithm $\algo A_2$.

We will work with the single-message variant of \IndGame, \IndGameR, described below as \textsc{Game~0}. In \expref{Appendix}{sec:cca1-equiv}, we show that $\Pi$ is \INDQCCA if and only if no \QPT adversary can win \IndGameR with non-negligible bias. We first show that a version of \IndGameR where we replace $f$ with a random function, called \textsc{Game 1} below, is indistinguishable from \IndGameR, so that the winning probabilities cannot differ by a non-negligible amount. We then prove that no adversary can win \textsc{Game 1} with non-negligible bias by showing how any adversary for \textsc{Game 1} can be used to make a quantum random access code with the same bias.

\begin{figure}[ht]
\centering
\begin{tikzpicture}[> = latex, scale = 1.2]

\node at (-1.55,.5) {$1^n$};
\draw[->] (-1.3,.5)--(-.8,.5);
\draw (-.8,0) rectangle (1,1);
\node at (.1,.5) {$\algo A_1$};

\node at (2,.9) {$\ket{\psi}$};
\draw[->] (1,.75)--(3,.75);

\node at (1.375,.4) {$m^*$};
\draw[->] (1,.25)--(1.75,.25);

\node at (2,.25) {$\Phi_b$};
\draw (1.75,0) rectangle (2.25,.5);

\node at (2.625,.4) {$c^*$};
\draw[->] (2.25,.25) -- (3,.25);

\draw (3,0) rectangle (4,1);
\node at (3.5,.5) {$\algo A_2$};
\draw[->] (4,.5) -- (4.5,.5);
\node at (4.75,.5) {$b'$};

\draw[->] (-.6,0) -- (-.6,-.3);
\draw[->] (-.2,-.3) -- (-.2,0);
\draw (-.75,-.7) rectangle (-.05,-.3);
\node at (-.4,-.5) {$\Enc_k$};

\draw[->] (.4,0) -- (.4,-.3);
\draw[->] (.8,-.3) -- (.8,0);
\draw (.25,-.7) rectangle (.95,-.3);
\node at (.6,-.5) {$\Dec_k$};

\draw[->] (3.3,0) -- (3.3,-.3);
\draw[->] (3.7,-.3) -- (3.7,0);
\draw (3.15,-.7) rectangle (3.85,-.3);
\node at (3.5,-.5) {$\Enc_k$};

\end{tikzpicture}
\caption{\IndGameR from \expref{Definition}{def:ind_qcca_r}.}\label{fig:indgame}
\end{figure}

\begin{description}
\item[Game 0:] This is the game $\IndGameR(\Pi, \algo A, n)$, which we briefly review for convenience (see also \expref{Figure}{fig:indgame}). In the pre-challenge phase, $\algo A_1$ gets access to oracles $\Enc_k$ and $\Dec_k$, and outputs a message $m^*$ while keeping a private state $\ket{\psi}$ for the challenge phase. In the challenge phase, a random bit $b \inrand \bit$ is sampled, and $\algo A_2$ is run on input $\ket{\psi}$ and a challenge ciphertext
$$
c^* := \Phi_b(m^*) :=
\begin{cases}
  \Enc_k(m^*) & \mbox{if } b = 0, \\
  \Enc_k(x)   & \mbox{if } b = 1.
\end{cases}
$$
Here $\Enc_k(x) := (r^*, f_k(r^*)\oplus x)$ where $r^*$ and $x$ are sampled uniformly at random. In the challenge phase, $\algo A_2$ only has access to $\Enc_k$ and must output a bit $b'$. $\algo A$ wins if $\delta_{bb'}=1$, so we call $\delta_{bb'}$ the outcome of the game.

\item[Game 1:] This is the same game as \textsc{Game 0}, except we replace $f_k$ with a uniformly random function $F:\bit^n\rightarrow\bit^n$.

\end{description}


First, we show that for any adversary $\algo A$, the outcome when $\algo A$ plays \textsc{Game 0} is at most negligibly different from the outcome when $\algo A$ plays \textsc{Game 1}. We do this by constructing a quantum oracle distinguisher $\algo D$ that distinguishes between the \QPRF $\{f_k\}_k$ and a true random function, with distinguishing advantage
$$
\bigl|\Pr[1 \from \textsc{Game 0}] - \Pr[1 \from \textsc{Game 1}]\bigr|,
$$
which must then be negligible since $f$ is a \QPRF. The distinguisher $\algo D$ gets quantum oracle access to a function $g$, which is either $f_k$, for a random $k$, or a random function, and proceeds by simulating $\algo A$ playing \IndGameR as follows:
\begin{enumerate}
\item Run $\algo A_1$, answering encryption queries using classical calls to $g$ in place of $f_k$, and answering decryption queries using quantum oracle calls to $g$:
$$
\ket{r}\ket{c}\ket{m}
\mapsto \ket{r}\ket{c}\ket{m \oplus c}
\mapsto \ket{r}\ket{c}\ket{m \oplus c \oplus g(r)}\,;
$$
\item Simulate the challenge phase by sampling $b \inrand \bit$ and encrypting the challenge using $g$ in place of $f_k$;
      run $\algo A_2$ and simulate encryption queries as before;
\item When $\algo A_2$ outputs $b'$, output $\delta_{bb'}$.
\end{enumerate}

It remains to show that no \QPT adversary can win \textsc{Game 1} with non-negligible probability. To do this, we design a quantum random access code from any adversary, and use the lower bound on the bias given in \expref{Lemma}{lemma:qrac}.

\paragraph{Intuition.} We first give some intuition. In an encryption query, the adversary, either $\algo A_1$ or $\algo A_2$, queries a message, or a superposition of messages $\sum_m\ket{m}$, and gets back $\sum_m\ket{m}\ket{r,m\oplus F(r)}$ for a random $r$, from which he can easily get a sample $(r,F(r))$. Thus, in essence, an encryption query is just classically sampling a random point of $F$.

In a decryption query, which is only available to $\algo A_1$, the adversary sends a ciphertext, or a superposition of ciphertexts, $\sum_{r,c}\ket{r,c}$ and gets back $\sum_{r,c}\ket{r,c}\ket{c\oplus F(r)}$, from which he can learn $\sum_r\ket{r,F(r)}$. Thus, a decryption query allows $\algo A_1$ to query $F$, in superposition. Later in the challenge phase, $\algo A_2$ gets an encryption $(r^*,m\oplus F(r^*))$ and must decide if $m=m^*$. Since $\algo A_2$ no longer has access to the decryption oracle, which allows him to query $F$, there seem to be two possible ways $\algo A_2$ could learn $F(r^*)$:
\begin{enumerate}
\item $\algo A_2$ gets lucky in one of his at most $\poly(n)$ many queries to $\Enc_k$ and happens to sample $(r^*,F(r^*))$;
\item Or, the adversary is somehow able to use what he learned while he had access to $\Dec_k$, and thus $F$, to learn $F(r^*)$, meaning that the $\poly(n)$-sized quantum memory $\algo A_1$ sends to $\algo A_2$, that can depend on queries to $F$, but which cannot depend on $r^*$, allows $\algo A_2$ to learn $F(r^*)$.
\end{enumerate}
The first possibility is exponentially unlikely, since there are $2^n$ possibilities for $r^*$. As we will see shortly, the second possibility would imply a very strong quantum random access code. It would essentially allow $\algo A_1$ to interact with $F$, which contains $2^n$ values, and make a state, which must necessarily be of polynomial size, such that $\algo A_2$ can use that state to recover $F(r^*)$ for any of the $2^n$ possible values of $r^*$, with high probability. We now formalize this intuition. To clarify notation, we will use boldface to denote the shared randomness bitstrings.

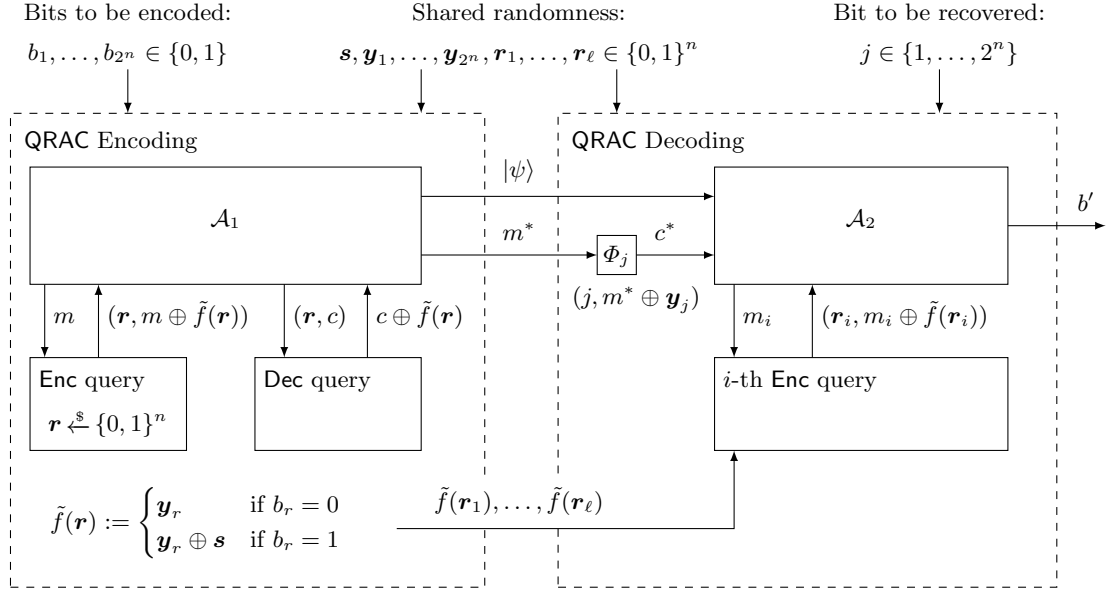
\begin{figure}[ht]
\centering
\begin{tikzpicture}[> = latex, scale = 1.3,
  l/.style = {anchor = base west},
  every node/.style = {anchor = base}
]


\draw[->] (1,2.2)--(1,1.75);
\node at (1,2.7) {Bits to be encoded:};
\node at (1,2.3) {$b_1, \dotsc, b_{2^n} \in \bit$};

\draw[->] (9.3,2.2)--(9.3,1.75);
\node at (9.3,2.7) {Bit to be recovered:};
\node at (9.3,2.3) {$j \in \{1, \dotsc, 2^n\}$};

\draw[->] (4,2.2)--(4,1.75);
\draw[->] (6,2.2)--(6,1.75);
\node at (5,2.7) {Shared randomness:};
\node at (5,2.3) {$\vec s, \vec y_1, \dotsc, \vec y_{2^n}, \vec r_1, \dotsc, \vec r_\ell \in \bit^n$};


\draw[dashed] (-.2,1.75) rectangle (4.65,-3.1);
\node[l] at (-.15,1.4) {\QRAC Encoding};

\draw[dashed] (5.4,1.75) rectangle (10.5,-3.1);
\node[l] at (5.45,1.4) {\QRAC Decoding};


\draw (0,0) rectangle (4,1.2);
\node at (2,.6) {$\algo A_1$};

\draw (7,0) rectangle (10,1.2);
\node at (8.5,.6) {$\algo A_2$};

\draw[->] (10,.6)--(11,.6);
\node at (10.8,.75) {$b'$};


\draw[->] (.15,0)--(.15,.-.75);
\node[l] at (.15,-.4) {$m$};

\draw[<-] (.7,0)--(.7,.-.75);
\node[l] at (.7,-.4) {$(\vec r, m \oplus \tilde{f}(\vec r))$};

\draw (0,-1.7) rectangle (1.6,-.75);
\node[l] at (0,-1.05) {$\Enc$ query};
\node at (.8,-1.5) {$\vec r \inrand \bit^n$};


\draw[->] (2.6,0)--(2.6,.-.75);
\node[l] at (2.6,-.4) {$(\vec r, c)$};

\draw[<-] (3.45,0)--(3.45,.-.75);
\node[l] at (3.45,-.4) {$c \oplus \tilde{f}(\vec r)$};

\draw (2.3,-1.7) rectangle (4.0,-.75);
\node[l] at (2.3,-1.05) {$\Dec$ query};


\draw[->] (7.2,0)--(7.2,-.75);
\node[l] at (7.2,-.4) {$m_i$};

\draw[<-] (8.0,0)--(8.0,-.75);
\node[l] at (8.0,-.4) {$(\vec r_i,m_i\oplus \tilde{f}(\vec r_i))$};

\draw (7, -1.7) rectangle (10,-.75);
\node[l] at (7,-1.05) {$i$-th $\Enc$ query};


\draw[->] (3.75,-2.5)--(7.2,-2.5)--(7.2,-1.7);
\node[fill = white, inner sep = 4pt] at (5,-2.3) {$\tilde{f}(\vec r_1), \dotsc, \tilde{f}(\vec r_\ell)$};

\draw[->] (4,.9)--(7,.9);
\node at (5,1.1) {$\ket{\psi}$};

\draw[->] (4,.3)--(5.8,.3);
\node at (5,.45) {$m^*$};


\draw (5.8,.1) rectangle (6.2,.5);
\node at (6.0,.23) {$\Phi_j$};

\draw[->] (6.2,.3)--(7,.3);
\node at (6.5, 0.45) {$c^*$};
\node at (6.2,-0.20) {$(j,m^* \oplus \vec y_j)$};


\node at (1.7,-2.5) {$\tilde{f}(\vec r) := \begin{cases}
\vec y_r & \mbox{if } b_r = 0 \\
\vec y_r \oplus \vec s & \mbox{if } b_r = 1
\end{cases}$};

\end{tikzpicture}
\caption{Quantum random access code construction for the \PRF scheme.}\label{fig:code}
\end{figure}

\paragraph{Construction of a quantum random access code.} Let $\algo A$ be a \QPT adversary with winning probability $p$. Let $\ell=\poly(n)$ be an upper bound on the number of queries made by $\algo A_2$. Recall that a random access code consists of an encoding procedure that takes (in this case) $2^n$ bits $b_1,\dots,b_{2^n}$, and outputs a state $\rho$ of dimension (in this case) $2^{\poly(n)}$, such that a decoding procedure, given $\rho$ and an index $j\in\{1,\dots,2^n\}$ outputs $b_j$ with some success probability.
We define a quantum random access code as follows (see also \expref{Figure}{fig:code}).

\begin{description}
\item[Encoding.] Let $b_1,\dots,b_{2^n}\in \bit$ be the string to be encoded. Let $\vec s,\vec y_1,\dots, \vec y_{2^n}\in \bit^n$ be given by the first $n(1+2^n)$ bits of the shared randomness, and let $\vec r_1,\dots,\vec r_\ell\in\bit^n$ be the next $\ell n$ bits. Define a function $\tilde{f}:\bit^n\rightarrow \bit^n$ as follows. For $\vec r\in\bit^n$, we will slightly abuse notation by letting $r$ denote the corresponding integer value between 1 and $2^n$. Define $\tilde{f}(\vec r)=\vec y_r\oplus b_r\vec s$. Run $\algo A_1$, answering encryption and decryption queries using $\tilde{f}$ in place of $F$. Let $m^*$ and $\ket{\psi}$ be the outputs of $\algo A_1$ (see \expref{Figure}{fig:indgame}). Output $\rho=(\ket{\psi},m^*,\tilde{f}(\vec r_1),\dotsc,\tilde{f}(\vec r_\ell))$.
\item[Decoding.] Let $j\in\{1,\dots,2^n\}$ be the index of the bit to be decoded (so given $\rho$ as above, the goal is to recover $b_j$). Decoding will make use of the values $\vec s,\vec y_1,\dots,\vec y_{2^n},\vec r_1,\dots,\vec r_\ell$ given by the shared randomness. Upon receiving a query $j\in \{1,\dots,2^n\}$, run $\algo A_2$ with inputs $\ket{\psi}$ and $(j,m^*\oplus \vec y_j)$. On $\algo A_2$'s $i$-th encryption oracle call, use randomness $\vec r_i$, so that if the input to the oracle is $\ket{m,c}$, the state returned is $\ket{m,c\oplus (\vec r_i,m\oplus \tilde{f}({\vec r_i}))}$ (note that $\tilde{f}({\vec r_i})$ is given as part of $\rho$).
Return the bit $b'$ output by $\algo A_2$.
\end{description}

\paragraph{Average bias of the code.} We claim that the average probability of decoding correctly, taken over all choices of $b_1,\dots,b_{2^n}\in\bit$ and $j\in\{1,\dots,2^n\}$, is exactly $p$, the success probability of $\algo A$. To see this,
first note that from $\algo A$'s perspective, this is exactly \textsc{Game 1}: the function $\tilde{f}$ is a uniformly random function, and the queries are responded to just as in \textsc{Game 1}.
Further, note that if $b_j=0$, then $m^*\oplus \vec y_j=m^* \oplus \tilde{f}(j)$, so the correct guess for $\algo A_2$ would be $0$, and if $b_j=1$, then $m^* \oplus \vec y_j = m^* \oplus \tilde{f}(j)\oplus \vec s = \vec x \oplus\tilde{f}(j)$ for the uniformly random string $\vec x =m^*\oplus \vec s$, so the correct guess for $\algo A_2$ would be 1.

Therefore, the average bias of the code is $p-1/2$. We also observe that $\rho$ has dimension at most $2^{\poly(n)}$, since $\ket{\psi}$ must be a $\poly(n)$-qubit state ($\algo A_1$ only runs for $\poly(n)$ time), and $\ell$, the number of queries made by $\algo A_2$ must be $\poly(n)$, since $\algo A_2$ only runs for $\poly(n)$ time.
As this code encodes $2^n$ bits into a state of dimension $2^{\poly(n)}$, by \expref{Lemma}{lemma:qrac} (proven in \expref{Appendix}{sec:qrac}), the bias is $O(2^{-n/2}\poly(n))=\negl(n)$, so $p\leq \frac{1}{2}+\negl(n)$.\qed
\end{proof}

\subsection{\PRP scheme}

We now prove the \INDQCCA security of a standard encryption scheme based on pseudorandom permutations. 

\begin{construction}[\PRP scheme]\label{cons:prp}
Let $n$ be the security parameter and let $P:\bit^n \times \bit^{2n} \longrightarrow \bit^{2n}$ be an efficient family of permutations $\{P_k\}_k$. Then, the symmetric-key encryption scheme $\PRPscheme[f]=(\KeyGen,\Enc,\Dec)$ is defined as follows:
\begin{enumerate}
\item \phase{\KeyGen} output $k \inrand \bit^n$;
\item \phase{\Enc} to encrypt $m \in \bit^n$, choose $r \rand\bit^n$ and output $P_k(m || r)$;
\item \phase{\Dec} to decrypt $c \in \bit^{2n}$, output the first $n$ bits of $P_k^{-1}(c)$.
\end{enumerate}
\end{construction}

As before, we chose a simple set of parameters; in general, the randomness length, plaintext length, and security parameter can be related by arbitrary polynomials.

\begin{theorem}\label{th:prp_scheme}
If $P$ is a \QPRP, then $\PRPscheme[P]$ is $\INDQCCA$-secure.
\end{theorem}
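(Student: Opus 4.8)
The plan is to mirror the structure of the proof of \expref{Theorem}{th:prf_scheme} for the \PRP scheme, making the changes necessitated by the fact that encryption is now $\Enc_k(m;r)=P_k(m\|r)$ and decryption strips the randomness after applying $P_k^{-1}$. As before, I would work with the single-message variant \IndGameR and define \textsc{Game 0} to be $\IndGameR(\Pi,\algo A,n)$ with $\Pi:=\PRPscheme[P]$, split $\algo A$ into $\algo A_1$ (with access to $\Enc_k$ and $\Dec_k$) and $\algo A_2$ (with access to $\Enc_k$ only). \textsc{Game 1} is obtained by replacing $P_k$ with a uniformly random permutation $\pi$ on $\bit^{2n}$; the transition from \textsc{Game 0} to \textsc{Game 1} costs only $\negl(n)$ since $P$ is a \QPRP, via the obvious oracle distinguisher $\algo D$ that simulates the whole game using its oracle pair $(\pi,\pi^{-1})$ in place of $(P_k,P_k^{-1})$ (note the distinguisher needs access to the \emph{inverse} oracle as well, which is exactly what the \QPRP definition in \eqref{eq:PRP} provides). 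The remaining and main task is to bound the winning bias in \textsc{Game 1} by $\negl(n)$, and the natural route is again to extract a quantum random access code with shared randomness and invoke \expref{Lemma}{lemma:qrac}.

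The conceptual point is the same as in the \PRF case: in the pre-challenge phase, a decryption query lets $\algo A_1$ evaluate (in superposition) the map sending a ciphertext $c$ to the first $n$ bits of $\pi^{-1}(c)$ — equivalently, it gives access to information about $\pi^{-1}$ — whereas in the challenge phase $\algo A_2$ only has encryption access, i.e.\ can sample random pairs $(m\|r,\pi(m\|r))$, and must decide whether the challenge ciphertext $c^*=\pi(m^*\|r^*)$ encrypts $m^*$ or a uniformly random $x$. So the bits to be encoded will be indexed by the $2^{2n}$ possible ciphertexts $c\in\bit^{2n}$: for each $c$, one bit of the string $\pi^{-1}(c)$ (say the first bit, or more cleanly a bit determined by XORing $\pi^{-1}(c)$ against a shared-randomness mask, as in the \PRF proof) is what $\algo A_2$ is effectively being asked to recover when handed $c$ as its challenge. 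The encoder runs $\algo A_1$ answering its $\Enc_k$/$\Dec_k$ queries consistently with a random permutation reconstructed from the shared randomness, and outputs $\algo A_1$'s state $\ket{\psi}$, its message $m^*$, and enough precomputed values $\pi(\vec r_i)$-type data to answer $\algo A_2$'s at most $\ell=\poly(n)$ encryption queries; the decoder, on index (= challenge ciphertext) $j$, runs $\algo A_2$ on $\ket{\psi}$ and on a challenge ciphertext set up so that $\algo A_2$'s correct output bit equals the encoded bit $b_j$. One then argues, exactly as before, that from $\algo A$'s viewpoint this is precisely \textsc{Game 1}, so the average decoding probability equals $\algo A$'s winning probability $p$, the encoded state has dimension $2^{\poly(n)}$, the number of encoded bits is $2^{2n}$, and \expref{Lemma}{lemma:qrac} forces the bias $p-1/2$ to be $O(2^{-n}\poly(n))=\negl(n)$.

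The main obstacle — and the place where genuine care is needed rather than a verbatim port of the \PRF argument — is the bijectivity constraint on $\pi$. In the \PRF scheme $\tilde f$ was a genuinely uniformly random function that the encoder could define entirely from independent shared-randomness blocks, so each query was answered locally and the simulation was perfect. Here a uniformly random permutation cannot be specified by independent per-point random strings: its values are correlated (injectivity), and worse, the encoder must answer $\Enc$ queries (which query $\pi$) \emph{and} $\Dec$ queries (which query $\pi^{-1}$) consistently on the same permutation. The clean fix is to have the shared randomness encode a uniformly random permutation $\pi$ directly (the \QRAC/shared-randomness model of \cite{ALMO08} places no efficiency or length restriction on the shared random variable, so this is allowed), define the bits $b_c$ to be (say) a fixed bit of $\pi^{-1}(c)$ possibly XORed with an extra shared mask to make the challenge bit uniform, and then have the encoder and decoder simulate \textsc{Game 1} exactly against that $\pi$. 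I would need to spell out how the challenge ciphertext is built so that, conditioned on $b_j$, $\algo A_2$'s target bit is $b_j$ under a uniformly random relabelling — one subtlety being that $c^*$ itself is a fixed function of $j$ (it is just $j$, viewed as a ciphertext) rather than a freshly sampled $\pi(m^*\|r^*)$, so one must check that the joint distribution of $(\ket\psi,m^*,c^*,\text{answers to }\algo A_2\text{'s queries})$ still matches the real \textsc{Game 1} distribution over the randomness of $\pi$, the challenge bit, and $x$. Once that bookkeeping is in place, the dimension count ($\ket\psi$ has $\poly(n)$ qubits, $\ell=\poly(n)$ precomputed query answers each $O(n)$ bits) and the appeal to \expref{Lemma}{lemma:qrac} go through unchanged, completing the proof. $\qed$
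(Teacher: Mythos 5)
Your high-level skeleton matches the paper's proof exactly: pass to \IndGameR, swap $P_k$ for a truly random permutation at cost $\negl(n)$ using the two-sided \QPRP oracle, and then bound the winning bias in \textsc{Game 1} by extracting a quantum random access code and invoking \expref{Lemma}{lemma:qrac}. The gap is in the one place you yourself flag as needing ``genuine care'': the \QRAC construction. You propose to index the encoded bits by the $2^{2n}$ ciphertexts and to define $b_c$ as a bit of $\pi^{-1}(c)$ (masked by shared randomness), with the permutation $\pi$ itself living in the shared randomness. That is not a valid \QRAC: the string to be encoded must be a free, uniformly random input to the encoder, not a function of the shared random variable --- otherwise the decoder could compute $b_j$ from the shared randomness alone, and \expref{Lemma}{lemma:qrac} would say nothing about your adversary. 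If you instead try to make the $b_c$ free inputs that prescribe a fixed bit of $\pi^{-1}(c)$ for every $c$, the constraints are unsatisfiable by any permutation for almost all input strings (exactly half of all $2n$-bit strings have a given first bit). The decoding step breaks as well: handing $\algo A_2$ the uniformly random ciphertext $j$ itself means $\pi^{-1}(j)$ almost never begins with $m^*$, so you are no longer simulating \textsc{Game 1} at all.

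The paper resolves exactly this by indexing the code by the $2^n$ \emph{randomness values} $r$ rather than by ciphertexts, and by applying the input-dependent shift on the \emph{plaintext side, before} the shared random permutation: $\tilde P(x\|r) := \vec y_{\,x\oplus b_r s\,\|\,r}$. Since $(x\|r)\mapsto(x\oplus b_r s\|r)$ is a bijection for every choice of the input bits, $\tilde P$ is always a permutation and is uniformly distributed when $\vec y$ is; both $\Enc$ and $\Dec$ queries can be answered consistently from $\tilde P$ and $\tilde P^{-1}$. The challenge handed to $\algo A_2$ is $c^*=\vec y_{m^*\|j}$ for a uniform index $j$ (playing the role of $r^*$), which equals $\tilde P(m^*\|j)$ when $b_j=0$ and $\tilde P((m^*\oplus s)\|j)$ --- an encryption of a uniformly random message --- when $b_j=1$, so the simulation of \textsc{Game 1} is exact and the average decoding probability equals $p$. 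With $N=2^n$ encoded bits and a $2^{\poly(n)}$-dimensional code state, \expref{Lemma}{lemma:qrac} then gives $p\le 1/2+\negl(n)$. You correctly name the obstacle, but the fix you sketch does not overcome it; the randomness-indexed, pre-composition construction is the missing idea.
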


\begin{proof}
We follow a similar proof strategy as with the \PRF scheme.
Fix a \QPT adversary \A against $\Pi := \PRPscheme[P] = (\KeyGen, \Enc, \Dec)$ and let $n$ denote the security parameter.
We have that $\Pi$ is \INDQCCA if and only if no \QPT adversary can win \IndGameR with non-negligible bias.
First, we show that a version of \IndGameR where we replace $P$ with a random permutation, described below as \textsc{Game 1}, is indistinguishable from \IndGameR, so that the winning probabilities cannot differ by a non-negligible amount.
We then prove that no adversary can win \textsc{Game 1} with non-negligible bias, by showing how any adversary for \textsc{Game 1} can be used to make a quantum random access code with the same bias.

\begin{description}
\item[Game 0:] In the pre-challenge phase, $\algo A_1$ gets access to oracles $\Enc_k$ and $\Dec_k$.
In the challenge phase, $\algo A_1$ outputs $m$ and its private data $\ket{\psi}$; a random bit $b \inrand \bit$ is sampled, and $\algo A_2$ is run on input $\ket{\psi}$ and a challenge ciphertext
$$
c^* := \begin{cases}
\Enc_k(m^*) = P_k(m^*|| r^*) & \mbox{if } b = 0 ,\\
\Enc_k(x)   = P_k(x  || r^*) & \mbox{if } b = 1,
\end{cases}
$$
where $r^* \rand \bit^n$ and $x$ is sampled uniformly at random. In the challenge phase, $\algo A_2$ has oracle access to $\Enc_k$ only and outputs a bit $b'$.
The outcome of the game is simply the bit~$\delta_{bb'}$.

\item[Game 1:] This is the same game as \textsc{Game 0}, except we now replace $P_k$ with a perfectly random permutation $\pi:\bit^{2n}\rightarrow\bit^{2n}$.
\end{description}

We show that for any adversary $\algo A$, the outcome when $\algo A$ plays \textsc{Game 0} is at most negligibly different from the outcome when $\algo A$ plays \textsc{Game 1}.
We construct a quantum oracle distinguisher $\algo D$ that distinguishes between $P_k$ and a perfectly random permutation, with distinguishing advantage
$$
\left|\Pr[1 \from \textsc{Game 0}] - \Pr[1 \from \textsc{Game 1}]\right|,
$$
which must then be negligible since $P_k$ is a \QPRP. Here, the distinguisher $\algo D$ receives quantum oracle access to a function $\varphi$, which is either $P_k$ for a random $k$, or a random permutation $\pi$, and proceeds by simulating $\algo A$ playing \IndGameR as follows:
\begin{enumerate}
\item Run $\algo A_1$, answering encryption queries using oracle calls to $\varphi$ in place of $P_k$,
where for a given input and via randomness $r$,
$$\Enc: \, \ket{m}\ket{c}
\mapsto \ket{m}\ket{c \oplus \varphi(m||r)}.
$$
Answer decryption queries using quantum oracle calls to $\tilde \varphi^{-1}$, a function that first computes $\varphi^{-1}$ but then (analogous to the \PRP construction) discards
the last $n$ bits of the pre-image corresponding to the randomness, i.e.
$$
\Dec: \, \ket{c}\ket{m} \mapsto \ket{c}\ket{m \oplus \tilde \varphi^{-1}(c)}.
$$
\item Simulate the challenge phase by sampling $b \inrand \bit$ and encrypting using a randomness $r^*$ together with a classical call to $\varphi$ in place of $P_k$;
run $\algo A_2$ and simulate encryption queries as before.
\item When $\algo A_2$ outputs $b'$, output $\delta_{bb'}$.
\end{enumerate}

It remains to show that no \QPT adversary can win \textsc{Game 1} with non-negligible probability.
To do this, we will again design a random access code from any adversary's strategy with success probability $p$,
and use the lower bound on the bias given in \expref{Lemma}{lemma:qrac}.
We will then construct a \QRAC with bias $\negl(n)$ from this adversary, and hence conclude that $p\leq \frac{1}{2}+\negl(n)$.

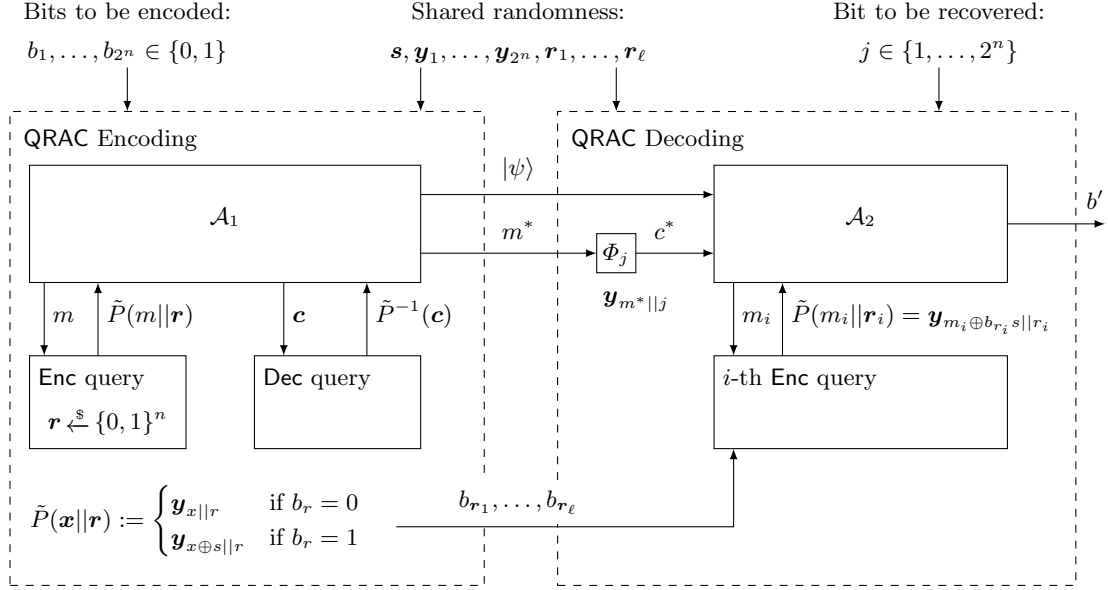
\begin{figure}[t]
\centering
\begin{tikzpicture}[> = latex, scale = 1.3,
  l/.style = {anchor = base west},
  every node/.style = {anchor = base}
]


\draw[->] (1,2.2)--(1,1.75);
\node at (1,2.7) {Bits to be encoded:};
\node at (1,2.3) {$b_1, \dotsc, b_{2^n} \in \bit$};

\draw[->] (9.3,2.2)--(9.3,1.75);
\node at (9.3,2.7) {Bit to be recovered:};
\node at (9.3,2.3) {$j \in \{1, \dotsc, 2^n\}$};

\draw[->] (4,2.2)--(4,1.75);
\draw[->] (6,2.2)--(6,1.75);
\node at (5,2.7) {Shared randomness:};
\node at (5,2.3) {$\vec s,\vec y_1,\dots,\vec y_{2^n},\vec r_1,\dots, \vec r_{\ell}$};


\draw[dashed] (-.2,1.75) rectangle (4.65,-3.1);
\node[l] at (-.15,1.4) {\QRAC Encoding};

\draw[dashed] (5.4,1.75) rectangle (10.7,-3.1);
\node[l] at (5.45,1.4) {\QRAC Decoding};


\draw (0,0) rectangle (4,1.2);
\node at (2,.6) {$\algo A_1$};

\draw (7,0) rectangle (10,1.2);
\node at (8.5,.6) {$\algo A_2$};

\draw[->] (10,.6)--(11,.6);
\node at (10.9,.75) {$b'$};


\draw[->] (.15,0)--(.15,.-.75);
\node[l] at (.15,-.4) {$m$};

\draw[<-] (.7,0)--(.7,.-.75);
\node[l] at (.7,-.4) {$\tilde P(m || \vec r)$};

\draw (0,-1.7) rectangle (1.6,-.75);
\node[l] at (0,-1.05) {$\Enc$ query};
\node at (.8,-1.5) {$\vec r \inrand \bit^n$};


\draw[->] (2.6,0)--(2.6,.-.75);
\node[l] at (2.6,-.4) {$\vec c$};

\draw[<-] (3.45,0)--(3.45,.-.75);
\node[l] at (3.45,-.4) {$\tilde P^{-1}(\vec c)$};

\draw (2.3,-1.7) rectangle (4.0,-.75);
\node[l] at (2.3,-1.05) {$\Dec$ query};


\draw[->] (7.2,0)--(7.2,-.75);
\node[l] at (7.2,-.4) {$m_i$};

\draw[<-] (7.7,0)--(7.7,-.75);
\node[l] at (7.7,-.4) {$\tilde P ( m_i || \vec r_i) = \vec y_{m_i \oplus b_{r_i} s||r_i}$};

\draw (7, -1.7) rectangle (10,-.75);
\node[l] at (7,-1.05) {$i$-th $\Enc$ query};


\draw[->] (3.75,-2.5)--(7.2,-2.5)--(7.2,-1.7);
\node[fill = white, inner sep = 4pt] at (5,-2.3) {$b_{\vec r_1}, \dotsc, b_{\vec r_\ell}$};

\draw[->] (4,.9)--(7,.9);
\node at (5,1.1) {$\ket{\psi}$};

\draw[->] (4,.3)--(5.8,.3);
\node at (5,.45) {$m^*$};


\draw (5.8,.1) rectangle (6.2,.5);
\node at (6.0,.23) {$\Phi_j$};

\draw[->] (6.2,.3)--(7,.3);
\node at (6.5, 0.45) {$c^*$};
\node at (6.2,-0.20) {$\vec y_{m^*||j}$};


\node at (1.7,-2.5) {$\tilde P(\vec x || \vec r) := \begin{cases}
\vec y_{ x||  r} & \mbox{if } b_r = 0 \\
\vec y_{x \oplus  s || r} & \mbox{if } b_r = 1
\end{cases}$};

\end{tikzpicture}
\caption{Quantum random access code construction for the \PRP scheme.}\label{fig:codeP}
\end{figure}

\paragraph{Construction of a quantum random access code.} Let $\algo A$ be a \QPT adversary with winning probability $p$ and let $\ell=\poly(n)$ be an upper bound
on the number of queries made by $\algo A_2$.
When constructing a \QRAC for the \PRP scheme, we shall also assume for simplicity that both the encoder and decoder share a random permutation (as part of the shared randomness).
According to the well known \textit{coupon collector's problem}, it is sufficient for the encoder and decoder to share around $N \ln(N)$ random strings on average,
where $N$ denotes the number of distinct random strings required to make up the desired permutation.
We define a quantum random access code as follows (see also \expref{Figure}{fig:codeP}).

\begin{description}
\item[Encoding.] Let $b_1,\dots,b_{2^{n}}\in \bit$ be the string to be encoded and let the shared randomness be given by a random string $\vec s$ together with a random permutation
$\vec y = \vec y_1,\dots, \vec y_{2^{2n}}\in \bit^{2n}$ and a set of random strings $\vec r_1,\dots, \vec r_{\ell} \in \bit^n$. Using $b_1,\dots,b_{2^{n}}$, we define a new random permutation by
letting $\tilde P(x||r) := \vec y_{x \oplus b_r s||r}$ ($\tilde P$ remains a permutation\footnote{Since $\tilde P(x||r) = \tilde P(x'|| r') \iff \vec y_{x \oplus b_r s||r} = \vec y_{x' \oplus b_{r'} s||r'} \iff (r=r') \land (x=x')$}).
Run $\algo A_1$ by answering encryption and decryption queries using $\tilde{P}$ in place of $\pi$ (for decryption, use $\tilde{P}^{-1}$ and discard the last $n$ bits).
Let $m^*$ and $\ket{\psi}$ be the outputs of $\algo A_1$. Then, output $\rho=(\ket{\psi},m^*, b_{r_1}, \dots, b_{r_l})$.
\item[Decoding.] Let $j\in\{1,\dots,2^{n}\}$ be the index of the bit to be decoded; so given $\rho$ as above, we will recover $b_j$ by making use of the
shared randomness defined above.
Upon receiving a query $j\in \{1,\dots,2^{n}\}$, run $\algo A_2$ with inputs $\ket{\psi}$ and $c^* = \vec y_{m^*||j}$.
Return the bit $b'$ output by $\algo A_2$.
\end{description}

\paragraph{Average bias of the code.} We claim that the average probability of decoding correctly, taken over all choices of $b_1,\dots,b_{2^n}\in\bit$ and $j\in\{1,\dots,2^n\}$, is exactly $p$, the success probability of $\algo A$. To see this,
first note that from $\algo A$'s perspective, this is exactly \textsc{Game 1}: the function $\tilde{P}$ is a uniformly random permutation, and the queries are responded to just as in \textsc{Game 1}.
Further, note that if $b_j=0$, the challenge amounts to $\tilde P(m^*||j) = \vec y_{m^*||j}$, so the correct guess for $\algo A_2$ would be $0$, and if
$b_j=1$, then $\vec y_{x||j}$ is an encryption of a uniformly random string $\vec x =m^*\oplus \vec s$, so the correct guess for $\algo A_2$ would be 1.

Therefore, the average bias of the code is $p-1/2$. We now proceed with a similar analysis as with the \PRF scheme. Note that $\rho$ has dimension at most $2^{\poly(n)}$, since $\ket{\psi}$ must be a $\poly(n)$-qubit state ($\algo A_1$ only runs for $\poly(n)$ time), and $\ell$, the number of queries made by $\algo A_2$ must be $\poly(n)$, since $\algo A_2$ only runs for $\poly(n)$ time.
As this code encodes $2^n$ bits into a state of dimension $2^{\poly(n)}$, by \expref{Lemma}{lemma:qrac}, the bias is $O(2^{-n/2}\poly(n))=\negl(n)$, so $p\leq \frac{1}{2}+\negl(n)$. \qed
\end{proof}

\newpage

\section{Quantum algorithm for linear rounding functions}\label{sec:rounding}

\begin{figure}
\centering
\begin{tikzpicture}[l/.style={rotate=-90,anchor=west},scale=.7]

\filldraw (0,0) circle (.1);
\node[l] at (0,-.2) {$a$};

\filldraw (1,0) circle (.1);
\node[l] at (1,-.2) {$a+1$};

\node at (2,0) {$\dots$};

\filldraw (3,0) circle (.1);
\node[l] at (3,-.2) {$a+b-1$};

\draw (-.25,-.25) -- (-.25,.25) -- (3.25,.25) -- (3.25,-.25) -- (-.25,-.25);

\draw (-.25,-3) to[out=-90,in=180] (0,-3.25) -- (1.25,-3.25) to[out=0,in=90] (1.5,-3.5) to[out=90,in=180] (1.75,-3.25) -- (3,-3.25) to[out=0,in=-90] (3.25,-3);
\node at (1.5,-3.75) {$b$};

\filldraw (4,0) circle (.1);
\node[l] at (4,-.2) {$a+b$};

\node at (5.5,0) {$\dots$};

\filldraw (7,0) circle (.1);
\node[l] at (7,-.2) {$a+2b-1$};

\draw (3.75,.25) -- (3.75,-.25) -- (7.25,-.25) -- (7.25,.25) -- (3.75,.25);

\draw (3.75,-3) to[out=-90,in=180] (4,-3.25) -- (5.25,-3.25) to[out=0,in=90] (5.5,-3.5) to[out=90,in=180] (5.75,-3.25) -- (7,-3.25) to[out=0,in=-90] (7.25,-3);
\node at (5.5,-3.75) {$b$};

\node at (8.25,0) {$\dots$};

\filldraw (9.5,0) circle (.1);
\node[l] at (9.5,-.2) {$a+(c-2)b$};

\node at (11,0) {$\dots$};

\filldraw (12.5,0) circle (.1);
\node[l] at (12.5,-.2) {$a+(c-1)b-1$};

\draw (9.25,-.25) -- (9.25,.25) -- (12.75,.25) -- (12.75,-.25) -- (9.25,-.25);

\draw (9.25,-3) to[out=-90,in=180] (9.5,-3.25) -- (10.75,-3.25) to[out=0,in=90] (11,-3.5) to[out=90,in=180] (11.25,-3.25) -- (12.5,-3.25) to[out=0,in=-90] (12.75,-3);
\node at (11,-3.75) {$b$};

\filldraw (13.5,0) circle (.1);
\node[l] at (13.5,-.2) {$a+(c-1)b$};

\node at (14.5,0) {$\dots$};

\filldraw (15.5,0) circle (.1);
\node[l] at (15.5,-.2) {$a-1$};

\draw (13.25,.25) -- (13.25,-.25) -- (15.75,-.25) -- (15.75,.25) -- (13.25,.25);

\draw (13.25,-3) to[out=-90,in=180] (13.5,-3.25) -- (14.25,-3.25) to[out=0,in=90] (14.5,-3.5) to[out=90,in=180] (14.75,-3.25) -- (15.5,-3.25) to[out=0,in=-90] (15.75,-3);
\node at (14.5,-3.75) {$b-(cb-q)$};

\node at (1.5,.5) {$I_0(a,b)$};
\node at (5.5,.5) {$I_1(a,b)$};
\node at (11,.5) {$I_{c-2}(a,b)$};
\node at (14.5,.5) {$I_{c-1}(a,b)$};

\end{tikzpicture}
\caption{Dividing $\Z_q$ into $c=\lceil q/b\rceil$ blocks, starting from $a$. The first $c-1$ blocks, labelled $I_0(a,b),\dots,I_{c-2}(a,b)$, have size $b$ and the last, labelled $I_{c-1}(a,b)$, contains the remaining $b-(cb-q)\leq b$ elements of $\Z_q$.}\label{fig:blocks}
\end{figure}
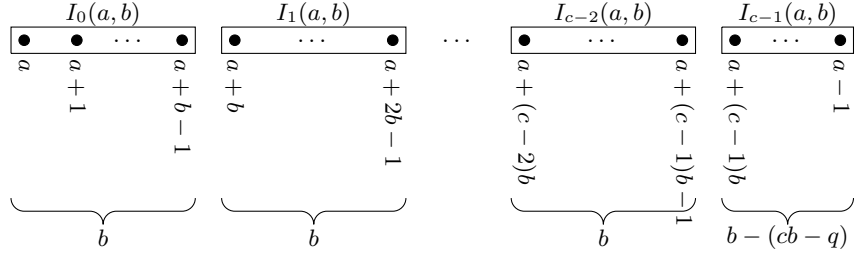

In this section, we analyze the performance of the Bernstein-Vazirani algorithm~\cite{BV97} with a modified version of the oracle. While the original oracle computes the inner product modulo $q$, our version only gives partial information about it by rounding its value to one of $\lceil q/b\rceil$ blocks of size $b$, for some $b\in\{1,\dots,q-1\}$ (if $b$ does not divide $q$, one of the blocks will have size $<b$).

\begin{definition}\label{def:LRF}
Let $n \geq 1$ be an integer and $q \geq 2$ be an integer modulus. Let $a\in\Z_q$, $b\in\Z_q\setminus\set{0}$ and $c := \ceil{q/b}$.
We partition $\Z_q$ into $c$ disjoint blocks (most of them of size $b$) starting from $a$ as follows (see \expref{Figure}{fig:blocks}):
\begin{equation*}
  I_v(a,b) :=
  \begin{cases}
    \set{a+vb,\dotsc,a+vb+b-1} & \text{if $v \in \set{0, \dotsc, c-2}$}, \\
    \set{a+vb,\dotsc,a+q-1}      & \text{if $v = c-1$}.
  \end{cases}
\end{equation*}
Based on this partition, we define a family $\Round_{\vec k,a,b}: \Z_q^n \longrightarrow \Z_c$ of keyed linear rounding functions, with key $\vec k \in \Z_q^n$, as follows:
\begin{equation*}
  \Round_{\vec{k},a,b}(\vec x) :=
  v \text{ if } \ip{\vec x, \vec k}\in I_v(a,b).
\end{equation*}
\end{definition}

\IncMargin{1em}
\begin{algorithm}[H]
\SetKwData{Left}{left}\SetKwData{This}{this}\SetKwData{Up}{up}
\SetKwInOut{Input}{Input}\SetKwInOut{Output}{Output}\SetKwInOut{Params}{Parameters}
\Params{$n$, $q$, $b\in \{1,\dots,q-1\}$, $c=\lceil q/b\rceil$.}
\Input{Quantum oracle $U_{\Round} : \ket{\vec x} \ket{z} \mapsto \ket{\vec x} \ket{z + \Round_{\vec{k},a,b}(\vec x) \pmod c}$ where $\vec x \in \Z_q^n$, $z \in \Z_c$ and $\Round_{\vec{k},a,b}$ is the rounded inner product function for some unknown $\vec k \in \Z_q^{n}$ and $a \in \Z_q$.
}
\Output{String $\tilde{\vec{k}} \in \Z_q^n$ such that $\tilde{\vec{k}} = \vec{k}$ with high probability.}
\BlankLine
\BlankLine
\begin{enumerate}
\item Prepare the uniform superposition and append $\frac{1}{\sqrt{c}}\sum_{z=0}^{c-1}\omega_c^z\ket{z}$ where $\omega_c = e^{2\pi i/c}$:
$\displaystyle\frac{1}{\sqrt{q^{n}}}\sum_{\vec x \in \Z_q^{n}} \ket{\vec x}\otimes \frac{1}{\sqrt{c}}\sum_{z=0}^{c-1}\omega_c^z\ket{z}.$ 
\item Query the oracle $U_{\Round}$ for $\Round_{\vec{k},a,b}$ to obtain
$\displaystyle\frac{1}{\sqrt{q^{n}}}\sum_{\vec x \in \Z_q^{n}} \omega_c^{-\Round_{\vec{k},a,b}(\vec x)} \ket{\vec x}\otimes \frac{1}{\sqrt{c}}\sum_{z=0}^{c-1}\omega_c^z\ket{z}.$ 
\item Discard the last register and apply the quantum Fourier transform $\QFT_{\Z_q}^{\otimes n}$.
\item Measure in the computational basis and output the outcome $\tilde{\vec{k}}$.
\end{enumerate}
\caption{Bernstein-Vazirani for linear rounding functions}\label{alg:key_rec_rounding}
\end{algorithm}
\DecMargin{1em}

The following theorem shows that the modulo-$q$ variant of the Bernstein-Vazirani algorithm (\expref{Algorithm}{alg:key_rec_rounding}) can recover $\vec k$ with constant probability of success by using only a single quantum query to $\Round_{\vec k, a,b}$.
\begin{theorem}
\label{thm:LWE-dec}
Let $U_{\Round}$ be the quantum oracle for the linear rounding function $\Round_{\vec{k},a,b}$ with modulus $q \geq 2$, block size $b\in\{1,\dots,q-1\}$, and an unknown $a\in\{0,\dots,q-1\}$, and unknown key $\vec k \in \Z_q^{n}$ such that $\vec k$ has at least one entry that is a unit modulo $q$. Let $c=\lceil q/b\rceil$ and $d=cb-q$. By making one query to the oracle $U_{\Round}$, \expref{Algorithm}{alg:key_rec_rounding} recovers the key $\vec k$ with probability at least $4/\pi^2-O(d/q)$.
\end{theorem}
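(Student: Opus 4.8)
The plan is to push the state of \expref{Algorithm}{alg:key_rec_rounding} through all four steps, write the probability of measuring the correct key as the squared modulus of a single amplitude, collapse that amplitude to a one-dimensional exponential sum over $\Z_q$ using the unit-entry hypothesis, and evaluate the sum in closed form.

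First I would check the state evolution. Since $\tfrac{1}{\sqrt c}\sum_{z=0}^{c-1}\omega_c^{z}\ket z$ is an eigenvector of the cyclic shift $\ket z\mapsto\ket{z+1}$ on $\Z_c$ with eigenvalue $\omega_c^{-1}$, the query in Step~2 acts by phase kickback, leaving the ancilla unentangled and producing $\tfrac{1}{\sqrt{q^n}}\sum_{\vec x\in\Z_q^n}\omega_c^{-\Round_{\vec k,a,b}(\vec x)}\ket{\vec x}$ in the first register. After $\QFT_{\Z_q}^{\otimes n}$ and measurement, the probability that $\tilde{\vec k}=\vec k$ equals $|\alpha_{\vec k}|^2$, where
\[
\alpha_{\vec k}=\frac{1}{q^n}\sum_{\vec x\in\Z_q^n}\omega_c^{-\Round_{\vec k,a,b}(\vec x)}\,\omega_q^{\ip{\vec x,\vec k}};
\]
one must track the $\QFT$ sign convention to be sure the relevant outcome is $\vec k$ and not $-\vec k$. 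The decisive observation is that $\Round_{\vec k,a,b}(\vec x)$ depends on $\vec x$ only through $m:=\ip{\vec x,\vec k}$, and since some entry of $\vec k$ is a unit modulo $q$, the fibre $\{\vec x:\ip{\vec x,\vec k}=m\}$ has size exactly $q^{n-1}$ for every $m\in\Z_q$. Hence the $n$-dimensional sum collapses, and after shifting the summation variable by the unknown $a$ (which only contributes a global phase $\omega_q^{a}$) and noting $\Round_{\vec k,a,b}$ reads off $\floor{\cdot/b}$ of the shifted value,
\[
\alpha_{\vec k}=\frac{\omega_q^{a}}{q}\sum_{m=0}^{q-1}\omega_c^{-\floor{m/b}}\,\omega_q^{m}.
\]

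Next I would evaluate this sum by writing $m=b\ell+j$ and grouping into the $c$ blocks of \expref{Definition}{def:LRF}: the first $c-1$ blocks each contribute $S_b:=\sum_{j=0}^{b-1}\omega_q^{j}=\tfrac{\omega_q^{b}-1}{\omega_q-1}$, the last contributes $S_{b-d}$, and the block index factors out as $(\omega_c^{-1}\omega_q^{b})^{\ell}=\beta^{\ell}$ with $\beta:=e^{2\pi i d/(cq)}$, giving the exact formula $\alpha_{\vec k}=\tfrac{\omega_q^{a}}{q}\bigl(S_b\sum_{\ell=0}^{c-2}\beta^{\ell}+\beta^{c-1}S_{b-d}\bigr)$. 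When $d=0$ one has $\beta=1$ and $b=q/c$, the sum telescopes to $cS_b$, and $|\alpha_{\vec k}|=\tfrac{c}{q}\cdot\tfrac{\sin(\pi/c)}{\sin(\pi/q)}\ge\tfrac{c\sin(\pi/c)}{\pi}\ge\tfrac{2}{\pi}$, using $\sin(\pi/q)\le\pi/q$ and the elementary fact that $x\mapsto x\sin(\pi/x)$ is increasing on $[2,\infty)$ with value $2$ at $x=2$ (so $c\sin(\pi/c)\ge 2$ for every integer $c\ge2$); this already gives success probability $\ge 4/\pi^{2}$. For general $d$ I would take $\tfrac{c\,\omega_q^{a}S_b}{q}$ as a reference: a perturbation estimate using $|\beta^{\ell}-1|\le\ell|\beta-1|$, $|\beta-1|=2\sin(\pi d/(cq))\le 2\pi d/(cq)$, $|S_{b-d}-S_b|\le d$, and $|S_b|=\tfrac{\sin(\pi b/q)}{\sin(\pi/q)}=O(q/c)$ (valid since $b/q<1/(c-1)$ by definition of $c=\ceil{q/b}$) gives $\bigl|\alpha_{\vec k}-\tfrac{c\omega_q^{a}S_b}{q}\bigr|=O(d/q)$; while writing $b/q=1/c+d/(cq)$ and expanding the sine gives $\tfrac{c|S_b|}{q}\ge\tfrac{c\sin(\pi/c)}{\pi}-O(d/q)\ge\tfrac{2}{\pi}-O(d/q)$. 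The triangle inequality then yields $|\alpha_{\vec k}|\ge\tfrac{2}{\pi}-O(d/q)$, hence $|\alpha_{\vec k}|^{2}\ge\tfrac{4}{\pi^{2}}-O(d/q)$.

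The hard part will be the general-$d$ error analysis: the crude bounds are too weak because $c$ can be as large as $q$, so the $O(c^{2})$ blow-up coming from $\sum_{\ell}(\beta^{\ell}-1)$ and the magnitude of $S_b$ must both be controlled by systematically exploiting $b\approx q/c$ (equivalently $d<b$ and $b/q\in[1/c,1/(c-1))$), in order to arrive at an $O(d/q)$ error rather than an $O(1)$ one. A second, easy-to-get-wrong point is fixing the $\QFT$ and inner-product conventions so the measured string is $\vec k$ rather than its negative; everything else — state preparation, the single oracle query, and $\QFT_{\Z_q}^{\otimes n}$ — is efficient in $n$ and $\log q$ and accounts for the $O(n)$ running time.
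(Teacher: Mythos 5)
Your proposal is correct and follows essentially the same route as the paper's proof: phase kickback against the $\omega_c$-eigenstate, collapse of the $n$-dimensional sum to a one-dimensional exponential sum over $\Z_q$ via the unit-entry/fibre argument, block decomposition with the short final block handled as an $O(d)$ perturbation, and elementary sine estimates. The only (cosmetic) differences are in the last step, where you bound the block-index sum by perturbing around $cS_b$ via $\abs{\beta^\ell-1}\leq\ell\abs{\beta-1}$ and invoke monotonicity of $x\sin(\pi/x)$, whereas the paper lower-bounds $\bigl|\sum_{v}\omega_q^{vd/c}\bigr|$ by its real part and splits the bound on $\sin(b\pi/q)/\sin(\pi/q)$ into the cases $b/q\leq 1/2$ and $b/q>1/2$; both yield the same $\frac{2}{\pi}-O(d/q)$ amplitude bound.
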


\begin{proof}

For an integer $m$, let $\omega_m=e^{2\pi i/m}$.
Several times in this proof, we will make use of the identity $\sum_{z=0}^{\ell-1}\omega_m^{rz}=\omega_m^{r(\ell-1)/2}\Bigl(\frac{\sin(\ell r \pi/m)}{\sin(r\pi/m)}\Bigr)$.

Let $c=\lceil q/b\rceil$. Throughout this proof, let $\Round(\vec x)=\Round_{\vec{k},a,b}(\vec{x})$.
By querying with $\frac{1}{\sqrt{c}}\sum_{z=0}^{c-1}\omega_c^z\ket{z}$ in the second register, we are using the standard phase kickback technique, which puts the output of the oracle directly into the phase:
\begin{eqnarray*}
\ket{\vec{x}}\frac{1}{\sqrt{c}}\sum_{z=0}^{c-1}\omega_c^z\ket{z}
& \overset{U_{\Round}}{\longmapsto}&
\ket{\vec x}\frac{1}{\sqrt{c}}\sum_{z=0}^{c-1}\omega_c^z\ket{z+\Round(\vec x)\pmod c}\\
&=& \ket{\vec x}\frac{1}{\sqrt{c}}\sum_{z=0}^{c-1}\omega_c^{z-\Round(\vec{x})}\ket{z} = \omega_{c}^{-\Round(\vec x)}\ket{\vec x}\frac{1}{\sqrt{c}}\sum_{z=0}^{c-1}\omega_c^z\ket{z}.
\end{eqnarray*}
Thus, after querying the uniform superposition over the cipherspace with $\frac{1}{\sqrt{c}}\sum_{z=0}^{c-1}\omega_c^z\ket{z}$ in the second register, we arrive at the state
\begin{equation*}
\frac{1}{\sqrt{q^{n}}} \sum_{\vec x \in \Z_q^{n}} \omega_c^{-\Round(\vec x)} \ket{\vec x}\frac{1}{\sqrt{c}}\sum_{z=0}^{c-1}\omega_c^z\ket{z}.
\end{equation*}
Note that $\omega_c = \omega_q^{q/c}$.
If we discard the last register and apply $\QFT_{\Z_q}^{\otimes n}$, we get
\begin{equation*}
  \ket{\psi} = \frac{1}{q^{n}} \sum_{\vec y \in \Z_q^{n}}
  \sum_{\vec x\in\Z_q^n}\omega_q^{-(q/c)\Round(\vec x) + \ip{\vec x, \vec y}} \ket{\vec y}.
\end{equation*}
We then perform a complete measurement in the computational basis. The probability of obtaining the key $\vec k$ is given by
\begin{equation}
\abs{\braket{\vec k}{\psi}}^2
= \abs*{ \frac{1}{q^{n}} \sum_{\vec x \in \Z_q^{n}} \omega_q^{-\frac{q}{c}\Round(\vec x) + \ip{\vec x, \vec k}} }^2
\!\!\!\!= \abs*{\frac{1}{q^n}\sum_{v=0}^{c-1}\omega_q^{-\frac{q}{c}v}\sum_{\vec x\in\Z_q^n:\Round(\vec x)=v}\omega_q^{\ip{\vec x,\vec k}}}^2.\label{eq:pk}
\end{equation}
We are assuming that $\vec{k}$ has at least one entry that is a unit modulo $q$. For simplicity, suppose that entry is $k_n$. Let $\vec{k}_{1:n-1}$ denote the first $n-1$ entries of $\vec k$. Then, for any $v\in\{0,\dots,c-2\}$:
\begin{eqnarray}
\sum_{\vec{x}\in\Z_q^n:\Round(\vec x)=v}\omega_q^{\ip{\vec x,\vec k}} &=& \sum_{\vec x\in\Z_q^n: \ip{\vec x,\vec k}\in I_v(a,b)}\omega_q^{\ip{\vec x,\vec k}}\nonumber\\
&=& \sum_{\vec{y}\in\Z_q^{n-1}}\omega_q^{\ip{\vec y,\vec k_{1:n-1}}}\sum_{\substack{x_n\in \Z_q: \\ x_nk_n\in I_v(a-\ip{\vec y,\vec k_{1:n-1}},b)}}\omega_q^{x_nk_n}.\label{eq:LRFv}
\end{eqnarray}
(Recall the definition of $I_v(a,b)$ from \expref{Definition}{def:LRF}).
Since $k_n$ is a unit, for each $z\in I_v(a-\ip{\vec y,\vec k_{1:n-1}})$, there is a unique $x_n\in\Z_q$ such that $x_nk_n=z$. Thus, for a fixed $\vec y\in\Z_q^{n-1}$, letting $a'=a-\ip{\vec y,\vec k_{1:n-1}}$, we have:
$$
\sum_{\substack{x_n\in \Z_q: x_nk_n\in I_v(a', b)}}\omega_q^{x_nk_n} = \sum_{z=a'+vb}^{a'+(v+1)b-1}\omega_q^z = \omega_q^{a'+vb}\sum_{z=0}^{b-1}\omega_q^z,
$$
which we can plug into \eqref{eq:LRFv} to get:
\begin{equation}
\sum_{\substack{\vec{x}\in\Z_q^n:\\\Round(\vec x)=v}}\!\!\!\omega_q^{\ip{\vec x,\vec k}}
=\!\!\!
\sum_{\vec y\in\Z_q^{n-1}}\omega_q^{\ip{\vec y,\vec k_{1:n-1}}} \omega_q^{a-\ip{\vec y,\vec k_{1:n-1}}+vb}\sum_{z=0}^{b-1}\omega_q^z = q^{n-1}\omega_q^{a+vb}\sum_{z=0}^{b-1}\omega_q^z.\label{eq:ltcase}
\end{equation}
We can perform a similar analysis for the remaining case when $v=c-1$.
Recall that $d=cb-q\geq0$ so $vb = cb-b = d+q-b = -(b-d) \pmod q$ and we get
\begin{equation}
\sum_{\vec x\in\Z_q^n:\Round(\vec x)=c-1}\omega_q^{\ip{\vec x,\vec k}}=q^{n-1}\omega_q^{a-(b-d)}\sum_{z=0}^{b-d-1}\omega_q^z.\label{eq:eqcase}
\end{equation}
This is slightly different from the $v<c-1$ case, shown in \eqref{eq:ltcase}, but very similar.
If we substitute $v=c-1$ in \eqref{eq:ltcase} and compare it to \eqref{eq:eqcase}, we get
\begin{eqnarray}
&& \abs*{q^{n-1}\omega_q^{a-(b-d)}\sum_{z=0}^{b-d-1}\omega_q^z - q^{n-1}\omega_q^{a-(b-d)}\sum_{z=0}^{b-1}\omega_q^z}\nonumber\\
&=& q^{n-1} \abs*{\sum_{z=b-d}^{b-1}\omega_q^z}
=q^{n-1}\abs*{\sum_{z=0}^{d-1}\omega_q^z}
=q^{n-1}\abs*{\frac{\sin(\pi d/q)}{\sin(\pi/q)}}\nonumber\\
&\leq & q^{n-1} \frac{\pi d/q}{2/q} = q^{n-1}\frac{\pi}{2}d.
\label{eq:approxcase}
\end{eqnarray}
Above, we have used the facts $\sin x\leq x$, and $\abs{\sin x} \geq 2x/\pi$ when $|x|\leq \pi/2$. Now, plugging \eqref{eq:ltcase} into \eqref{eq:pk} for all the $v<c-1$ terms, and using \eqref{eq:approxcase} and the triangle inequality for the $v=c-1$ term, we get:
\begin{eqnarray}
|\braket{\vec k}{\psi}| &\geq & \abs*{\frac{1}{q^n}\sum_{v=0}^{c-1}\omega_q^{-qv/c}\cdot q^{n-1}\omega_q^{a+vb}\sum_{z=0}^{b-1}\omega_q^z } - \abs*{\frac{1}{q^n}\omega_q^{-q(c-1)/c} \cdot q^{n-1}\frac{\pi}{2}d } \nonumber\\
&=& \frac{1}{q}\abs*{\sum_{v=0}^{c-1}\omega_q^{v(b-q/c)}\frac{\sin(b\pi/q)}{\sin(\pi/q)}} - \frac{\pi}{2}\frac{d}{q} \nonumber\\
&=& \frac{1}{q}\frac{\sin(b\pi/q)}{\sin(\pi/q)}\abs*{ \sum_{v=0}^{c-1}\omega_q^{v(b-q/c)} }-\frac{\pi}{2}\frac{d}{q}.\label{eq:sqrtpk}
\end{eqnarray}
Since $b-q/c=d/c$, we can bound the sum as follows:
\begin{eqnarray}
\abs*{\sum_{v=0}^{c-1}\omega_q^{v(b-q/c)}}
=\abs*{\sum_{v=0}^{c-1}\omega_q^{vd/c}}
&\geq& \abs*{\sum_{v=0}^{c-1}\cos\left(\frac{2\pi}{q}\frac{vd}{c}\right)}\nonumber\\
&\geq& \abs*{\sum_{v=0}^{c-1}\cos\left(\frac{2\pi}{q}d\right)}=\abs*{c\cos\left(\frac{2\pi d}{q}\right)}\label{eq:real1}\\
&\geq& c\sqrt{1-(2\pi d/q)^2}\label{eq:plug1}.
\end{eqnarray}
To get the inequality \eqref{eq:real1}, we used $0\leq v\leq c$ and the assumption that $d/q\leq 1/4$ (if $d/q>1/4$, the claim of the theorem is trivial), which implies that $\frac{2\pi v}{c}\frac{d}{q}\leq \frac{\pi}{2}$.
The last inequality follows from $\abs{\cos x} \geq \sqrt{1-x^2}$.

Next, we bound $\frac{\sin(b\pi/q)}{\sin(\pi/q)}$. When $b/q\leq 1/2$, $b\pi/q\leq \pi/2$, so we have $\sin(b\pi/q)\geq 2b/q$. We also have $\sin(\pi/q)\leq \pi/q$. Thus,
$$\frac{\sin(b\pi/q)}{\sin(\pi/q)}\geq \frac{2b}{\pi}.$$
On the other hand, when $b/q>1/2$, we must have $c=2$ and $b=\frac{q+d}{2}$. In that case
$$\sin(b\pi/q)=\sin\frac{\pi(q+d)}{2q}=\sin\left(\frac{\pi}{2}+\frac{\pi}{2}\frac{d}{q}\right)=\cos\frac{\pi d}{2q}\geq \sqrt{1-\left(\frac{\pi d}{2q}\right)^2}.$$
Since $\sin(\pi/q) \leq \pi/q$ and $q \geq 2b$,
\begin{equation*}
\frac{\sin(b\pi/q)}{\sin(\pi/q)}
\geq \frac{\sqrt{1-\left(\frac{\pi d}{2q}\right)^2}}{\pi/q}
\geq \frac{2b}{\pi}\sqrt{1-O(d/q)}.
\end{equation*}
Thus, in both cases, $\frac{\sin(b\pi/q)}{\sin(\pi/q)}\geq \frac{2b}{\pi}\sqrt{1-O(d/q)}$. Plugging this and \eqref{eq:plug1} into \eqref{eq:sqrtpk}, we get:
\begin{eqnarray}
|\ip{\vec{k},\psi}| &\geq & \frac{1}{q}
\cdot \frac{2b}{\pi}\sqrt{1-O(d/q)} \cdot c\sqrt{1-O(d/q)} - O(d/q)\nonumber\\
&=& \frac{2}{\pi}\frac{bc}{q} - O(d/q)
=\frac{2}{\pi}\frac{q+d}{q} - O(d/q) = \frac{2}{\pi} - O(d/q),\nonumber
\label{eq:sqrtpk2}
\end{eqnarray}
completing the proof.\qed 
\end{proof}

\section{Key recovery against \LWE}\label{sec:LWE}

In this section, we consider various \LWE-based encryption schemes and show using \expref{Theorem}{thm:LWE-dec} that the decryption key can be efficiently recovered using a single quantum decryption query (\expref{Section}{sec:symmetric}, \expref{Section}{sec:public}, and \expref{Section}{sec:public-ring}). Then, in \expref{Section}{sec:randomness}, we show that a single quantum \emph{encryption} query can be used to recover the secret key in a symmetric-key version of $\LWE$, as long as the querying algorithm also has control over part of the randomness used in the encryption procedure.

\subsection{Key recovery via one decryption query in symmetric-key \LWE}\label{sec:symmetric}

Recall the following standard construction of an \INDCPA symmetric-key encryption scheme based on the \LWE assumption~\cite{Regev05}.

\begin{construction}[\LWESKES\cite{Regev05}]\label{con:LWESKES}
Let $n \geq 1$ be an integer, let $q \geq 2$ be an integer modulus and let $\chi$ be a discrete and symmetric error distribution. Then, the symmetric-key encryption scheme $\LWESKES(n,q,\chi) = (\KeyGen, \Enc, \Dec)$ is defined as follows:
\begin{enumerate}
\item \phase{\KeyGen} output $\vec k \rand \Z_q^n$;
\item \phase{$\Enc_{\vec k}$} to encrypt $b \in \bit$, sample $\vec a \rand \Z_q^n$, $e \randchi \Z_q$ and output
$(\vec a, \ip{\vec a, \vec k} + b \floor*{\frac{q}{2}} + e)$;
\item \phase{$\Dec_{\vec k}$} to decrypt $(\vec a, c)$, output $0$ if $|c - \ip{\vec a, \vec k}| \leq \floor*{\frac{q}{4}}$, else output $1$.
\end{enumerate}
\end{construction}


As a corollary of \expref{Theorem}{thm:LWE-dec}, an adversary that is granted a single quantum decryption query can recover the key with probability at least $4/\pi^2-o(1)$:
\begin{corollary}
\label{thm:LWESKE-dec}
There is a quantum algorithm that makes one quantum query to $\LWESKES.\Dec_{\vec k}$ and recovers the entire key $\vec k$ with probability at least $4/\pi^2 - o(1)$.
\end{corollary}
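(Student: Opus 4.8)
The plan is to reduce \expref{Corollary}{thm:LWESKE-dec} directly to \expref{Theorem}{thm:LWE-dec} by exhibiting the \LWESKES decryption function as (a slight reparametrization of) a linear rounding function in the sense of \expref{Definition}{def:LRF}. First I would observe that the decryption oracle $\Dec_{\vec k}$ takes as input a pair $(\vec a, c) \in \Z_q^{n+1}$ and outputs $0$ if $|c - \ip{\vec a, \vec k}| \leq \floor{q/4}$ and $1$ otherwise. Define the extended key $\vec k' := (\vec k, -1) \in \Z_q^{n+1}$ and the extended input $\vec x := (\vec a, c) \in \Z_q^{n+1}$, so that $\ip{\vec x, \vec k'} = \ip{\vec a, \vec k} - c \pmod q$. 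Then $\Dec_{\vec k}(\vec x)$ depends only on $\ip{\vec x, \vec k'}$: it outputs $0$ precisely when $\ip{\vec x, \vec k'}$ (equivalently $c - \ip{\vec a, \vec k}$, which is the negation) lies in the interval of radius $\floor{q/4}$ around $0$, i.e.\ in $\set{-\floor{q/4}, \dotsc, \floor{q/4}}$, and $1$ otherwise. This is exactly a two-block ($c=2$) linear rounding function: take $b = 2\floor{q/4}+1$ (the block size, i.e.\ the number of residues mapping to output $0$), take $a = -\floor{q/4} \bmod q$ so that $I_0(a,b) = \set{a, \dotsc, a+b-1}$ is precisely the "small" interval, and note $\Round_{\vec k', a, b}$ then agrees with $\Dec_{\vec k}$ up to the harmless relabeling caused by the sign on the last coordinate (the algorithm recovers $\vec k'$ and hence $\vec k$; one just reads off the first $n$ coordinates). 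One subtlety: the algorithm of \expref{Theorem}{thm:LWE-dec} is stated for oracles of the form $\ket{\vec x}\ket{z} \mapsto \ket{\vec x}\ket{z + \Round(\vec x)}$ over $\Z_c = \Z_2$, whereas $\Dec_{\vec k}$ is given as the standard XOR oracle $\ket{\vec x}\ket{m}\mapsto\ket{\vec x}\ket{m \oplus \Dec_{\vec k}(\vec x)}$; for $c = 2$ these coincide, so no conversion is needed.

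Next I would verify the hypotheses of \expref{Theorem}{thm:LWE-dec}. The extended key $\vec k' = (\vec k, -1)$ has its last entry equal to $-1$, which is a unit modulo $q$ for every $q \geq 2$, so the "at least one unit entry" requirement is satisfied \emph{unconditionally} — there is no need to argue that $\vec k$ itself has a unit entry. The block size is $b = 2\floor{q/4}+1$; a short case check on $q \bmod 4$ shows $b$ is roughly $q/2$, so $b \in \set{1,\dots,q-1}$ as required, $c = \ceil{q/b} = 2$, and the defect $d = cb - q = 2(2\floor{q/4}+1) - q \in \set{0,1,2,3\}$ depending on $q \bmod 4$ — in any case $d = O(1)$, hence $d/q = O(1/q) = o(1)$. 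Plugging into the conclusion of \expref{Theorem}{thm:LWE-dec} gives success probability at least $4/\pi^2 - O(d/q) = 4/\pi^2 - o(1)$, which is exactly the claimed bound.

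Finally I would address the running time and the oracle-query count: \expref{Algorithm}{alg:key_rec_rounding} prepares a uniform superposition over $\Z_q^{n+1}$, makes one query, applies $\QFT_{\Z_q}^{\otimes(n+1)}$ (implementable in time $\poly(n, \log q)$ by Kitaev/Hales–Hallgren as recalled in \expref{Section}{sec:prelim}), and measures; since $n+1 = O(n)$ this is $O(n)$ elementary quantum operations up to $\poly\log q$ factors, and it uses exactly one quantum query to $\Dec_{\vec k}$. I do not expect any serious obstacle here — the entire content is the identification of $\Dec_{\vec k}$ with a linear rounding function and the bookkeeping on $a$, $b$, $c$, $d$; the one thing to be careful about is getting the off-by-one in $b = 2\floor{q/4}+1$ right (so that the block $I_0$ contains exactly the $2\floor{q/4}+1$ residues on which decryption outputs $0$, namely $\set{-\floor{q/4},\dots,\floor{q/4}}$) and confirming that, over the relevant residues of $q$ mod $4$, the resulting $d$ stays $O(1)$.
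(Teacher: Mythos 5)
Your overall strategy is exactly the paper's: view $\Dec_{\vec k}$ as a linear rounding function in an extended key $\vec k'$ whose last entry is a unit, and invoke \expref{Theorem}{thm:LWE-dec}. However, your parameter bookkeeping contains a genuine error for $q \equiv 3 \pmod 4$, which is precisely the case you flagged as needing care. With $b = 2\floor{q/4}+1$ and $q = 4k+3$ you get $b = 2k+1 = (q-1)/2 < q/2$, so $c = \ceil{q/b} = 3$, not $2$: two blocks of size $b$ do not cover $\Z_q$, and \expref{Definition}{def:LRF} forces a third block. Your own formula betrays this — $2(2\floor{q/4}+1) - q = -1$ for $q \equiv 3 \pmod 4$, and $d = \ceil{q/b}\,b - q$ is nonnegative by definition, so a negative value means the assumed $c=2$ is wrong. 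With $c=3$ the rounding function is ternary-valued while $\Dec_{\vec k}$ is binary, so the identification $\Dec_{\vec k} = \Round_{\vec k',a,b}$ fails outright (you would be querying $\min(\Round,1)$ or similar, to which the theorem does not apply); and even formally, $d = 3b - q = \Theta(q)$ makes the bound $4/\pi^2 - O(d/q)$ vacuous. Since $q \equiv 3 \pmod 4$ covers a quarter of all moduli, including cryptographically relevant primes, this is not a removable corner case.

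The fix is what the paper does: always take $b = \ceil{q/2} \geq q/2$, which guarantees $c = 2$ and $d = 2\ceil{q/2} - q \leq 1$, and for $q \equiv 3 \pmod 4$ let $I_0$ be the set on which decryption outputs $1$ (i.e.\ set $a = \ceil{q/4}$ rather than $-\floor{q/4}$), so that the identification holds up to a global bit-flip of the output — harmless, since for $c=2$ the phase kickback only changes by a global sign. This choice leaves an $O(1)$-element mismatch between the blocks and the true decryption regions for some residues of $q$ bmod $4$ (e.g.\ the single point $\ip{\vec x,\vec k'} = \floor{q/4}$ when $4 \mid q$), but each such point contributes at most $q^{n-1}/q^n = 1/q$ to the amplitude $\abs{\braket{\vec k'}{\psi}}$, which is absorbed into the $o(1)$. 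The rest of your argument — the unit last entry of $\vec k'$ holding unconditionally, the equivalence of the XOR oracle with the mod-$c$ oracle when $c=2$, and the $O(n)$ runtime — is correct and matches the paper.
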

\begin{proof}
Note that $\LWESKES.\Dec_{\vec{k}}$ coincides with a linear rounding function $\Round_{\vec{k}',a,b}$ for a key $\vec{k}' = (-\vec{k},1)\in \Z_q^{n+1}$, which has a unit in its last entry. In particular, $b=\lceil q/2\rceil$, and if $q=3\pmod 4$, $a=\lceil q/4\rceil $, and otherwise, $a=-\lfloor q/4\rfloor$. Thus, by \expref{Theorem}{thm:LWE-dec},
 \expref{Algorithm}{alg:key_rec_rounding} makes one quantum query to $\Round_{\vec k',a,b}$, which can be implemented using one quantum query to $\LWESKES.\Dec_{\vec k}$, and recovers $\vec k'$, and thus $\vec k$, with probability $4/\pi^2-O(d/q)$, where $d=\lceil q/b\rceil b-q\leq 1$.\qed
\end{proof}

Note that the key in this scheme consists of $n \log q$ uniformly random bits, and that a classical decryption query yields at most a single bit of output. It follows that any algorithm making $t$ \emph{classical} queries to the decryption oracle recovers the entire key with probability at most $2^{t - n \log q}$. A straightforward key-recovery algorithm does in fact achieve this.

\subsection{Key recovery via one decryption query in public-key \LWE}\label{sec:public}

The key-recovery attack described in \expref{Corollary}{thm:LWESKE-dec} required nothing more than the fact that the decryption procedure of $\LWESKES$ is just a linear rounding function whose key contains the decryption key. As a result, the attack is naturally applicable to other variants of $\LWE$. In this section, we consider two public-key variants. The first is the standard construction of \INDCPA public-key encryption based on the \LWE assumption, as introduced by Regev~\cite{Regev05}. The second is the \INDCPA-secure public-key encryption scheme \FrodoPKE~\cite{FrodoKEM}, which is based on a construction of Lindner and Peikert~\cite{LP11}. In both cases, we demonstrate a dramatic speedup in key recovery using quantum decryption queries.

We emphasize once again that key recovery against these schemes was already possible classically using a linear number of decryption queries. Our results should thus not be interpreted as a weakness of these cryptosystems in their stated security setting (i.e., \INDCPA). The proper interpretation is that, if these cryptosystems are exposed to chosen-ciphertext attacks, then quantum attacks can be even more devastating than classical ones.

\paragraph{Regev's public-key scheme.}

The standard construction of an \INDCPA public-key encryption scheme based on \LWE is the following.
\begin{construction}[\LWEPKES~\cite{Regev05}]\label{cons:LWE_sym}
Let $m \geq n \geq 1$ be integers, let $q \geq 2$ be an integer modulus, and let $\chi$ be a discrete error distribution over $\Z_q$. Then, the public-key encryption scheme $\LWEPKES(n,q,\chi) = (\KeyGen, \Enc, \Dec)$ is defined as follows:
\begin{enumerate}
\item \phase{\KeyGen} output a secret key $\vec{sk} = \vec k \rand \Z_q^n$ and a public key $\vec{pk} = (\vec A,\vec A \vec k + \vec e) \in \Z_q^{m \times (n+1)}$, where $\vec A \rand \Z_q^{m \times n}$, $\vec e \randchi \Z_q^m$, and all arithmetic is done modulo $q.$
\item \phase{\Enc} to encrypt $b \in \bit$, pick a random $\vec v \in \bit^m$ with Hamming weight roughly $m/2$ and output $(\vec v\tp \vec A, \vec v\tp (\vec A\vec k + \vec e)+b\lfloor\frac{q}{2}\rfloor ) \in \Z_q^{n+1}$, where $\vec v\tp$ denotes the transpose of $\vec v$.
\item \phase{\Dec} to decrypt $(\vec a, c)$, output $0$ if $|c - \ip{\vec a, \vec{sk}}| \leq \floor*{\frac{q}{4}}$, else output $1$.
\end{enumerate}
\end{construction}
Although the encryption is now done in a public-key manner, all that matters for our purposes is the decryption procedure, which is identical to the symmetric-key case, $\LWESKES$. We thus have the following corollary, whose proof is identical to that of \expref{Corollary}{thm:LWESKE-dec}:
\begin{corollary}
There is a quantum algorithm that makes one quantum query to $\LWEPKES.\Dec_{\vec{sk}}$ and recovers the entire key $\vec{sk}$ with probability at least \mbox{$4/\pi^2-o(1)$}.
\end{corollary}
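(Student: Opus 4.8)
The plan is to observe that decryption in $\LWEPKES$ is syntactically identical to decryption in $\LWESKES$, and then to invoke \expref{Theorem}{thm:LWE-dec} exactly as in the proof of \expref{Corollary}{thm:LWESKE-dec}. Indeed, to decrypt a ciphertext $(\vec a, c) \in \Z_q^{n+1}$ the scheme outputs $0$ when $\abs{c - \ip{\vec a, \vec{sk}}} \leq \floor*{\frac{q}{4}}$ and $1$ otherwise, and this rule uses only the secret key, not any public-key structure. First I would rewrite this map as a linear rounding function: putting $\vec k' := (-\vec{sk}, 1) \in \Z_q^{n+1}$ and viewing a ciphertext as an input vector $(\vec a, c)$, one has $\ip{(\vec a, c), \vec k'} = c - \ip{\vec a, \vec{sk}} \pmod q$, so $\LWEPKES.\Dec_{\vec{sk}}$ coincides with $\Round_{\vec k', a, b}$ for block size $b = \ceil{q/2}$ (hence $\ceil{q/b} = 2$ blocks), with offset $a = \ceil{q/4}$ when $q \equiv 3 \pmod 4$ and $a = -\floor{q/4}$ otherwise, the offset being chosen so that the block $I_0(a,b)$ is exactly the ``small'' interval $\set{z \in \Z_q : \abs{z} \leq \floor{q/4}}$.

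Next I would check that the hypothesis of \expref{Theorem}{thm:LWE-dec} holds: $\vec k'$ has a unit in its last coordinate (the entry $1$), and a single quantum query to $\LWEPKES.\Dec_{\vec{sk}}$ implements a single quantum query to the oracle $U_{\Round}$ for $\Round_{\vec k', a, b}$ (the output is one bit, so the phase-kickback step uses the two phases $\pm 1$ and needs no further post-processing). Running \expref{Algorithm}{alg:key_rec_rounding} then outputs $\vec k'$, and therefore $\vec{sk}$, with probability at least $4/\pi^2 - O(d/q)$, where $d = \ceil{q/b}\, b - q = 2\ceil{q/2} - q \in \set{0,1}$. Since $O(d/q) = O(1/q) = o(1)$, this yields the claimed success probability.

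There is essentially no obstacle here; the only point requiring care is the reduction of the decryption rule to a linear rounding function with the correct offset $a$ — that is, confirming that the interval $\set{z : \abs{z} \leq \floor{q/4}}$ really is a block $I_0(a, \ceil{q/2})$ — and this bookkeeping has already been carried out in the proof of \expref{Corollary}{thm:LWESKE-dec}. Everything else is an immediate consequence of \expref{Theorem}{thm:LWE-dec}, since the attack depends only on the decryption rule, which is unchanged from the symmetric-key scheme.
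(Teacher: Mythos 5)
Your proposal is correct and follows exactly the paper's route: the paper likewise observes that $\LWEPKES.\Dec_{\vec{sk}}$ is identical to $\LWESKES.\Dec_{\vec k}$ and simply inherits the proof of the symmetric-key corollary, which is the same reduction to $\Round_{\vec k',a,b}$ with $\vec k'=(-\vec{sk},1)$, $b=\ceil{q/2}$, and $d\leq 1$ that you spell out. The only difference is that you reproduce the bookkeeping explicitly rather than citing it, which is harmless.
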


\paragraph{Frodo public-key scheme.}
Next, we consider the \INDCPA-secure public-key encryption scheme \FrodoPKE, which is based on a construction by Lindner and Peikert \cite{LP11}. Compared to \LWEPKES, this scheme significantly reduces the key-size and achieves better security estimates than the initial proposal by Regev~\cite{Regev05}. For a detailed discussion of \FrodoPKE, we refer to \cite{FrodoKEM}. We present the entire scheme for completeness, but the important part for our purposes is the decryption procedure.

\begin{construction}[\FrodoPKE~\cite{FrodoKEM}]\label{cons:FrodoKEM}
Let $n,\bar m,\bar n$ be integer parameters, let $q \geq 2$ be an integer power of 2, let $B$ denote the number of bits used for encoding, and let $\chi$ be a discrete symmetric error distribution.
The public-key encryption scheme
$\FrodoPKE = (\KeyGen,\Enc,\Dec)$ is defined as follows:
\begin{enumerate}
\item \phase{\KeyGen} generate a matrix $\vec A \rand \Z_q^{n \times n}$ and matrices $\vec S,\vec E \randchi \Z_q^{n\times \bar{n}}$; compute $\vec B = \vec A \vec S + \vec E \in \Z_q^{n\times \bar{n}}$; output the key-pair $(\vec{pk},\vec{sk})$ with public key $\vec{pk} = (\vec A, \vec B)$ and secret key $\vec{sk} = \vec S$.

\item \phase{\Enc}
to encrypt $\vec m \in \bit^{B\cdot \bar m\cdot \bar n}$ (encoded as a matrix $\vec M \in \Z_{q}^{\bar{m}\times \bar{n}}$ with each entry having 0s in all but the $B$ most significant bits) with public key $\vec{pk}$,
sample error matrices $\vec{S}', \vec E' \randchi \Z_q^{\bar m \times n}$ and $\vec E'' \randchi \Z_q^{\bar m \times \bar n}$; compute $\vec C_1 = \vec S'\vec A + \vec E' \in \Z_q^{\bar m \times n}$ and $\vec C_2 =  \vec M + \vec S'\vec B + \vec E'' \in \Z_q^{\bar m \times \bar n}$;
output the ciphertext $(\vec C_1, \vec C_2)$.

\item \phase{\Dec} to decrypt $(\vec C_1, \vec C_2) \in \Z_q^{\bar{m}\times {n}}\times \Z_q^{\bar{m}\times \bar{n}}$ with secret-key $\vec{sk} = \vec S$, compute $\vec M= \vec C_2 - \vec C_1 \vec S\in \Z_q^{\bar{m}\times \bar{n}}$. For each $(i,j)\in [\bar{m}]\times [\bar{n}]$, output the first $B$ bits of $M_{i,j}$.
\end{enumerate}
\end{construction}

We now show how to recover $\bar{m}$ of the $\bar{n}$ columns of the secret key $\vec S$ using a single quantum query to $\FrodoPKE.\Dec_{\vec S}$. If $\bar{m}=\bar{n}$, as in sample parameters given in \cite{FrodoKEM}, then this algorithm recovers $\vec S$ completely.
\begin{theorem}
There exists a quantum algorithm that makes one quantum query to $\FrodoPKE.\Dec_{\vec S}$ and recovers any choice of $\bar{m}$ of the $\bar{n}$ columns of $\vec S$. For each of the chosen columns, if that column has at least one odd entry, then the algorithm succeeds in recovering the column with probability at least $4/\pi^2$.
\end{theorem}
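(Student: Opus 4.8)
The plan is to reduce to \expref{Theorem}{thm:LWE-dec}, realizing one invocation of \expref{Algorithm}{alg:key_rec_rounding} per target column, all fueled by a single call to $\FrodoPKE.\Dec_{\vec S}$. Write $\vec s_1,\dots,\vec s_{\bar n}\in\Z_q^n$ for the columns of $\vec S$ and fix the $\bar m$ targets $\ell_1,\dots,\ell_{\bar m}$. The starting observation, as in \expref{Corollary}{thm:LWESKE-dec}, is that $\vec M=\vec C_2-\vec C_1\vec S$ has entries $M_{ij}=(\vec C_2)_{ij}-\ip{(\vec C_1)_{i,\cdot},\vec s_j}$; so the most significant bit of $M_{ij}$ (the first of the ``first $B$ bits'' that $\Dec$ outputs at position $(i,j)$), viewed as a function of the $i$-th row of $\vec C_1$ with $\vec C_2$ fixed to $\vec 0$, equals $\bigl[\ip{\vec x,-\vec s_j}\bmod q \ge q/2\bigr]$, which is exactly the binary linear rounding function $\Round_{-\vec s_j,\,0,\,q/2}$ of \expref{Definition}{def:LRF} with block size $b=q/2$, $c=\lceil q/b\rceil=2$, and $d=cb-q=0$. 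Moreover, because $q$ is a power of $2$, ``column $j$ has an odd entry'' is precisely the condition that $-\vec s_j$ has an entry that is a unit mod $q$. Hence \expref{Theorem}{thm:LWE-dec} applies to each $\Round_{-\vec s_{\ell_k},0,q/2}$ and returns $-\vec s_{\ell_k}$, hence $\vec s_{\ell_k}$, with probability $4/\pi^2-O(d/q)=4/\pi^2$.

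Next I would turn one decryption query into $\bar m$ independent copies of \expref{Algorithm}{alg:key_rec_rounding}. Prepare registers $R_1,\dots,R_{\bar m}$, each in $\tfrac1{\sqrt{q^n}}\sum_{\vec x\in\Z_q^n}\ket{\vec x}$, and plug $R_k$ into the $k$-th row of the $\vec C_1$-input of the query while setting the $\vec C_2$-input to $\vec 0$. In the output register of the query, route the most significant bit of the $(k,\ell_k)$-block into a phase-kickback ancilla $\ket{-}$ (one per $k$), and route \emph{every other output bit} into a $\ket{+}$ ancilla. Now perform the single call to $\FrodoPKE.\Dec_{\vec S}$. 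XOR-ing a bit into $\ket{+}$ is the identity, so all $\ket{+}$ ancillas come back unchanged and unentangled, carrying no information out; each $\ket{-}$ ancilla kicks the phase $(-1)^{\Round_{-\vec s_{\ell_k},0,q/2}(\vec x)}$ onto $R_k$. Thus $R_1,\dots,R_{\bar m}$ are left in a product state whose $k$-th factor is exactly the state reached after step 2 of \expref{Algorithm}{alg:key_rec_rounding} (with $c=2$) run on $\Round_{-\vec s_{\ell_k},0,q/2}$. To finish, apply $\QFT_{\Z_q}^{\otimes n}$ to each $R_k$, measure, and negate the outcome; by the reduction above and \expref{Theorem}{thm:LWE-dec}, each target column with an odd entry is recovered with probability at least $4/\pi^2$, and for $\bar m=\bar n$ this gives all of $\vec S$.

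The hard part is the middle step: $\FrodoPKE.\Dec$ returns \emph{all} $\bar m\bar n$ entries of $\vec M$ (i.e.\ all $B\bar m\bar n$ output bits) from a single query, so the $B\bar m\bar n-\bar m$ bits we do not want must be prevented from disturbing the $R_k$'s. Routing an unwanted bit into a $\ket 0$ ancilla would entangle it with some $R_k$ (e.g.\ the most significant bit of block $(k,j)$, $j\neq\ell_k$, depends on $\ip{\vec x,\vec s_j}$), and routing it into a $\ket-$ ancilla would multiply $R_k$'s phase by a factor depending on a different column $\vec s_j$ --- a short Fourier calculation then shows this pollution flattens out the peak at $-\vec s_{\ell_k}$. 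The point that makes one query enough is the $\ket+$-absorption trick: a complete uniform superposition is invariant under XOR with any fixed value, so those outputs leave no trace. Everything else is routine: checking that $\vec C_2=\vec 0$ forces the offset $a=0$ so that the most significant bit really is $\Round_{\,\cdot\,,0,q/2}$; matching the ``odd entry'' hypothesis to the ``unit mod $q$'' hypothesis of \expref{Theorem}{thm:LWE-dec}; and noting that because the $R_k$ end in a product state, the per-column guarantee $4/\pi^2$ is incurred without any union-bound loss and the $\bar m$ recoveries are mutually independent.
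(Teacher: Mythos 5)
Your proof is correct and follows the same core strategy as the paper's: feed a fresh uniform superposition over $\Z_q^n$ into each of the $\bar m$ rows of $\vec C_1$, set $\vec C_2=\vec 0$, absorb every unwanted output of the single $\FrodoPKE.\Dec_{\vec S}$ call into a state invariant under the oracle's action so that the $\bar m$ row registers stay in a product state, and then invoke \expref{Theorem}{thm:LWE-dec} once per target column. The one genuine difference is which rounding function you extract from each target block $(k,\ell_k)$: you keep only the most significant bit of $M_{k,\ell_k}$, obtaining the binary function $\Round_{-\vec s_{\ell_k},0,q/2}$ (so $c=2$, $d=0$) with an ordinary $\ket{-}$ kickback under the paper's standard XOR oracle convention, and you dump the remaining $B-1$ bits of that block into $\ket{+}$ along with everything else. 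The paper instead uses all $B$ output bits of the target block as a $\Z_{2^B}$-valued rounding function with block size $q/2^B$, kicking back a phase $\omega_{2^B}^{-\Round(\vec c)}$ via the Fourier-phase state $\tfrac{1}{\sqrt{2^B}}\sum_z\omega_{2^B}^z\ket{z}$ (and models the oracle as addition modulo $2^B$ per entry). Since $d=0$ in both cases, the lower bound from \expref{Theorem}{thm:LWE-dec} is $4/\pi^2$ either way, so your version proves the stated theorem; it is slightly more elementary (only the $c=2$ case is needed, and the $\ket{+}$ absorption argument is a one-liner). What the paper's finer rounding buys is a better \emph{actual} success probability — the amplitude $\tfrac{c}{q}\cdot\tfrac{\sin(b\pi/q)}{\sin(\pi/q)}$ increases toward $1$ as $b$ shrinks — which the $4/\pi^2$ statement simply does not exploit. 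Your closing observations (the $a=0$ offset from $\vec C_2=\vec 0$, odd entry $\Leftrightarrow$ unit mod a power-of-two $q$, product structure giving independent per-column guarantees with no union bound) all match the paper.
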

\begin{proof}
Let $\vec s^1,\dots, \vec s^{\bar{n}}$ be the columns of $\vec S$. Let $U$ denote the map:
$$U:\ket{\vec c}\ket{z_1}\dots\ket{z_{\bar{n}}}\mapsto \ket{\vec c}\ket{z_1+\Round_{\vec s^1,0,q/2^B}(\vec c)}\dots\ket{z_{\bar{n}}+\Round_{\vec s^{\bar{n}},0,q/2^B}(\vec c)},$$
for any $\vec c\in\Z_q^n$ and $z_1,\dots, z_{\bar{n}}\in \Z_{2^B}$. We first argue that one call to $\FrodoKEM.\Dec_{\vec S}$ can be used to implement $U^{\otimes \bar{m}}$. Then we show that one call to $U$ can be used to recover any choice of the columns of $\vec S$ with probability $4/\pi^2$, as long as it has at least one entry that is odd.

Let ${\sf Trunc}:\Z_q\mapsto \Z_{2^B}$ denote the map that takes $x \in \Z_q$ to the integer represented by the $B$ most significant bits of the binary representation of $x$. We have, for any $\vec C_1\in \Z_q^{\bar{m}\times n}$, $\vec C_2=0^{\bar{m}\times \bar{n}}$, and any $\{z_{i,j}\}_{i\in[\bar{m}],j\in[\bar{n}]}\subseteq \Z_{2^B}$:
\begin{equation}
U_{\FrodoKEM.\Dec}:\ket{\vec C_1}\ket{0^{\bar m \cdot \bar n}}\bigotimes_{i\in[\bar{m}],j\in[\bar{n}]}\ket{z_{i,j}}\mapsto \ket{\vec C_1}\ket{0^{\bar m \cdot \bar n}} \bigotimes_{i\in[\bar{m}],j\in[\bar{n}]} \ket{z_{i,j} + {\sf Trunc}([\vec C_1\vec S]_{i,j})}.\label{eq:frodo1}
\end{equation}
Above, $[\vec C_1\vec S]_{i,j}$ represents the $ij$-th entry of $\vec C_1\vec S$. If $\vec c^1,\dots,\vec c^{\bar{m}}$ denote the rows of $\vec C_1$, then $[\vec C_1\vec S]_{i,j} = \ip{\vec c^i,\vec s^j}$. Thus, ${\sf Trunc}([\vec C_1\vec S]_{i,j})=\Round_{\vec s^j,0,q/2^B}(\vec c^i)$, the linear rounding function with block size $b=q/2^B$, which is an integer since $q$ is a power of 2, and $a=0$. Note that we have also assumed that the plaintext is \emph{subtracted} rather than added to the last register; this is purely for convenience of analysis, and can easily be accounted for by adjusting \expref{Algorithm}{alg:key_rec_rounding} (e.g., by using inverse-\QFT instead of \QFT.)

Discarding the second register (containing $\vec C_2 = 0$), the right-hand side of \eqref{eq:frodo1} becomes
\begin{equation}
\ket{\vec c^1}\dots\ket{\vec c^{\bar{m}}}\bigotimes_{i\in [\bar{m}],j\in[\bar{n}]}\ket{z_{i,j}+\Round_{\vec s^j,0,q/2^B}(\vec c^i)}.\label{eq:frodo2}
\end{equation}
Reordering the registers of \eqref{eq:frodo2}, we get:
$$\bigotimes_{i\in[\bar{m}]}\left(\ket{\vec c^i}\bigotimes_{j\in[\bar{n}]}\ket{z_{i,j}+\Round_{\vec s^j,0,q/2^B}(\vec c^i)}\right)
= U^{\otimes \bar m}\left(\bigotimes_{i\in [\bar{m}]}\ket{\vec c^i}\bigotimes_{j\in[\bar{n}]}\ket{z_{i,j}} \right).$$
Thus, we can implement $U^{\otimes \bar m}$ using a single call to $\FrodoKEM.\Dec_{\vec S}$.

Next we show that for any particular $j\in [\bar{n}]$, a single call to $U$ can be used to recover $\vec{s}^j$, the $j$-th column of $\vec S$, with probability at least $4/\pi^2$, as long as at least one entry of $\vec s^j$ is odd. To do this, we show how one use of $U$ can be used to implement one phase query to $\Round_{s^j,0,q/2^B}$. Then the result follows from the proof of \expref{Theorem}{thm:LWE-dec}.

Let $\ket{\varphi} = 2^{-B/2} \sum_{z=0}^{2^B-1} \ket{z}$, and define

$$
\ket{\phi_j} = \ket{\varphi}^{\otimes (j-1)} \otimes \frac{1}{\sqrt{2^B}}\sum_{z=0}^{2^B-1}\omega_{2^B}^z\ket{z}\otimes \ket{\varphi}^{\otimes (\bar{n}-j)}.$$
Then for any $\vec{c}\in \Z_q^n$, we have:
$$
\frac{1}{\sqrt{2^B}}\sum_{z=0}^{2^B-1}\ket{z+\Round_{\vec s^i,0,q/2^B}(\vec c)} = \frac{1}{\sqrt{2^B}} \sum_{z=0}^{2^B-1} \ket{z} = \ket{\varphi},
$$
since addition here is modulo $2^B$, and
$$
\frac{1}{\sqrt{2^B}}\sum_{z=0}^{2^B-1}\omega_{2^B}^z\ket{z+\Round_{\vec s^j,0,q/2^B}(\vec c)}=\frac{1}{\sqrt{2^B}}\sum_{z=0}^{2^B-1}\omega_{2^B}^{z-\Round_{\vec s^j,0,q/2^B}(\vec c)}\ket{z}.$$
Thus:
\begin{eqnarray*}
U(\ket{\vec c}\ket{\phi_j}) &=& \ket{\vec c}\ket{\varphi}^{\otimes (j-1)} \otimes \frac{1}{\sqrt{2^B}}\sum_{z=0}^{2^B-1}\omega_{2^B}^{z-\Round_{\vec s^j,0,q/2^B}(\vec c)}\ket{z}\otimes \ket{\varphi}^{\otimes (\bar{n}-j)}\\
&=& \omega_{2^B}^{-\Round_{\vec s^j,0,q/2^B}(\vec c)}\ket{\vec c}\ket{\phi_j}.
\end{eqnarray*}
Thus, by the proof of \expref{Theorem}{thm:LWE-dec}, if we apply $U$ to $q^{-n/2} \sum_{\vec c\in\Z_q^n}\ket{\vec c}\ket{\phi_j}$, Fourier transform the first register, and then measure, assuming $\vec{s}^j$ has at least one entry that is a unit\footnote{since $q$ is a power of 2, this is just an odd number} we will measure $\vec{s}^j$ with probability at least $\pi^2/4-O(d/q)$, where $d=q/2^B\lceil q/(q/2^B)\rceil - q = 0$.

Thus, if we want to recover columns $j_1,\dots j_{\bar{m}}$ of $\vec S$, we apply our procedure for $U^{\otimes \bar{m}}$, which costs one query to $\FrodoKEM.\Dec_{\vec S}$, to the state
$$\sum_{\vec c\in\Z_q^n}\frac{1}{\sqrt{q^n}}\ket{\vec c}\ket{\phi_{j_1}}\otimes \dots\otimes
\sum_{\vec c\in\Z_q^n}\frac{1}{\sqrt{q^n}}\ket{\vec c}\ket{\phi_{j_{\bar{m}}}},$$
Fourier transform each of the $\vec c$ registers, and then measure.\qed
\end{proof}


\subsection{Key recovery via one decryption query in public-key Ring-\LWE}\label{sec:public-ring}

Next, we analyze key-recovery with a single quantum decryption query against Ring-\LWE encryption. Unlike the plain \LWE-based encryption schemes we considered in the previous sections, Ring-\LWE encryption uses noisy samples over a polynomial ring. In the following, we consider the basic, bit-by-bit Ring-\LWE public-key encryption scheme introduced in \cite{LPR-toolkit-2013,LPR13}. It is based on the rings $\mathcal{R}= \Z[x]/\ip{x^n+1}$ and $\mathcal{R}_q := \mathcal{R}/q \mathcal{R} = \Z_q[x]/\ip{x^n+1}$ for some power-of-two integer $n$ and $\poly(n)$-bounded prime modulus $q$. The details of the error distribution $\chi$ below will not be relevant to our results.



\begin{construction}[Ring-\LWE-\PKE \cite{LPR-toolkit-2013,LPR13}]\label{cons:Ring-LWE}
Let $n \geq 1$ be an integer, let $q \geq 2$ be an integer modulus, and let $\chi$ be an error distribution over $\mathcal{R}$.
The public-key encryption scheme
\RingLWE-\PKE $= (\KeyGen,\Enc,\Dec)$ is defined as follows:
\begin{enumerate}
\item \KeyGen: sample $a \rand \mathcal{R}_q$ and $e,s \randchi \mathcal R$; output $\sk = s$ and $\pk = (a,c=a\cdot s+e \pmod q)\in \mathcal{R}_q^2$.
\item \Enc: to encrypt $b \in \bit$, sample $r,e_1,e_2 \randchi \mathcal{R}$ and
output a ciphertext pair $(u,v) \in \mathcal{R}_q^2$, where
$u= a\cdot r + e_1 \pmod q$ and $v = c \cdot r + e_2 + b \floor{q/2} \pmod q$.
\item \Dec: to decrypt $(u,v)$, compute $v-u\cdot s = (r \cdot e - s \cdot e_1 + e_2) + b \floor{q/2} \pmod q \, \in \mathcal{R}_q$; output $0$ if the constant term of the polynomial is closer to $0$ than $\floor{q/2}$, else output $1$.
\end{enumerate}
\end{construction}

We note that our choice of placing single-bit encryption in the constant term of the polynomial is somewhat arbitrary. Indeed, it is straightforward to extend our results to encryption with respect to other monomials.
We show the following corollary to \expref{Theorem}{thm:LWE-dec}.
\begin{corollary}
\label{thm:LWESKE-dec}
There is a quantum algorithm that makes one quantum query to \RingLWE-$\PKE.\Dec_{s}$ and recovers the entire key $s$ with probability at least $4/\pi^2 - o(1)$.
\end{corollary}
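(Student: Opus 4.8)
The plan is to reduce to \expref{Theorem}{thm:LWE-dec} in exactly the same way as for plain \LWE, the only genuinely new ingredient being a bookkeeping step that linearizes multiplication in $\mathcal R_q$. First I would restrict attention to ciphertexts of the special form $(u,0)$ and observe that, on such an input, \RingLWE-\PKE decryption returns $0$ or $1$ according to whether the constant coefficient of $v-u\cdot s = -u\cdot s \in \mathcal R_q$ is closer to $0$ or to $\floor{q/2}$, and that this constant coefficient is a fixed $\Z_q$-linear function of the coefficient vector $\vec s=(s_0,\dots,s_{n-1})$ of $s$. Concretely, writing $u=\sum_{i=0}^{n-1}u_ix^i$ and using $x^n=-1$ in $\mathcal R_q$, the constant coefficient of $u\cdot s$ equals $u_0s_0-\sum_{i=1}^{n-1}u_is_{n-i}=\ip{\hat u,\vec s}$, where $\hat u:=(u_0,-u_{n-1},-u_{n-2},\dots,-u_1)\in\Z_q^n$. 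Hence $\Dec_s(u,0)=\Round_{\vec s,a,b}(-\hat u)$ with block size $b=\ceil{q/2}$ (so $c=\ceil{q/b}=2$ and $d=cb-q\le 1$) and the same offset $a$ as in \expref{Construction}{con:LWESKES}, chosen so that $I_0(a,b)$ is the set of residues closer to $0$ than to $\floor{q/2}$.

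Next I would note that $u\mapsto -\hat u$ is a signed coordinate permutation of $\Z_q^n$ — a permutation matrix composed with a diagonal $\pm1$ matrix — and is therefore an efficiently computable bijection with an equally simple inverse. This lets one simulate a single quantum query to the oracle $U_{\Round}$ for $\Round_{\vec s,a,b}$ using a single query to $\Dec_s$: on input $\ket{\vec x}\ket{z}$, coherently compute the unique $u$ with $-\hat u=\vec x$ into an ancilla register, set the second ciphertext component to $0$, call $\Dec_s$ on $(u,0)$ to add $\Round_{\vec s,a,b}(\vec x)$ into $z$, and then uncompute $u$. Feeding this oracle to \expref{Algorithm}{alg:key_rec_rounding} and invoking \expref{Theorem}{thm:LWE-dec}, we recover $\vec s$, and hence $s$, with probability at least $4/\pi^2-O(d/q)=4/\pi^2-O(1/q)$, provided $\vec s$ has at least one entry that is a unit modulo $q$. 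Since $q$ is prime, this is exactly the event $s\neq 0$, which fails only with probability $o(1)$ over $s\randchi\mathcal R$ for any reasonable error distribution (and certainly the one used in \cite{LPR13}); conditioning on it costs only an additive $o(1)$, yielding the claimed bound.

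The only real subtlety I anticipate is the sign bookkeeping in the first step: getting the $x^n=-1$ twist right in the constant coefficient of $u\cdot s$, confirming that the induced coefficient map $u\mapsto -\hat u$ is genuinely a bijection (which is precisely what makes the one-query oracle simulation go through), and correctly inheriting the offset $a$ and the bound $d\le 1$ from the symmetric-key \LWE analysis. Everything downstream is an immediate appeal to \expref{Theorem}{thm:LWE-dec}. I would also remark, as in the remark preceding the corollary, that placing the encrypted bit in a different monomial merely changes which coordinates are permuted and negated, so the argument is identical.
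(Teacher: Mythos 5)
Your proposal is correct and follows the same overall strategy as the paper: use $x^n\equiv -1$ in $\mathcal R_q$ to express the constant coefficient of $u\cdot s$ as a $\Z_q$-linear form in the coefficient vector of $s$, recognize the decryption rule as a linear rounding function with block size $b=\ceil{q/2}$ (hence $c=2$, $d\le 1$), and invoke \expref{Theorem}{thm:LWE-dec}. The one substantive divergence is how you discharge the unit-entry hypothesis of that theorem. You fix the second ciphertext component to $v=0$, reduce to $\Round_{\vec s,a,b}$ on $\Z_q^n$ via the signed coordinate permutation $u\mapsto -\hat u$, and therefore need $\vec s$ itself to contain a unit, i.e.\ $s\neq 0$ for prime $q$; you pay an additive $o(1)$ for this event, which makes the bound depend (mildly) on the error distribution $\chi$ --- something the paper explicitly set out to avoid (``the details of the error distribution $\chi$ below will not be relevant''). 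The paper instead keeps $v_0$ as a live input coordinate and views decryption as rounding $\ip{\vec u',\vec s'}$ over $\Z_q^{n+1}$ with the extended key $\vec s'=(-s_0,s_{n-1},\dots,s_1,1)$, whose trailing $1$ is a unit unconditionally; this makes the hypothesis of \expref{Theorem}{thm:LWE-dec} automatic for \emph{every} $s$, exactly as in the symmetric-key \LWE corollary where the key is padded to $(-\vec k,1)$. Both arguments are sound; the paper's is slightly cleaner and distribution-free, while yours is a legitimate alternative provided you state the (very mild) assumption that $\Pr[s=0]=o(1)$ under $\chi$. Your linear-algebra bookkeeping ($\hat u_j=-u_{n-j}$ for $j\ge 1$, bijectivity of $u\mapsto-\hat u$, and the reversible compute--query--uncompute simulation costing one decryption query) is all correct.
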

\begin{proof}
We first analyze the decryption function. Let $(p)_0$ denote the constant term of a polynomial $p \in \mathcal R_q$. Then, for any two polynomials $u = \sum_{j=0}^{n-1} u_j x^j$ and $s = \sum_{j=0}^{n-1} s_j x^j \in \mathcal R_q$, we can identify the constant term of $u \cdot s$ as
\begin{equation}\label{eq:ring_lwe_ip}
(u\cdot s)_0 = u_0 s_0 + \sum_{j=1}^{n-1} u_j s_{n-j} x^j x^{n-j} \equiv u_0 s_0 - u_1 s_{n-1} - u_2 s_{n-2} - \hdots - u_{n-1} s_1 \pmod q,
\end{equation}
since $x^n \equiv -1$ in $\mathcal R_q$. We show that the outcome of \RingLWE-$\PKE.\Dec_{s}(u,v)$ coincides with a binary linear rounding function over $\Z_q^n$. Let $\vec u,\vec s \in \Z_q^n$ denote the coefficient vectors of $u,s \in \mathcal R_q$ respectively, and define a constant polynomial $v \equiv v_0 \in \mathcal R_q$ and vector $\vec u' := (\vec u,v_0) \in \Z_q^{n+1}$, for some $v_0 \in \mathcal \Z_q$. Consequently, \RingLWE-$\PKE.\Dec_{s}(u,v_0)$ rounds the inner product $\ip{\vec u',\vec s'}$, where $\vec s' = (- s_0 \pmod q, s_{n-1},\hdots,s_1, 1)$. Thus, we can run the Bernstein-Vazirani algorithm for binary linear rounding functions on a uniform superposition over $\Z_q^n$ and recover $\vec s$ from $\vec s'$ with simple classical post-processing. Note also that any choice of isomorphism between $\mathcal R_q$ and $\Z_q^n$ necessarily preserves the inner product in Eq.\eqref{eq:ring_lwe_ip}, and thus any measurement outcome can be mapped back to the standard basis prior to post-processing -- independently of the actual ring representation used in practice. By \expref{Theorem}{thm:LWE-dec},
 \expref{Algorithm}{alg:key_rec_rounding} makes one quantum query to $\Round_{\vec s',q}$, which can be implemented using one quantum query to \RingLWE-$\PKE.\Dec_{s}$, and recovers $\vec s'$, and thus $\vec s$, with probability $4/\pi^2-o(1)$.\qed
\end{proof}

\subsection{Key recovery via a randomness-access query}\label{sec:randomness}

While a linear number of classical decryption queries can be used to break $\LWE$-based schemes, we have shown that only a single \emph{quantum} decryption query is required. A natural question to ask is whether a similar statement can be made for \emph{encryption} queries. Classically, it is known that the symmetric key version of \LWE described in \expref{Construction}{con:LWESKES}, $\LWESKES$, can be broken using a linear number of classical encryption queries when the adversary is also allowed to choose the randomness used by the query: the adversary simply uses $e=0$ each time, with $\vec{a}$ taking $n$ linearly independent values. In case the adversary is allowed to make quantum encryption queries with randomness access, a \emph{single} quantum query suffice to recover \emph{the entire key} with non-negligible probability, even when the adversary only has control over \emph{a part of} the randomness used by the encryption: the randomness used to prepare vectors $\vec a$, but not the randomness used to select the error $e$. Specifically, the adversary is given quantum oracle access to the \textit{randomness-access encryption oracle} $U_{\Enc_{\vec k}}^{\RA}$ such that, on input $(b; \vec a)$, the adversary receives
$$
\Enc_{\vec k}^{\RA}(b; \vec a)=(\vec a,\langle \vec a, \vec k \rangle + b \left \lfloor{{q}/{2}}\right \rfloor+ e),
$$
where $e \from \chi$. We extend this to a quantum randomness-access oracle by answering each element of the superposition using i.i.d.\ errors $e_a \from \chi$:
$$
U_{\Enc_{\vec k}}^{\RA} :\ket{m} \ket{a} \ket{c} \mapsto \ket{m} \ket{a} \ket{c \oplus \Enc_{\vec k}^{\RA}(m ; a)}.
$$
This model is identical to the noisy learning model considered by Grilo et al.~\cite{GK17} and thus matches
the original proposal by Bshouty and Jackson~\cite{BJ95}.

First, it is not hard to see that algorithms making classical queries to the above oracle can extract at most $\log q$ bits of key from each query (specifically, from the last component of the ciphertext), and thus still require a linear number of queries to recover the complete key with non-negligible probability.

On the other hand, by a slight generalization of the proof of Theorem IV$.1$ from Ref.~\cite{GK17}, we can recover the entire key with inverse polynomial success probability using a single query to $U_{\Enc_{\vec k}}^{\RA}$ as long as the noise magniture $\eta$ is polynomial in $n$, since $\varphi(q)=\Omega(q/\log\log q)$, for Euler's totient function $\varphi$.

\begin{theorem}
Consider $\LWESKES(n,q,\chi)$ with an arbitrary integer modulus $\,2 \leq q \leq \exp(n)$ and a symmetric error distribution $\chi$ of noise magnitude $\eta$. Then, 
there exists a quantum algorithm that
makes one query to a randomness-accessible quantum encryption oracle for $\LWESKES(n,q,\chi)$ and recovers the entire key with probability at least $\varphi(q)/(24\eta q)-o(1)$.
\end{theorem}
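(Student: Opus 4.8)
The plan is to adapt the single-query Bernstein--Vazirani analysis to the randomness-access encryption oracle, following the structure of the proof of \expref{Theorem}{thm:LWE-dec} but phasing in the $b$-register rather than the ciphertext register, and absorbing the i.i.d.\ noise as an averaged phase damping factor. First I would fix the secret key $\vec k \in \Z_q^n$ and set up the initial state: prepare a uniform superposition $q^{-n/2}\sum_{\vec a}\ket{\vec a}$ over the randomness register, put $\frac{1}{\sqrt{q}}\sum_{z\in\Z_q}\omega_q^z\ket{z}$ in the ciphertext register, and crucially put $\frac{1}{\sqrt{2}}(\ket{0}+\ket{1})$ or a phased version $\frac12\sum_b \omega_q^{-b\lfloor q/2\rfloor}$-type superposition in the message register, so that querying $U^{\RA}_{\Enc_{\vec k}}$ kicks the full affine form $\ip{\vec a,\vec k} + b\lfloor q/2\rfloor + e_{\vec a}$ into the phase. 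After the query and discarding the ciphertext register, the state on the $(\vec a, b)$ registers is, up to normalization, $\sum_{\vec a, b}\omega_q^{-(\ip{\vec a,\vec k} + b\lfloor q/2\rfloor + e_{\vec a})}\ket{\vec a}\ket{b}$, except that the noise $e_{\vec a}$ is sampled fresh for each branch, so this is really a mixed state obtained by averaging over the $e_{\vec a}\from\chi$. The key observation, exactly as in \cite{GK17}, is that by linearity of expectation the reduced density matrix on the $\vec a$-register after tracing out $b$ and the noise is a convex combination in which each $\ket{\vec a}$ picks up an independent phase damping factor $\E_{e\from\chi}[\omega_q^{-e}] = \hat\chi(1)$, the Fourier coefficient of $\chi$ at frequency $1$; since $\chi$ is symmetric, $\hat\chi(1)$ is real and positive, and since the noise magnitude is $\eta = \poly(n)$, a standard estimate (e.g.\ via Hoeffding on the characteristic function of a bounded symmetric distribution) gives $\hat\chi(1) = \Omega(1/\poly(n))$ — this is where the $1/\eta$ factor in the bound originates.

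Next I would apply $\QFT_{\Z_q}^{\otimes n}$ to the $\vec a$-register and measure. The amplitude on $\ket{\vec y}$ is proportional to $\sum_{\vec a}\omega_q^{\ip{\vec a, \vec y - \vec k}}$ up to the uniform damping factor $\hat\chi(1)$ and up to the contribution of the $b$-register; by the standard orthogonality $\sum_{\vec a}\omega_q^{\ip{\vec a, \vec y - \vec k}} = q^n\,[\vec y = \vec k]$, the measurement returns $\vec k$ — but only conditioned on the $b$-branch and noise-damping surviving. Quantitatively, the probability of measuring $\vec k$ is (up to lower-order terms) $|\hat\chi(1)|^2/2$ times a constant coming from the handling of the $b$-register and the $\lfloor q/2\rfloor$ shift; working this out carefully and tracking constants is where the $\varphi(q)/(24\eta q)$ expression must fall out. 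Here is the subtle point I expect to be the main obstacle: the claimed bound involves $\varphi(q)/q$, Euler's totient, not merely $1/\eta$. This factor must come from the analysis of the $b\lfloor q/2\rfloor$ term together with the requirement that the recovered $\vec k$ be consistent — more precisely, when $q$ is not prime or not even, $\lfloor q/2\rfloor$ need not be a unit, and the phase $\omega_q^{-b\lfloor q/2\rfloor}$ does not cleanly separate the two values of $b$; one must instead argue (as in Theorem IV.1 of \cite{GK17}) that after the measurement one obtains $\vec k$ up to an unknown additive shift living in a subgroup determined by $\gcd$'s, and a fraction $\varphi(q)/q$ of the ambient randomness is "good." Reconstructing this bookkeeping, and verifying that it composes with the noise-damping bound to give the stated $\varphi(q)/(24\eta q) - o(1)$, is the part that requires genuine care rather than routine computation.

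Finally I would note the regime restrictions: the argument needs $\eta$ polynomial in $n$ for $\hat\chi(1)$ to be inverse-polynomial, and the $o(1)$ absorbs finite-$q$ rounding errors of the type $O(1/q)$ already appearing in \expref{Theorem}{thm:LWE-dec} (e.g.\ from $|c - \ip{\vec a,\vec k}|$ boundary effects and from $\hat\chi(1)$ being an approximation to $1$). Since the claim is only that recovery succeeds with probability at least $\varphi(q)/(24\eta q) - o(1)$, it suffices to lower-bound each multiplicative factor separately and multiply; I would not attempt to optimize the constant $24$, which is an artifact of loose estimates on $\hat\chi(1)$ and on the $b$-register contribution. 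The classical lower bound (a linear number of queries needed) is immediate from the fact that a classical query reveals only the $\log q$-bit last ciphertext component, which I would dispatch in one sentence as already noted in the text preceding the theorem.
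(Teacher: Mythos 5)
First, note that the paper itself does not prove this theorem: it only points to ``a slight generalization of the proof of Theorem IV.1 from Ref.~\cite{GK17}'', so the comparison below is against that intended route, whose shape is visible in the form of the bound $\varphi(q)/(24\eta q)$.

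Your plan has a genuine gap, and it is concentrated in the two places you yourself flag as delicate. First, by phase-kicking with $\frac{1}{\sqrt q}\sum_z \omega_q^z\ket{z}$ you commit to the single Fourier frequency $z=1$, so your success probability is governed by $|\hat\chi(1)|^2$ alone. Your claimed lower bound $\hat\chi(1)=\Omega(1/\poly(n))$ is unjustified --- Hoeffding is a concentration inequality and gives no lower bound on a characteristic function --- and it is false in the regime the theorem covers: for a symmetric $\chi$ of magnitude $\eta$ comparable to $q$ (e.g.\ near-uniform noise), $\hat\chi(1)=\E_e[\cos(2\pi e/q)]$ can vanish while $\varphi(q)/(24\eta q)$ is still positive. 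The route the paper intends (following \cite{GK17}) is to query with the ciphertext register in a computational basis state, apply $\QFT_{\Z_q}^{\otimes(n+1)}$ to \emph{all} registers, measure an outcome $(\vec y, z)$, and output $-z^{-1}\vec y$ whenever $z$ is a unit. The success probability is then $\sum_{z\in\Z_q^*}|\hat\chi(z)|^2/q$: one sums over \emph{all} unit frequencies rather than betting on $z=1$. The factor $1/\eta$ comes from the fact that every $z$ with $|z|\lesssim q/\eta$ has $|\hat\chi(z)|^2\geq\mathrm{const}$ (the phases $2\pi z e/q$ stay in a quarter period), so there are $\Theta(q/\eta)$ good frequencies each contributing $\Theta(1/q)$, and $\varphi(q)/q$ is the density of \emph{units} among them --- this is exactly where the totient enters, and why the paper remarks that $\varphi(q)=\Omega(q/\log\log q)$.

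Second, your attribution of the $\varphi(q)$ factor to the $b\lfloor q/2\rfloor$ encoding term is a misdiagnosis. The message bit $b$ is a plaintext input under the adversary's control; one simply queries with $b=0$ fixed, and no superposition over $b$ or unit-ness of $\lfloor q/2\rfloor$ is ever needed. The invertibility issue that actually matters is that of the measured frequency $z$, since recovering $\vec k$ from $\vec y=-z\vec k$ requires $z\in\Z_q^*$. Your density-matrix bookkeeping for the i.i.d.\ per-branch errors (off-diagonal damping by $|\hat\chi|^2$) is the right picture and carries over to the multi-frequency analysis, but as written the proposal would not produce the stated bound, and in the large-noise regime it would not produce any nontrivial bound at all.
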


Finally, in a different model in which a single error $e\leftarrow \chi$ is used for every branch of the superposition of a single query (independent of $a$) we can recover $\vec k$ using a single query to the randomness access encryption oracle: simply query $\ket{0}\frac{1}{\sqrt{q^n}}\sum_{\vec a\in \Z_q^n}\ket{\vec a} \frac{1}{\sqrt{q}}\sum_{z=0}^{q-1}\omega_q^z\ket{z}$ to get $\ket{0}\frac{1}{\sqrt{q^n}}\sum_{\vec a\in \Z_q^n}\omega_q^{-\ip{\vec a,\vec k}}\ket{\vec a} \frac{1}{\sqrt{q}}\sum_{z=0}^{q-1}\omega_q^z\ket{z+e}$, apply the quantum Fourier transform to the second register, and then measure the second register to get $\vec k$ with probability 1.



%



\printbibliography


\appendix

\section{Appendix}\label{app:ota}

\subsubsection{Bound for quantum random access codes.}\label{sec:qrac}

Recall that a \emph{quantum random access code} (\QRAC) is the following scenario involving two parties, Alice and Bob~\cite{Nayak99}:
\begin{itemize}
  \item Alice receives an $N$-bit string $x$ and encodes it as a quantum state $\rho_x$.
  \item Bob receives $\rho_x$ from Alice and is asked to recover the $i$-th bit of $x$, for some $i \in \{1, \dotsc, N\}$, by measuring the state.
  \item They win if Bob's output agrees with $x_i$ and lose otherwise.
\end{itemize}
A variation of this scenario allows Alice and Bob to use \emph{shared randomness} in their encoding and decoding operations \cite{ALMO08} (note that shared randomness \textit{per se} does not allow them to communicate).

We are interested in bounding the average bias $\epsilon = p_\text{win} - 1/2$ of a quantum random access code with shared randomness, where $p_\text{win}$ is the winning probability averaged over $x \inrand \bit^N$ and $i \inrand \{1, \dotsc, N\}$.

\begin{lemma}\label{lem:QRAC}
The average bias of a quantum random access code with shared randomness that encodes $N$ bits into a $d$-dimensional quantum state is $O(\sqrt{N^{-1} \log d})$.
In particular, if $N = 2^n$ and $d = 2^{\poly(n)}$ the bias is $O(2^{-n/2}\poly(n))$.
\end{lemma}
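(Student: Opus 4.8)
The plan is to prove this by a Holevo-information argument, which is essentially Nayak's lower bound for quantum random access codes~\cite{Nayak99} combined with a convexity step. First I would eliminate the shared randomness: for a QRAC with shared randomness, the averaged winning probability is the expectation, over the shared random string $r$, of the winning probability of the deterministic QRAC obtained by fixing $r$ (the encoding dimension $d$ being unchanged). Hence the average bias is an average of biases of deterministic $d$-dimensional QRACs, and it suffices to prove the bound for a deterministic one: an encoding $x \mapsto \rho_x$ into a $d$-dimensional register together with, for each $i \in \{1,\dots,N\}$, a two-outcome measurement $\{M_i^0, M_i^1\}$ used to guess $x_i$ from $\rho_x$.

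Next I would set up the information quantities. Let $X$ be uniform on $\bit^N$ and let $B$ be a register holding $\rho_X$; write $p_i$ for the probability, averaged over $X$, that measuring $\rho_X$ with $\{M_i^0,M_i^1\}$ returns $X_i$, so $p_\mathrm{win} = \frac1N\sum_{i=1}^N p_i$. Three ingredients are then combined. \emph{(i) Holevo:} $I(X:B) \le S(\rho_B) \le \log d$, since $\rho_B$ is a density operator on a $d$-dimensional space (and $I(X:B)$ for a classical--quantum state equals the Holevo quantity). \emph{(ii) Chain rule and monotonicity of conditioning:} $S(X|B) = \sum_{i} S(X_i \mid X_{<i} B) \le \sum_i S(X_i \mid B)$, the last inequality being strong subadditivity. \emph{(iii) Fano plus data processing:} letting $Y_i$ be the outcome of the $i$-th measurement, $S(X_i \mid B) \le H(X_i \mid Y_i) \le H_b(p_i)$, because $X_i$ is a uniform bit predicted by $Y_i$ with error $1 - p_i$ and Fano's inequality for a binary alphabet has no residual term.

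Putting these together and using concavity of the binary entropy $H_b$, one gets $N - \log d \le N - I(X:B) = S(X|B) \le \sum_{i} H_b(p_i) \le N\, H_b(p_\mathrm{win})$, so $H_b(p_\mathrm{win}) \ge 1 - (\log d)/N$. Since $1 - H_b(1/2 + \epsilon) \ge \frac{2}{\ln 2}\,\epsilon^2$, writing $p_\mathrm{win} = 1/2 + \epsilon$ yields $\epsilon^2 \le \frac{\ln 2}{2}\cdot\frac{\log d}{N}$, i.e.\ $\epsilon = O(\sqrt{N^{-1}\log d})$. Specializing to $N = 2^n$ and $d = 2^{\poly(n)}$ then gives bias $O(\sqrt{2^{-n}\poly(n)}) = O(2^{-n/2}\poly(n))$, as claimed.

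I do not expect a serious obstacle, since every step is a standard quantum Shannon-theory fact; the care required is mostly bookkeeping. The two places to be cautious are: making sure the data-processing and ``conditioning cannot increase entropy'' inequalities are applied in the right direction --- the decoder sees only $B$ and $i$, never $X_{<i}$, and the conditioning on $X_{<i}$ enters purely through the chain rule and is immediately discarded --- and invoking concavity of $H_b$ to convert the per-index bound $\sum_i H_b(p_i)$ into a statement about the single averaged quantity $p_\mathrm{win}$, which is exactly what the lemma controls (rather than each individual bias $\epsilon_{x,i}$).
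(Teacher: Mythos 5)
Your proof is correct, but it takes a genuinely different route from the paper. You run the Nayak-style entropy argument: derandomize by fixing the shared string, then chain Holevo ($I(X:B)\le \log d$), the entropy chain rule with strong subadditivity ($S(X|B)\le\sum_i S(X_i|B)$), data processing plus binary Fano ($S(X_i|B)\le H_b(p_i)$), and concavity of $H_b$ to get $H_b(p_{\mathrm{win}})\ge 1-(\log d)/N$, which the quadratic lower bound $1-H_b(1/2+\epsilon)\ge \tfrac{2}{\ln 2}\epsilon^2$ converts into $\epsilon=O(\sqrt{N^{-1}\log d})$. The paper instead writes the average bias directly as $\tfrac{1}{2N}\E_\lambda\E_x \tr\bigl(\rho_x^\lambda \sum_i(-1)^{x_i}M_i^\lambda\bigr)$, bounds the trace against the operator norm of the random signed sum $\sum_i(-1)^{x_i}M_i^\lambda$, and invokes the noncommutative Khintchine inequality to bound that norm by $c\sqrt{N\log d}$ in expectation. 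Both yield the same $O(\sqrt{N^{-1}\log d})$ asymptotics; your route uses only textbook quantum Shannon-theory inequalities and makes the provenance of each loss explicit, while the paper's route is shorter once Khintchine is granted, handles the shared randomness by simply carrying $\E_\lambda$ through rather than derandomizing, and avoids any case analysis about the sign of the bias (the operator-norm bound controls $|\epsilon|$ directly). The only points in your write-up that deserve a sentence of care in a full version are the ones you already flag: the direction of the data-processing step (the decoder sees only $B$ and $i$, and the conditioning on $X_{<i}$ is introduced by the chain rule and immediately discarded via strong subadditivity) and the fact that concavity is what lets you pass from the per-index quantities $p_i$ to the single averaged $p_{\mathrm{win}}$ that the lemma actually bounds.
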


\begin{proof}
A quantum random access code with shared randomness that encodes $N$ bits into a $d$-dimensional quantum state is specified by the following:
\begin{itemize}
  \item a shared random variable $\lambda$,
  \item for each $x \in \bit^N$, a $d$-dimensional quantum state $\rho^\lambda_x$ encoding $x$,
  \item for each $i \in \{0, \dotsc, N\}$, an observable $M^\lambda_i$ for recovering the $i$-th bit.
\end{itemize}
Formally, $\rho^\lambda_x$ and $M^\lambda_i$ are $d \times d$ Hermitian matrices such that $\rho^\lambda_x \geq 0$, $\tr \rho^\lambda_x = 1$, and $\norm{M^\lambda_i} \leq 1$ where $\norm{M^\lambda_i}$ denotes the operator norm of $M^\lambda_i$. Note that both $\rho^\lambda_x$ and $M^\lambda_i$ depend on the shared random variable $\lambda$, meaning that Alice and Bob can coordinate their strategies.

The bias of correctly guessing $x_i$, for a given $x$ and $i$, is 
$
(-1)^{x_i} \tr (\rho^\lambda_x M^\lambda_i) / 2.
$
If the average bias of the code is $\epsilon$ then
$
\E_\lambda
  \E_{x,i}
  (-1)^{x_i} \tr (\rho^\lambda_x M^\lambda_i)
  \geq 2 \epsilon.
$
We can rearrange this expression and upper bound each term using its operator norm, and then apply the noncommutative Khintchine inequality \cite{Tomczak1974}:
\begin{align*}
  \E_\lambda
  \E_x
  \frac{1}{N}
  \tr \Bigl( \rho^\lambda_x \sum_{i=1}^N (-1)^{x_i} M^\lambda_i \Bigr)
  & \leq
  \E_\lambda
  \E_x
  \frac{1}{N}
  \norm{ \sum_{i=1}^N (-1)^{x_i} M^\lambda_i } \\
  & \leq
  \E_\lambda
  \frac{1}{N}
  c \sqrt{N \log d}
    =
  c \sqrt{\frac{\log d}{N}},
\end{align*}
for some constant $c$. In other words,
\begin{equation*}
  \epsilon \leq \frac{c}{2} \sqrt{\frac{\log d}{N}}.
\end{equation*}
In the particular case we are interested in, $d = 2^{\poly(n)}$ and $N = 2^n$ so
\begin{equation*}
  \epsilon \leq \frac{c}{2} \sqrt{\frac{\poly(n)}{2^n}},
\end{equation*}
completing the proof.\qed
\end{proof}

\subsubsection{Equivalence of \QCCA models.}\label{sec:cca1-equiv}


Recall that the \INDQCCA notion is based on the security game \IndGame defined in \expref{Definition}{def:ind_qcca}. In the alternative security game \IndGameR (see \expref{Definition}{def:ind_qcca_r}), the adversary provides only one plaintext $m$ and must decide if the challenge is an encryption of $m$ or an encryption of a random string. In this section, we prove the following:
\begin{proposition}\label{prop:cca1-equiv}
An encryption scheme $\Pi$ is \INDQCCA if and only if for every \QPT $\algo A$,
$$
\Pr[\algo A \text{ wins }\IndGameR(\Pi, \algo A, n)] \leq 1/2 + \negl(n)\,.
$$
\end{proposition}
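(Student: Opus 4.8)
The plan is to prove the two implications separately, using the standard reduction techniques for passing between ``left-or-right'' and ``real-or-random'' indistinguishability, taking care that the oracle-access pattern of the \QCCA model (decryption only before the challenge, encryption throughout) is relayed faithfully by every reduction.

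First I would handle the ``only if'' direction: if $\Pi$ is \INDQCCA then no \QPT wins \IndGameR with non-negligible advantage. Given a \QPT $\algo A'$ for \IndGameR, I construct a \QPT $\algo A$ for \IndGame that runs $\algo A'$, relaying its pre-challenge $\Enc_k$ and $\Dec_k$ queries; when $\algo A'$ outputs a single message $m$, the adversary $\algo A$ samples a uniformly random message $x$ (from the same fixed-length message space) and outputs the pair $(m_0,m_1):=(m,x)$. In the challenge phase $\algo A$ receives $\Enc_k(m_b)$, which equals $\Enc_k(m)$ if $b=0$ and $\Enc_k(x)$ for uniformly random $x$ if $b=1$, i.e.\ exactly the challenge distribution of \IndGameR; so $\algo A$ forwards it to $\algo A'$, relays the remaining $\Enc_k$ queries, and outputs $\algo A'$'s bit $b'$. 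The view of $\algo A'$ inside this simulation is identically distributed to its view in a genuine execution of \IndGameR (in particular the random message is uniform in both), so $\Pr[\algo A \text{ wins }\IndGame]=\Pr[\algo A' \text{ wins }\IndGameR]$, and the claim follows.

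For the ``if'' direction I would use a two-step hybrid. Fix a \QPT $\algo A$ for \IndGame outputting $(m_0,m_1)$ in the pre-challenge phase. Writing the advantage of $\algo A$ as $\tfrac12\lvert\Pr[b'=1\mid b=0]-\Pr[b'=1\mid b=1]\rvert$, where the conditioning records whether the challenge handed to $\algo A$ is $\Enc_k(m_0)$ or $\Enc_k(m_1)$, I insert the intermediate quantity $q:=\Pr[b'=1\mid \text{challenge}=\Enc_k(x)]$ for a uniformly random message $x$ and apply the triangle inequality, bounding the advantage by $\tfrac12(\lvert\Pr[b'=1\mid b=0]-q\rvert+\lvert q-\Pr[b'=1\mid b=1]\rvert)$. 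Each of these two terms is (twice) the advantage of an explicit \IndGameR adversary: $\algo A'_i$ (for $i\in\{0,1\}$) runs $\algo A$, relays all pre-challenge $\Enc_k,\Dec_k$ queries, outputs $m_i$ as its single plaintext, relays the challenge ciphertext and the challenge-phase $\Enc_k$ queries, and outputs $\algo A$'s bit; a short computation gives $\Pr[\algo A'_i \text{ wins }\IndGameR]-\tfrac12=\pm\tfrac12\bigl(\Pr[b'=1\mid b=i]-q\bigr)$. Since \IndGameR is assumed secure, both terms are $\negl(n)$, hence so is the advantage of $\algo A$; that is, $\Pi$ is \INDQCCA.

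I do not expect a genuine obstacle: the argument is a routine relativized hybrid, and the only points requiring care are (i) that the constructed reductions are themselves \QPT and preserve the exact oracle-access structure of the \QCCA model, which holds because only the game's outer wrapper is altered — the key, the encryption scheme, and the two oracles are untouched; and (ii) that the auxiliary random message $x$ is drawn from the same (fixed-length) message space in the reduction as in \IndGameR, so that the hybrid distributions line up exactly and no additional error term appears.
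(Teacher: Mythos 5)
Your proposal is correct and follows essentially the same route as the paper's proof: the ``only if'' direction pads the single message with a uniformly random second plaintext, and the ``if'' direction inserts the random-message hybrid and applies the triangle inequality to split the \IndGame advantage into two \IndGameR advantages. The only cosmetic difference is that the paper phrases the hybrid step in terms of the trace distance of the post-challenge algorithm's outputs rather than conditional acceptance probabilities, but the two formulations are equivalent.
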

\begin{proof}

Fix a scheme $\Pi$. For one direction, suppose $\Pi$ is \INDQCCA and let $\algo A$ be an adversary against \IndGameR. Define an adversary $\algo A_0$ against \IndGame as follows: (i.) run $\algo A$ until it outputs a challenge plaintext $m$, (ii.) sample random $r$ and output $(m, r)$, (iii.) run the rest of $\algo A$ and output what it outputs. The output distribution of $\IndGameR(\Pi, \algo A, n)$ is then identical to $\IndGame(\Pi, \algo A_0, n)$, which in turn must be negligibly close to uniform by \INDQCCA security of $\Pi$.

For the other direction, suppose no adversary can win \IndGameR with probability better than $1/2$, and let $\algo B$ be an adversary against \IndGame. Now, define two adversaries $\algo B_0$ and $\algo B_1$ against \IndGameR as follows. The adversary $\algo B_c$ does: (i.) run $\algo B$ until it outputs a challenge $(m_0, m_1)$, (ii.) output $m_c$, (iii.) run the rest of $\algo B$ and output what it outputs. Note that the pre-challenge algorithm is identical for $\algo B$, $\algo B_0$, and $\algo B_1$; define random variables $M_0$, $M_1$ and $R$ given by the two challenges and a uniformly random plaintext, respectively. The post-challenge algorithm is also identical for all three adversaries; call it $\algo C$. The advantage of $\algo B$ over random guessing is then bounded by
\begin{align*}
&\|\algo C(\Enc_k(M_0)) - \algo C(\Enc_k(M_1))\|_1\\
&~~~= \|\algo C(\Enc_k(M_0)) - \algo C(\Enc_k(M_1)) - \algo C(\Enc_k(R)) + \algo C(\Enc_k(R)) \|_1\\
&~~~\leq \|\algo C(\Enc_k(M_0)) - \algo C(\Enc_k(R))\|_1 + \|\algo C(\Enc_k(M_1)) - \algo C(\Enc_k(R)) \|_1\\
&~~~\leq \negl(n)\,,
\end{align*}
where the last inequality follows from our initial assumption, applied to both $\algo B_0$ and $\algo B_1$. It follows that $\Pi$ is \INDQCCA.\qed
\end{proof}

\end{document}

\IncMargin{1em}
\begin{algorithm}[H]
\SetKwData{Left}{left}\SetKwData{This}{this}\SetKwData{Up}{up}
\SetKwInOut{Input}{input}\SetKwInOut{Output}{output}
\Input{Quantum oracle $U_{\Round} : \ket{\vec x} \ket{b} \mapsto \ket{\vec x} \ket{b \oplus \Round_{\vec{k},q}(\vec x)}$ for a linear rounding function $\Round_{\vec{k},q}$ with modulus $q$ and an unknown key $\vec k \in \Z_q^{n}$ with at least one unit modulo $q$.
}
\Output{String $\tilde{\vec{k}} \in \Z_q^n$ such that $\tilde{\vec{k}} = \vec{k}$ with probability at least $4/\pi^2-o(1)$.}
\BlankLine
\BlankLine
\begin{enumerate}
\item Prepare a uniform superposition and append $(\ket{0}-\ket{1})/{\sqrt{2}}$, resulting in
$$\frac{1}{\sqrt{q^{n}}}\sum_{\vec x \in \Z_q^{n}} \ket{\vec x}\otimes \textstyle \frac{\ket{0}-\ket{1}}{\sqrt{2}}. \phantom{--------}$$
\item Query the oracle $U_{\Round}$ for $\Round_{\vec{k},q}$ to obtain
$$\frac{1}{\sqrt{q^{n}}}\sum_{\vec x \in \Z_q^{n}} (-1)^{\Round_{\vec{k},q}(\vec x)} \ket{\vec x}\otimes \textstyle \frac{\ket{0}-\ket{1}}{\sqrt{2}}. \phantom{---}$$
\item Discard the last register and apply the quantum Fourier transform $\QFT_{\Z_q}^{\otimes n}$.
\item Measure in the computational basis and output the outcome $\ket{\tilde k_1}\hdots \ket{\tilde k_{n}}$.
\end{enumerate}
\caption{Bernstein-Vazirani for binary linear rounding functions}\label{alg:key_rec_rounding}
\end{algorithm}
\DecMargin{1em}